\newcommand{\bit}{\begin{itemize}}
\newcommand{\eit}{\end{itemize}}
\newcommand{\f}{\frac}
\renewcommand{\>}{\right\rangle}
\newcommand{\<}{\left\langle}
\newcommand{\ba}{\begin{align}}
\newcommand{\ea}{\end{align}}
\newcommand{\be}{\begin{equation}}
\newcommand{\ee}{\end{equation}}
\newcommand{\bi}{\begin{itemize}}
\newcommand{\ei}{\end{itemize}}
\newcommand{\lf}{\left(}
\newcommand{\ri}{\right)}
\newcommand{\dd}{\mathrm{d}}
\newcommand{\Tr}{\operatorname{Tr}}
\def\a{\alpha}
\def\b{\beta}
\def\d{\delta}
\def\D{\Delta}
\def\g{\gamma}
\def\s{\sigma}
\def\be{\begin{equation}}
\def\ee{\end{equation}}
\def\bea{\begin{eqnarray}}
\def\eea{\end{eqnarray}}
\def\>{\rangle}
\def\<{\langle}
\newcommand{\bra}[1]{\langle#1|}
\newcommand{\ket}[1]{|#1\rangle}
\newtheorem{theorem}{Theorem}
\newtheorem{corollary}{Corollary}[theorem]
\theoremstyle{definition}
\newtheorem{definition}{Definition}
\theoremstyle{remark}
\newtheorem*{remark}{Remark}
\begin{document}

\title{Quantum criticality of loops with topologically constrained dynamics}

\author{Zhehao Dai}
\affiliation{
   Department of Physics,
   Massachusetts Institute of Technology,
   Cambridge, MA 02139, USA}
\author{Adam Nahum}
\affiliation{
Theoretical Physics, University of Oxford, Parks Road, Oxford OX1 3PU, United Kingdom}
\affiliation{
   Department of Physics,
   Massachusetts Institute of Technology,
   Cambridge, MA 02139, USA}

\date{\today}

\begin{abstract}
\noindent
Quantum fluctuating loops in 2+1 dimensions give gapless many-body states that are beyond current field theory techniques. Microscopically, these loops can be domain walls between up and down spins, or chains of flipped spins similar to those in the toric code. The key feature of their dynamics is that reconnection of a pair of strands is forbidden. This happens 
at previously-studied multi-critical points between topologically nontrivial phases. 
We show that this topologically constrained dynamics leads to universality classes with unusual scaling properties.
For example, scaling operators at these fixed points are classified by topology, and not only by symmetry.
We introduce the concept of the topological operator classification, 
provide universal scaling forms for correlation functions, and analytical and numerical results for critical exponents. We use an exact correspondence between the imaginary-time dynamics of the 2+1D quantum models 
and a classical Markovian dynamics for 2D classical loop models with a nonlocal Boltzmann weight (for which we also provide scaling results).
We comment on open questions and 
generalizations of the models discussed for both quantum criticality and classical Markov processes.
\end{abstract}

\maketitle

\section{Introduction} \label{Sec: introduction}

Much of quantum criticality can be understood in terms of long-wavelength fluctuations of quantum fields.
Lagrangian field theory is an organizing scheme for a wide range of critical states \cite{sachdev2011quantum,
fisher1974renormalization},
and often gives either exact results or powerful approximation schemes for universal quantities.
But there are some critical states, 
realizable in relatively simple lattice Hamiltonians, 
for which we so far lack any useful continuum description.

One set of examples is a family of 2+1--dimensional quantum-critical models for fluctuating loops introduced in Refs.~\cite{freedman2004class,freedman2005line}.
These models describe multicritical points, reachable by tuning several parameters, in local spin systems.
Microscopically the loops may arise as chains of flipped spin--1/2s (like those in the toric code~\cite{kitaev2003fault}) or alternately as domain walls in an Ising-like order parameter, and the ground state is  simply a superposition of  loop configurations on the 2D plane. The feature of the quantum loop models of Refs.~\cite{freedman2004class,freedman2005line} that makes them gapless is a `topological constraint' on the dynamics of loops. This constraint forbids events in which loops change their connectivity.
The constraint leads to the slow dynamics characteristic of a quantum critical point~\cite{freedman2005line,freedman2008lieb,troyer2008local}: see the cartoon in Fig.~\ref{Fig:localmove}. 

There is no obvious way to encode this dynamical constraint, which is preserved under the renormalization group flow (as we clarify below), in a Lagrangian description. 
These models therefore lie outside the class that we currently know how to describe using field theory, and it is instructive to map out their scaling structure and low-energy excitations.
This is what we do here.

The quantum loop models yield a two-dimensional space of renormalization group (RG) fixed points parameterized by a loop amplitude $d$.
This is a complex number with $|d|\leq \sqrt 2$: the wavefunction amplitude $\Psi(C)$ for a loop configuration $C$ is proportional to $d^\text{(no. loops in $C$)}$.
In general, scaling dimensions change continuously as a function of $d$. 
When $d=1$ the ground state wavefunction coincides with that of the toric code, or the Ising paramagnet, depending on whether we realize the loops as strings of flipped spins or as domain walls (for now we neglect ground state degeneracies). However the {dynamics} and low energy excitations remain nontrivial because of the constraint: even for $d=1$ the model is gapless.

Surprisingly, we find that many universal properties follow from the topology and geometry of the fluctuating loops.
For example we argue that the topologically constrained dynamics implies a topological classification of local operators, generalizing the classification of operators by symmetry that we have at more familiar fixed points.
We provide scaling forms for correlation functions (resolving an apparent paradox at $d=1$, where all equal-time correlators are trivial) as well as analytical results for correlation functions and critical exponents for arbitrary $d$. We  check these using Monte Carlo simulations.

The quantum loop models have a second special property, distinct from the dynamical constraint,
which is a type of quantum-classical correspondence. Thanks to this correspondence we can obtain analytical results despite the lack of a 2+1D field theory.
 
Ref.~\cite{freedman2005line} showed that ground-state expectation values of operators that are diagonal in the $\sigma^z$ basis map to expectation values in a 2D classical lattice model for fluctuating loops. 
The feature of a mapping to a classical model in the same number of spatial dimensions \cite{henley1997relaxation,henley2004classical,ardonne2004topological,castelnovo2005quantum,isakov2011dynamics} is shared with other models, most famously the Rokhsar-Kivelson model \cite{rokhsar1988superconductivity,moessner2001short,henley2004classical,fradkin2004bipartite} 
for quantum dimers. 
Fendley has also constructed quantum loop models that relate to classical models in a different way, imposing non-orthogonal inner products between configurations in order to obtain nonabelian topological states \cite{fendley2008topological,fendley2013fibonacci,fendley2007quantum,fendley2005realizing}. 

There are two main differences between standard Rokhsar-Kivelson-like models and the models we study here. One is that the 2D classical model has a \textit{nonlocal} Boltzmann weight. But the key difference is the constrained nature of the dynamics, which leads to a correspondence with a topologically constrained Markovian dynamics for classical loops.

We construct a dictionary between quantum and classical observables, and find that the correlation functions of \textit{non}-diagonal operators are nontrivial as a result of the nonlocality of the classical model. 
We map correlation functions of local quantum operators to \textit{non-local}, `geometrical' correlation functions in the classical model. The latter are well understood  
\cite{nienhuis1987coulomb,duplantier1987exact,jacobsen2009conformal}, so we obtain numerous exact critical exponents in the quantum models.

We also generalize the quantum-classical correspondence to dynamical correlation functions in the loop model, using the ideas of Ref.~\cite{henley1997relaxation,henley2004classical,castelnovo2005quantum} on Rokhsar-Kivelson-like models (in the process we also clarify the relation between `frustration free' and `Rokhsar Kivelson' Hamiltonians). This dynamical correspondence allows us to investigate dynamical correlation functions.

The quantum loop models are not Lorentz-invariant, and their dynamical exponent $z$ is not 1. Previously it was believed that $z$ was equal to 2 for these models, but we rule this out by strengthening a previous lower bound $z\geq 2$, obtained via a variational argument in Ref.~\cite{freedman2005line,freedman2008lieb}, to $z\geq 4-d_f$ where $d_f<2$ is the fractal dimension of the loops in the classical ensemble. This bound depends on the loop weight $d$ through $d_f$.

However, we point out that the dynamical exponent $z$ is constant along any line of RG fixed points on general grounds (under mild assumptions), i.e. `superuniversal'.
In conjunction with the previous bound, this result gives $z\geq 2.6\dot{6}$.
We have not succeeded in calculating the exact value of $z$ analytically. However our numerical estimates are consistent with $z=3$ for all of the models we simulate, in accord with the `superuniversality' of this exponent for all the models in this two-dimensional space of universality classes.

The loop models are examples of frustration-free Hamiltonians.
There are numerous critical frustration-free models in the literature (many of them, like the Rokhsar-Kivelson model, studied before the name `frustration-free' was common). 
We provide some general results for the scaling structure of these frustration-free critical points, defining what we call `hidden' scaling operators and giving relations between scaling dimensions.

We have already mentioned the point in the quantum loop model with loop amplitude $d=1$, which has some special properties. 
Another interesting special point is $d=\sqrt 2$ \cite{freedman2004class,freedman2005line}. Here  a perturbation exists (the `Jones-Wenzl projector') that leads to a new universality class, but with the same ground state as the original model.
We show that whenever a frustration-free model possesses such a `ground-state--preserving' relevant perturbation, the dynamical exponent at the new fixed point can be bounded in terms of exponents of the original model: this gives $z\geq 2.5$ for the model with the Jones-Wenzl projector. We also characterize its operator spectrum.

Models with topologically constrained dynamics appear to be critical for reasons very different to the ones that we are used to. Therefore they may have useful lessons to teach us. 
We give some other constructions of such models (which may be interesting to examine further) and we briefly discuss obstacles to finding field theory descriptions of them.

\tableofcontents

\section{Overview}
\label{sec:overview}

\subsection{General features}
\label{sec:generalfeatures}

\begin{figure}[b]
\begin{center}
\includegraphics[width=0.5\textwidth]{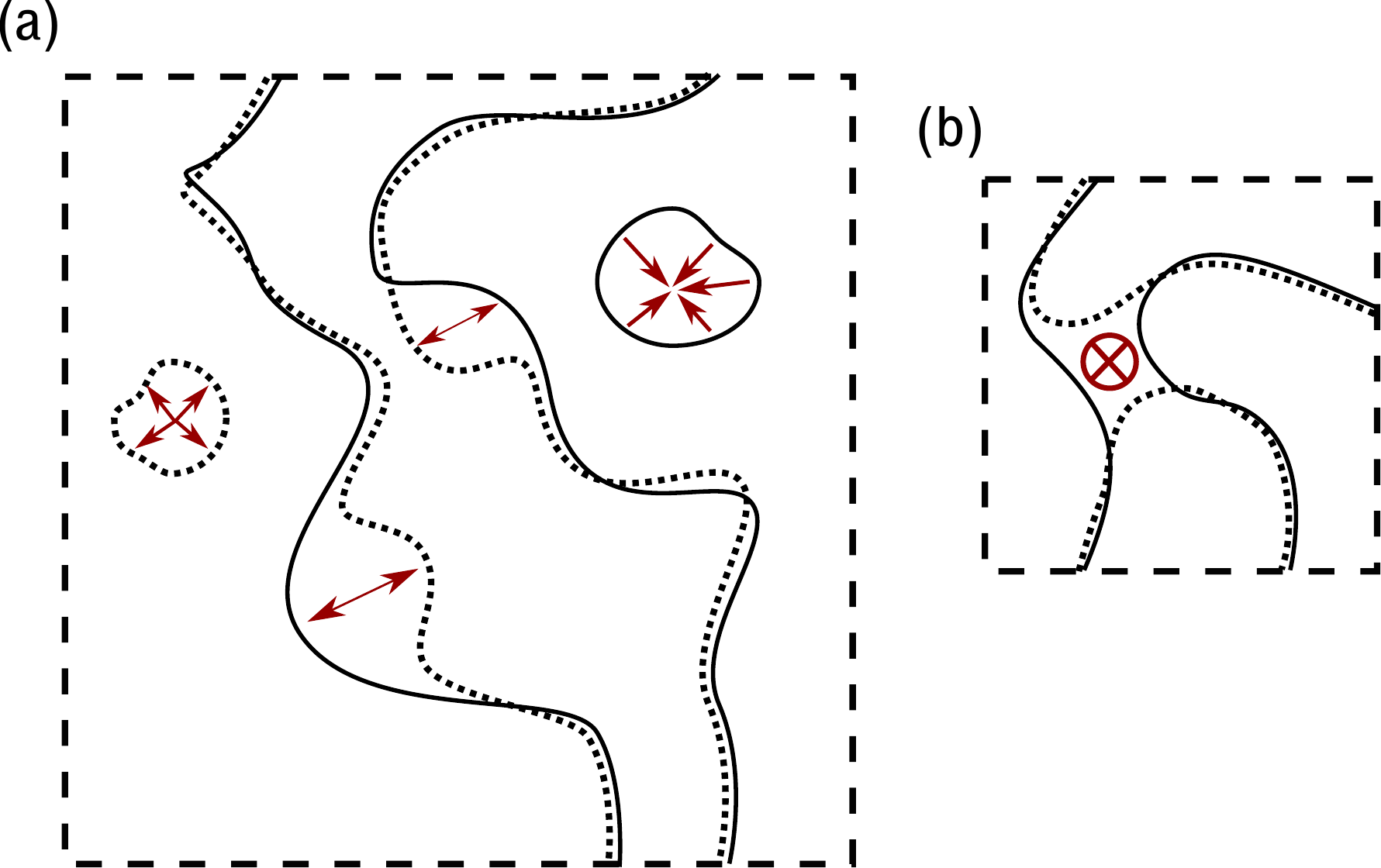}
\caption{(a) Dynamics of loops in continuum space (the figure shows a subregion of the 2D space). Solid/Dotted lines represent possible loop configurations before/after  evolution for short time. (b) A disallowed move. Reconnections of all kinds are forbidden in the time evolution.}
\label{Fig:localmove}
\end{center}
\end{figure}

Refs.~\cite{freedman2004class,freedman2005line} introduced a family of gapless models whose low-energy Hilbert space is spanned by configurations of loops in the plane (Fig.~\ref{Fig:localmove}). 
In the lattice Hamiltonian of Refs.~\cite{freedman2005line} these loops are realized as chains of flipped spins, 
much like the $\mathbb{Z}_2$ flux lines in the toric code,
but by a standard duality we could also think of them as domain walls in a 2D Ising model.
This would change the ground state degeneracy which is a feature of the models (see Sec.~\ref{sec:gsd}), for example, but would not change the key features of the critical scaling.

The ground states of these models are superpositions of loop configurations: schematically, 
\be\label{eq:introducegroundstate}
\ket{\Psi} \propto \sum_{C} d^{|C|} \ket{C},
\ee
where $C$ labels a loop configuration (a set of loops in the plane), $\ket{C}$ is the corresponding basis state, $|C|$ is the number of distinct loops in $C$, and the numerical constant $d \in \mathbb{C}$ is a parameter in the model. When $|d|^2 < 2$ the ground state is scale-invariant in a sense we describe below, and contains large loops with an appreciable amplitude. We will review the models' microscopic Hamiltonians in the next subsection, but first we describe their key features in a continuum language.

It is useful to think of these critical states in terms of their dynamics in the loop basis, i.e. in terms of the  Feynman histories that contribute to the path integral in (for simplicity) imaginary time: see Fig.~\ref{Fig:worldsurfaces1}.
The key feature  distinguishing these dynamics from the gapped dynamics of flux lines in $\mathbb{Z}_2$ gauge theory, or of domain walls in the paramagnetic phase of the Ising model, is that `reconnection' events are not allowed.

\begin{figure}[t]
\begin{center}
\includegraphics[width=0.49\textwidth]{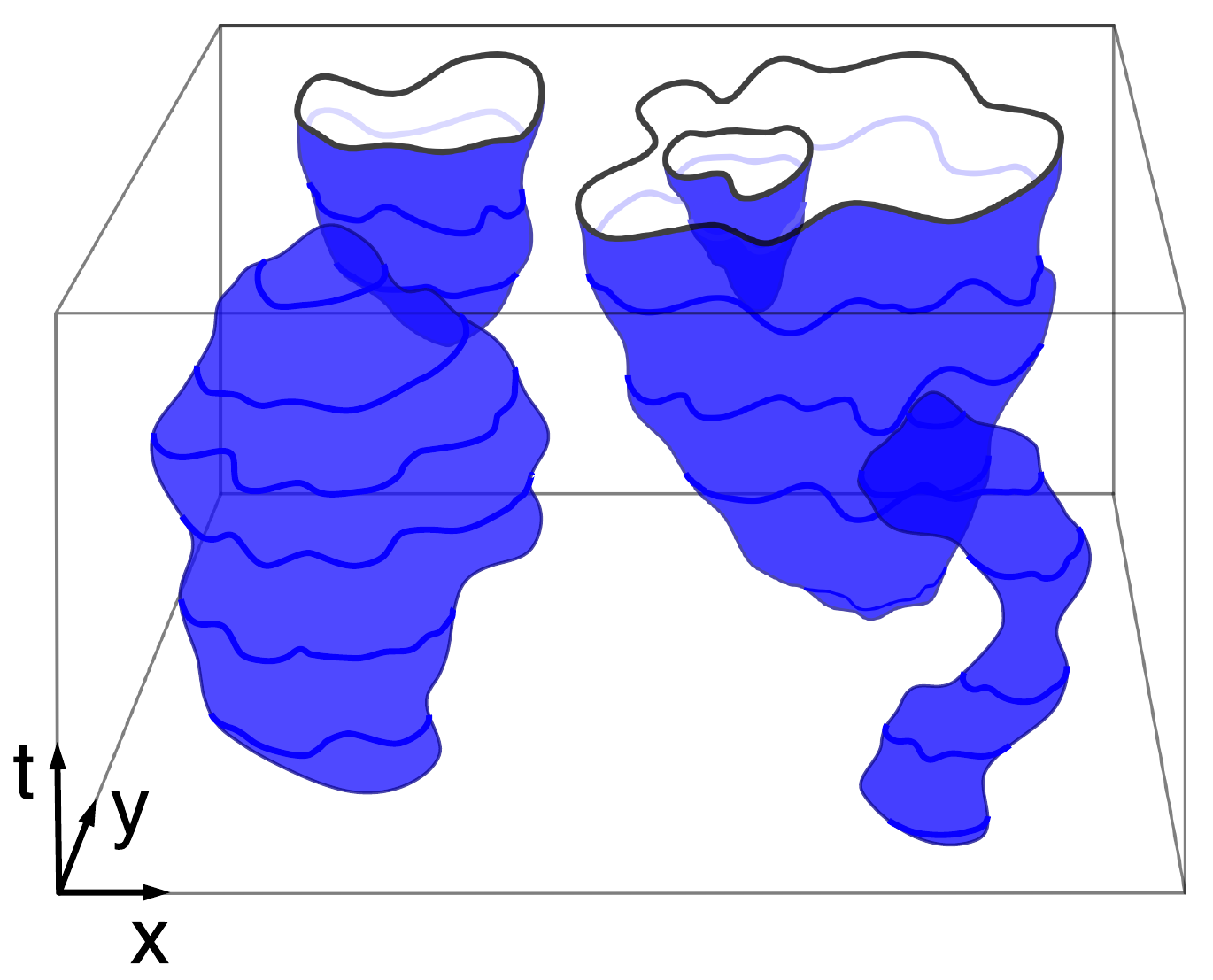}
\caption{Feynman histories of loops.}
\label{Fig:worldsurfaces1}
\end{center}
\end{figure}

Fig.\ref{Fig:localmove} illustrates this at the cartoon level.  
Note firstly that loops do not intersect in these models, so that there is no ambiguity in declaring whether two points are connected by a loop or not.
The quantum dynamics allows two types of events. 
Firstly, loops can fluctuate locally without intersecting or changing their topology. Secondly, loops of microscopic size can be `born' (appear from the vacuum) and `die'. However, processes like the one in Fig.~\ref{Fig:localmove}(b), which changes the connectivity of strands, are forbidden.
The Hamiltonian is unable to break and reconnect loops.

We will refer to this key feature as the dynamical `topological constraint'. In the Hamiltonian of Ref.~\cite{freedman2005line,freedman2008lieb} it is imposed exactly in the ultraviolet: the off-diagonal Hamiltonian matrix elements for reconnections are zero. We will argue in Sec.~\ref{sec:topologicaloperatorclassification} and Sec.~\ref{Sec: operators and correlation functions} that in other models the dynamical topological constraint can emerge in the infrared even without being imposed in the ultraviolet, so long as a sufficient number of relevant couplings are tuned to zero.

Whenever loops of size much greater than the lattice spacing appear in the ground state (as in Eq.~\ref{eq:introducegroundstate} with $|d|^2\leq 2$)  the dynamical constraint implies  quantum critical dynamics:
i.e. a characteristic timescale that diverges with system size, $\tau\sim L^z$, where $z$ is the dynamical exponent of the quantum critical state,  
and corresponding gapless excitations \cite{freedman2008lieb}. 

For a heuristic picture of these slow dynamics we may imagine following the lifespan of a particular loop throughout a Feynman history of the system (we do this numerically in Sec.~\ref{Sec: Excitations and the dynamical exponent}). 
It makes sense to talk about a particular loop persisting through time because loops do not reconnect. 
(By contrast,  a typical Feynman history of the toric code involves loops continually reconnecting, so there is no unique way to identify a loop at a given time with a particular loop at an earlier time). 
Our loop is born at some time in the history, grows to some maximal linear size of order $R$, and dies at a later time. If $R$ is large, the loop's lifespan is necessarily much longer than microscopic timescales, because the loop can only grow by local fluctuations. This lifetime scales as 
\be
\tau\sim R^z
\ee
where $z$ is the dynamical exponent. This exponent also controls the 
 energy scale for the lowest bulk excitations on scale $R$,\footnote{In a system of linear size $L$ there can be anomalously low-lying states with energy $1/L^{w}$ with $w>z$, depending on boundary conditions; we show this in Sec.~\ref{sec:gsd}.} 
\be
\Delta E \sim R^{-z}.
\ee
We address the value of $z$ in Sec~\ref{Sec: Excitations and the dynamical exponent}, both numerically and analytically, by establishing a mapping to the motion of classical loop under a Markov process.

The dynamical constraint is therefore an unusual mechanism for quantum criticality, which has an intuitive explanation in terms of the path integral but which, we emphasize, has no obvious connection to the `usual' mechanism for quantum criticality, namely long-wavelength fluctuations of quantum fields. 

In this paper we will not answer the question of whether a useful field theory  can be found for these states.\footnote{In previous work it was suggested that these models may be described by certain nonabelian gauge theories with dynamical exponent $z=2$~\cite{freedman2005line}. However we prove here that $z > 2$.}  However we will show that they have some very unusual properties that make them quite unlike the critical points that we know how to describe using field theory.

In particular, in Sec.~\ref{sec:topologicaloperatorclassification} and Sec.~\ref{Sec: operators and correlation functions} we introduce the idea of a topological classification of scaling operators at dynamically constrained fixed points. Again this can be understood in terms of the path integral. The Feynman histories contributing to the path integral have no reconnection events. But this can be changed by an insertion of a local operator $\mathcal{O}(x,t)$ that performs a reconnection at a particular spacetime point. The possible reconnection operators have a rich structure, because there are an infinite number of topologically distinct reconnection events that can take place (see Fig.~\ref{Fig: topological type partial order} for examples). 
Some of this topological information about the operator is preserved under the renormalization group (RG), giving us a topological operator classification.
The crucial point is that  the Hamiltonian itself does not reconnect loops. Therefore, loosely speaking, coarse-graining a local operator \textit{cannot}  transform it into an operator that performs more complex reconnections.
We describe this in Sec.~\ref{sec:topologicaloperatorclassification}.

\subsection{Review of lattice Hamiltonians}
\label{sec:latticeH}

Let us now move from the continuum to concrete lattice models for spins on the links of the honeycomb lattice. This subsection reviews the lattice constructions of Refs.~\cite{freedman2004class,freedman2005line,troyer2008local}. We can think of these models as modifications of the toric code so as to forbid reconnection of loops. This modification also allows the nontrivial loop amplitude $d$ to be realized.
Like the toric code, these models involve 3-spin interactions around a vertex of the honeycomb lattice and 6-spin interactions around a hexagon.

We first recall the toric code \cite{kitaev2003fault}, with spins $\sigma_l$ located on the \textit{links} $l$ of the honeycomb lattice. This has the Hamiltonian
\be
\label{eq:toric}
H= H^\text{vertex} + H^\text{flip},
\ee
with ($v$ is a vertex and $p$ is a hexagonal plaquette; 
$l\in  v$ if  the link $l$ includes the vertex $v$, and $l\in p$ if $l$ is in the plaquette $p$):
\ba
H^\text{vertex} & =  -U
\sum_v
\prod_{l \in v}{\sigma_{l}^{z}},
 &
 H^\text{flip} & =   -K 
\sum_p
 \prod_{l\in p}{\sigma_{l}^x}.
\end{align}
`Strings' are made up of links where $\sigma^z = -1$.
The first term is an energy penalty for ends of strings; any vertex $v$ with odd number of neighboring edges occupied by down spins costs energy $2U$. In the ground state there are only closed strings, i.e. loops.

The second term is the kinetic part of the toric code Hamiltonian, which flips all the spins around a plaquette.
Some of the possible `moves' implemented by this term are shown in Fig.~\ref{Fig: local move on lattice}.
The toric code Hamiltonian is solvable because all the terms commute, and the ground states are equal amplitude superpositions of loop configurations.

\begin{figure}[t]
\begin{center}
\includegraphics[width=3in]{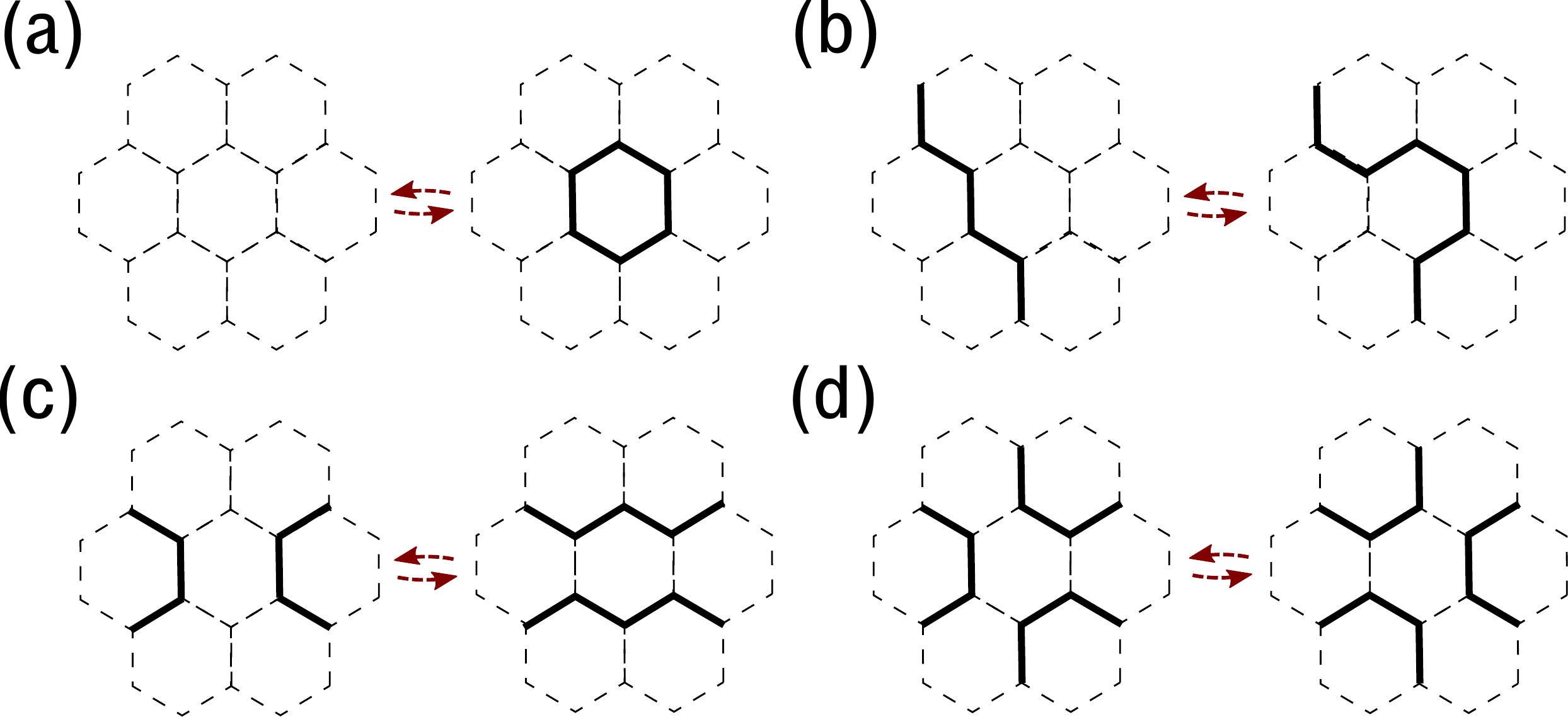}
\caption{Effects of $\prod_{i\in \partial p}{\sigma_{i}^x}$ on loop configurations (this figure follows Ref.~\cite{troyer2008local}). The underlying lattice, shown dashed, has a spin-1/2 degree of freedom on each edge. Down spins on the edges (occupied edges) are shown as solid lines. We consider only closed-loop configurations, but in the illustration, we only show a part of the loop near the plaquette where $\prod_{i\in \partial p}{\sigma_{i}^x}$ acts. (a) $\prod_{i\in \partial p}{\sigma_{i}^x}$ on plaquette $p$ creates a small loop of down spins when the six spins around plaquette $p$ are all up (unoccupied). (b) $\prod_{i\in \partial p}{\sigma_{i}^x}$ moves a loop across the plaquette. (c) $\prod_{i\in \partial p}{\sigma_{i}^x}$ reconnects two loops entering the plaquette. (d) $\prod_{i\in \partial p}{\sigma_{i}^x}$ reconnects three loops entering the plaquette. In the exactly solvable Hamiltonian (Eq.\ref{eq:flippable} and Eq.~\ref{Eq: H gapless loop}), the Hamiltonian is modified to keep only processes (a) and (b) \cite{freedman2004class,freedman2005line,troyer2008local}.}
\label{Fig: local move on lattice}
\end{center}
\end{figure}

The kinetic term $H^\text{flip}$ can be written as a sum over local \textit{moves}: for each move there is a projector \cite{freedman2005line}. A gapless model is obtained simply by dropping the projectors corresponding to the disallowed moves (Fig.~\ref{Fig: local move on lattice}(c-d)) \cite{freedman2005line}.

Let $\ket{a}_p$ be a state of the six spins on the plaquette, and $\ket{\bar a}_p$ the flipped configuration.  Let $\mathcal{P}_p^{(a)}$ be the projector onto the state ${\ket{a}_p - \ket{\bar a}_p}$. Then, dropping a constant, the second term in Eq.~\ref{eq:toric} for a given plaquette is 
\be\label{eq:flip}
H^\text{flip}_p =  K \sum_{a}  \mathcal{P}_p^{(a)}.
\ee
An equal-amplitude superposition of loop configurations is orthogonal to all of these projectors because of the minus sign in the definition of the state ${\ket{a}_p - \ket{\bar a}_p}$. Therefore such a superposition is a ground state.

The gapless models retain from the sum in Eq.~\ref{eq:flip} only the moves that respect the dynamical constraint. (A little thought shows that it is possible to tell whether a move is allowed by looking only at the 6 spins on the hexagon: it is not necessary to examine other spins \cite{freedman2005line}.) This gives a restricted sum, which we denote with a prime, over `flippable' configurations of the hexagon,
\be\label{eq:flippable}
H_p^{\text{flip}'}=  K {\sum_{a}}'  \mathcal{P}_p^{(a)}.
\ee
The complete Hamiltonian is ${H=
H^{\text{vertex}}+
H^{\text{flip}'}}
$.
Ground states of the toric code are still ground states of this model. On the sphere,\footnote{To put  these models on a system of spherical topology we must allow non-hexagonal plaquettes.} both models have the same, unique, groundstate (we neglect ground state degeneracy for other boundary conditions until Sec.~\ref{Sec: Excitations and the dynamical exponent}). However the dynamics generated by Eq.~\ref{eq:flippable} are nontrivial because different plaquette terms no longer commute.

Finally, the amplitude for loop creation and annihilation can be adjusted by modifying the projector which implements the loop birth/death move. Let us write the state with a loop on the hexagon (i.e. all spins down) as $\ket{\circ}$, and the state where the hexagon is empty (all spins up) as $\ket{\phantom{\circ}}$. Then instead of projecting onto ${\ket{\phantom{\circ}}- \ket{\circ}}$ in Eq.~\ref{eq:flippable}, we project onto
\be
\bar{d}\, \ket{\phantom{\circ}}-\ket{\circ}.
\ee
Note that this is orthogonal to ${\ket{\phantom{\circ}}+d\ket{\circ}}$, in which the state with an extra loop has the desired extra factor of $d$.
One can check that the ground states are then of the form given above,
\be\label{eq:psiamplitude}
\ket{\Psi} = \f{1}{Z^{1/2}} \sum_{C} d^{|C|} \ket{C},
\ee
where $C$ must run over a `complete' set of loop configurations: if $C$ is in this set, and $C'$ can be reached from $C$ by allowed moves, so is $C'$. On a manifold of trivial topology (sphere, or disc with appropriate boundary conditions) the ground state is unique, as all loop configurations can be reached from all others by allowed moves. $Z$ above is a normalization constant, which we discuss below.

To be more explicit, let $A_{p}\equiv \prod_{l\in p}{\sigma_{l}^{-}}$ annihilate a small loop around plaquette $p$, and $B_p\equiv \s_1^{+}\s_2^{-}\s_3^{-}\s_4^{-}\s_5^{-}\s_6^{-} + \s_1^{+}\s_2^{+}\s_3^{-}\s_4^{-}\s_5^{-}\s_6^{-} + \cdots$
and its hermitian conjugate $B_p^{\dagger}$ move a loop adjacent to plaquette $p$ across the plaquette:\footnote{We label the spins around plaquette $p$ anticlockwise. 
The terms which move a loop across plaquette $p$  and which contain
 $\s_1^{+}$ are included in $B_p$, and those with  $\s_1^{-}$ in $B_p^{\dagger}$ (this is one way to split up the sum over terms, there are other equivalent rewritings).}
\bea \label{Eq: H gapless loop}
H_p^{\text{flip}'} &=& K_1\sum_{p}\frac12(B_{p}^{\dagger}B_{p}- B_{p} - B_{p}^{\dagger} + B_{p}B_{p}^{\dagger}) \\ \nonumber+
&K_2&\sum_{p}\frac{1}{1+|d|^{2}}(|d|^{2}A_{p}^{\dagger}A_{p}- \bar{d}A_{p} - dA_{p}^{\dagger} + A_{p}A_{p}^{\dagger}).
\eea
Quadratic terms in $A_p$ and $B_p$ are included to make projectors \cite{freedman2005line}.
Note that the coupling $K$ of the individual projectors in Eq.~\ref{eq:flippable} can be separately varied without changing the ground state. In our simulations (Appendix~\ref{Appendix: numerical methods}) we take $K_1/K_2=|d|^2/(1+|d|^2)$, where $K_1$ is the coupling for the projectors that move loops, and $K_2$ is the coupling for the projector that creates/annihilates them.

It is important to note that the  Hamiltonian specified by Eq.~\ref{eq:flippable} is fine-tuned, and that is why it is possible to write ground states explicitly. While it is not a sum of commuting projectors, it is `frustration free', meaning that it is a sum of projectors that can all be simultaneously minimized. Perturbations will generically spoil this property (even if they retain the dynamical constraint exactly). However, only a few such perturbations are expected to be RG--relevant, as we argue in Sec.~\ref{Sec: operators and correlation functions}, so other lattice models, that are not frustration free, can flow to the same RG flxed point. 
(An open question, which we will not discuss here, is whether there are other universality classes that obey the dynamical topological constraint but which cannot be realized in frustration-free models.)

\subsection{Classical mapping}
\label{subsection:classicalmapping}

The quantum loop model is scale-invariant in the IR for the critical range $|d|^2\le 2$. This is most easily seen by a `plasma analogy' in which the modulus square of the quantum wavefunction yields a classical statistical ensemble for fluctuating loops \cite{freedman2005line}. The probability in this ensemble of a loop configuration $C$  is ${P(C)=|d|^{2|C|}/Z}$, and the classical partition function $Z$ (the square of the normalization constant in Eq.~\ref{eq:psiamplitude}) is
\be\label{eq:classicalpartitionfunction}
Z = \sum_C \, |d|^{\,2|C|}.
\ee
This loop ensemble is a well-studied topic in statistical mechanics, sometimes known as the `dense $O(n)$ loop model'  \cite{nienhuis1982exact} (here $n=|d|^2$).
Numerous exact scaling results exist in the literature, which we will review as needed.
In this ensemble, large loops are fractal objects: the length $\ell$ of a  loop, defined as the number of links on it, scales with its linear size $R$ (defined e.g. as the largest distance between two points on the loop) as $\ell\sim R^{d_{f}}$, where the fractal dimension  $d_f$ is a decreasing function of the weight $|d|^2$ \cite{saleur1987exact}. For example, $d_f = 3/2$ for $d=\sqrt{2}$, $d_f = 7/4$ for $d=1$, and $d_f \rightarrow 2$ as $d$ tends to zero (for a review, see Ref. ~\cite{jacobsen2009conformal}).

Equal-time correlation functions of operators that are diagonal in the loop basis $\{ \ket{C} \}$ map straightforwardly to local correlators in the classical loop model. 
In Secs.~\ref{sec:topologicaloperatorclassification} and \ref{Sec: operators and correlation functions}, we shall see that more general correlators allow for richer structures. 

\begin{figure}[t]
\begin{center}
\includegraphics[width=0.3\textwidth]{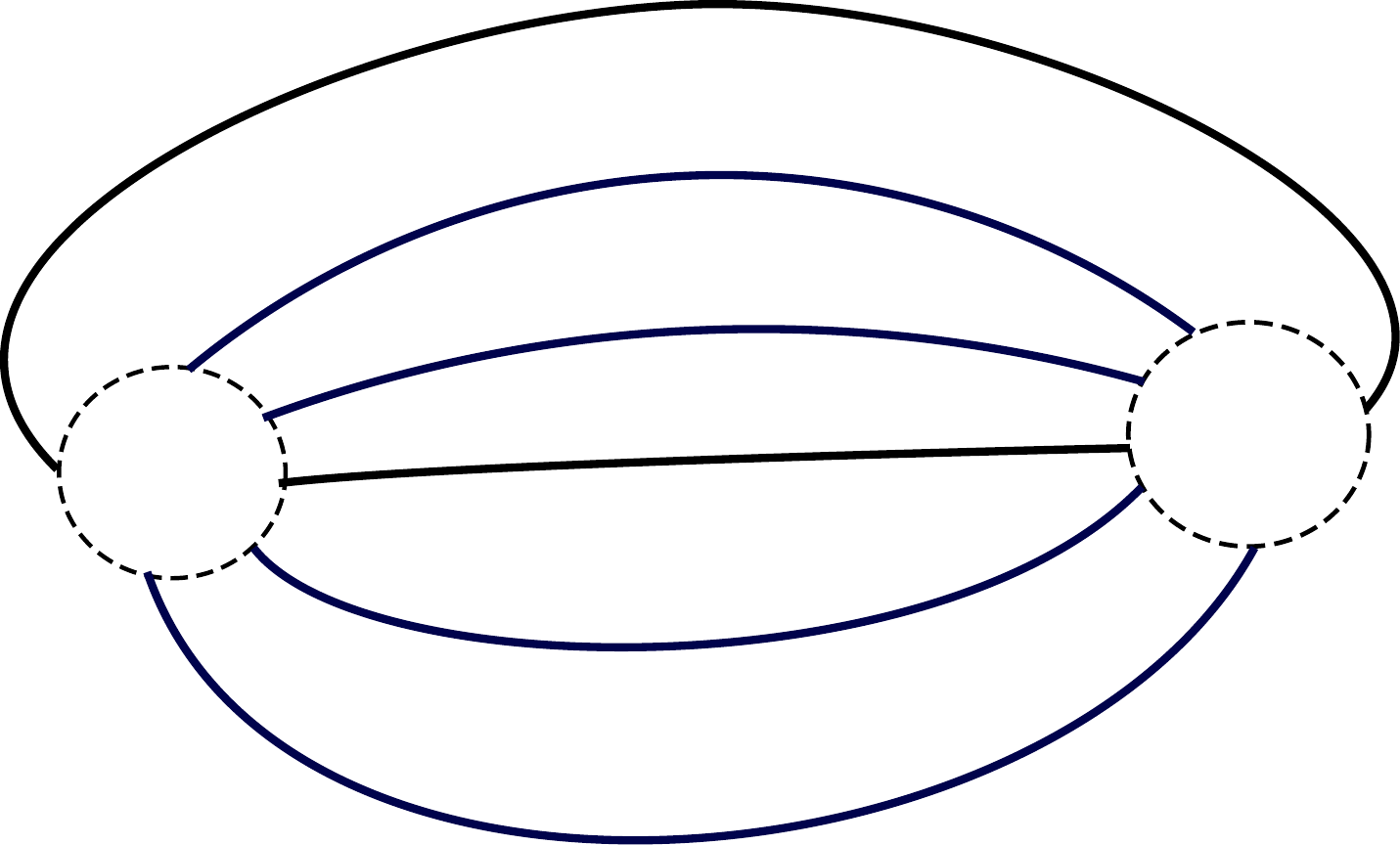}
\caption{Illustration of a 6-strand watermelon correlator. The watermelon correlator is the probability that two microscopic discs are connected by $2k$ strands ($k=3$ in the  figure). In the scaling limit this probability decays as $C_{2k}(\mathbf{r})\sim |\mathbf{r}|^{-2x_{2k}}$ where $x_{2k}$ is given in Eq.~\ref{Eq: x2k}.}
\label{Fig:watermelon}
\end{center}
\end{figure}

First, we show that correlation functions of local but non-diagonal quantum operators map to interesting \textit{non-local} correlation functions in the classical loop model (Sec.~\ref{Sec: operators and correlation functions}). These non-local correlators (known as `watermelon correlators' in the literature on the classical loop model \cite{jacobsen2009conformal}) measure the probabilities that distant points are connected in various ways by large loops (Fig.~\ref{Fig:watermelon}).

Second, we show that temporal correlation functions map to correlation functions in a  classical Markov process for fluctuating loops (Sec.~\ref{Sec: Excitations and the dynamical exponent} and Appendix~\ref{Appendix: Quantum Markovian correspondence}). 
This Markov process is itself an interesting extension of the classical loop models  that could be studied further. Here we give analytical and numerical results for correlation functions and the dynamical exponent.

Our analysis also leads to some general features of `frustration-free' critical points, which we discuss in Sec.~\ref{sec:ffscalingoperators}.

\section{Scaling forms}
\label{sec:scalingforms}

Let us begin by recalling the conventional quantum-critical scaling forms for the equal and non-equal time two-point functions of a local scaling operator $\mathcal{O}$, respectively
\be\label{eq:equaltimescalingform}
\<\mathcal{O}(\mathbf{r},0)\mathcal{O}(0,0)\> = \f{A_\mathcal{O}}{r^{2 x_\mathcal{O}}}
\ee
(we consider an infinite system, and assume that ${\< \mathcal{O}\> =0}$) and
\be\label{eq:generalscalingform}
\<\mathcal{O}(\mathbf{r},t)\mathcal{O}(0,0)\> = \f{1}{r^{2 x_\mathcal{O}}} F_\mathcal{O}\lf \frac{t}{|\mathbf{r}|^z} \ri.
\ee
We would expect similar scaling forms in both real and imaginary time, with different scaling functions $F$, but for simplicity we consider imaginary time throughout this paper. We have suppressed nonuniversal dimensionful constants built from the lattice spacing and the microscopic energy scale (and for simplicity we focus on spin-zero operators for now, so that the dependence on $\mathbf{r}$ is only through $|\mathbf{r}|$).
Two exponents appear here: $z$, the dynamical exponent discussed in the previous section, and the scaling dimension $x_\mathcal{O}$. To have a finite nonzero limit as ${|\mathbf{r}|\rightarrow 0}$ we need $F(u)\sim u^{-2x_\mathcal{O}/z}$, and then
\be\label{eq:twotime}
\<\mathcal{O}(0,t)\mathcal{O}(0,0)\> = \f{B_\mathcal{O}}{t^{2x_{\mathcal{O}}/z}}.
\ee
Finally, if the operator $\mathcal{O}$ is added to the Hamiltonian, then the fact that ${\int \dd^2 \mathbf{r} \, \dd t \, \mathcal{O}(\mathbf{r}, t)}$ scales like ${[\text{length}^{2+z- x_\mathcal{O}}]}$ implies that the RG eigenvalue of this perturbation is:
\be\label{eq:ydef}
y_\mathcal{O} = 2 + z - x_\mathcal{O}.
\ee

At first sight there is a paradox in applying these forms to the quantum loop models. Consider the loop model at the special value $d=1$, where the ground state coincides with that of the toric code (or the Ising paramagnet, depending on the choice of Hilbert space). All equal-time two point functions are strictly zero in this model (this vanishing was emphasised in Ref.~\cite{troyer2008local}).
But at the same time, various operators, most notably the two-loop reconnection operator, are RG relevant in this model (the relevance of two-loop reconnection, which takes us to the toric code phase,  has been checked numerically~\cite{troyer2008local}; we will give exact exponent values below). These operators should therefore have positive RG eigenvalues  $y_\mathcal{O}$ and scaling dimensions ${x_\mathcal{O} < 2+z}$. Why do we not see the corresponding power-law decay in the equal-time correlation function (\ref{eq:equaltimescalingform})? Does this signal a breakdown of renormalization group reasoning?

The resolution of this `paradox' is in fact simple, and there is no need to abandon the scaling forms above.
It is just that, in the model with $d=1$, all of the the scaling functions $F_\mathcal{O}(u)$ vanish as $u\rightarrow 0$.
In Sec.~\ref{Sec: operators and correlation functions} we show that, as a function of $d$,  the amplitudes $A_\mathcal{O}(d)$ vanish at $d=1$, while the scaling dimensions $x_\mathcal{O}(d)$ are finite and continuous there. While these scaling dimensions cannot be extracted from the equal-time two-point function, they can be probed using the temporal correlator (\ref{eq:twotime}) or other more complex correlators.

While $d=1$ is the most extreme case, where \textit{all} equal-time correlation functions are trivial, it turns out that for any $d$ there is a subset of scaling operators whose equal-time correlation functions vanish. We show in Sec.~\ref{sec:ffscalingoperators} below that special `hidden' operators whose equal-time correlators vanish are  a generic feature of frustration-free RG fixed points.

The vanishing of amplitudes $A_\mathcal{O}$ is possible because of the lack of rotational symmetry (let alone conformal invariance) in Euclidean spacetime. In CFT it is common to use the normalization convention $A_\mathcal{O}=1$, but that is not possible here. Another familiar feature of conformal field theory is the orthogonality of two-point functions: covariance under special conformal transformations implies that the two-point function of $\mathcal{O}$ and $\mathcal{O'}$ vanishes if $x_\mathcal{O}\neq x_{\mathcal{O}'}$~\cite{francesco2012conformal}.
Here we cannot assume that in general.
However, for \textit{equal time} correlation functions, we can use the conformal invariance of the classical ensemble described in Sec.~\ref{subsection:classicalmapping} to obtain the same result.

The next section (Sec.~\ref{sec:topologicaloperatorclassification}) describes properties of the operator spectrum that arise from the dynamical topological constraint.
Then Sec.~\ref{sec:ffscalingoperators} discusses features that arise from the frustration-free property and which therefore apply to any critical frustration-free Hamiltonian. 
A key question for the stability of the loop model critical points is the number of relevant (or marginal) scaling operators; we discuss this in Sec.~\ref{subsec: lowlying operator}.

\section{Topological operator classification}
\label{sec:topologicaloperatorclassification}

It is convenient to think of the topological classification of operators in terms of Feynman histories, as mentioned in 
Sec.~\ref{sec:generalfeatures}. 
The dynamical constraint means that these histories contain no reconnection events except those that are put there by operator insertions.
What does this mean for the classification of scaling operators?

To answer this, it is useful to have in mind a renormalization group (coarse-graining) transformation which acts on the spacetime Feynman histories.
Roughly speaking, this transformation eliminates information on lengthscales shorter than $b$ and on timescales shorter than $b^z$ for some dimensionless rescaling factor $b$ (we set microscopic dimensionful constants to 1).
We can think of it as smoothing out the world-surfaces of the loops (Fig.~\ref{Fig:worldsurfaces1}) on these scales.
Small world-surfaces can also disappear below the new UV cutoff, i.e. be eliminated. 
However, our coarse-graining transformation must faithfully preserve the connectivity of the loops of sizes greater than $b$,  because this must remain consistent with their past and future dynamics.\footnote{For example, it is important to distinguish between a configuration in which  two large loops of size $\gg b$ closely approach each other near a point ${\mathbf r}$, and a configuration in which a single large loop closely approaches itself near point ${\mathbf r}$. In the latter scenario, the large loop must have been created in the far past as a single small loop and grown to the current configuration, while in the first scenario, the two loops were created and expanded separately before meeting at the point ${\mathbf r}$.}

\begin{figure}[t]
\begin{center}
\includegraphics[width=0.35\textwidth]{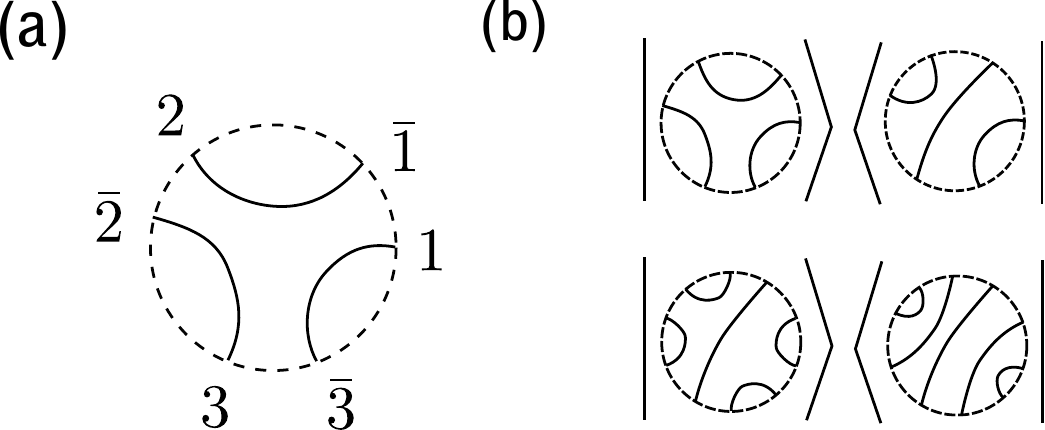}
\caption{Topological types. Left: The ends of the strands are labeled by $1,\bar{1},2,\bar{2},3,\bar{3}$ anticlockwise. This 3-strand configuration may be labeled as $(1\bar{3})(2\bar{1})(3\bar{2})$, or equivalently as the permutation $(123)\rightarrow (312)$. Right, Top: this is a \textit{two}-loop reconnection operator: the lower-right strand is not reconnected, and can be pushed out of the disc. Right, Bottom: a 5-loop reconnection operator. The spectator strand in the middle \textit{cannot} be pushed out.}
\label{Fig: label spectator}
\end{center}
\end{figure}

Let us now consider the action of operators.
Microscopically, local operators can be classified by the way in which they reconnect strands. 

Let the operator act within a spatial disc $D$. Inside the disc $D$, there can be small loops completely contained in $D$, and other loops passing through $D$, with segments inside which connect to the boundary of $D$.
We may first classify the possible \textit{states} within the disc by the number of points where loops cross the boundary of the disc --- which we denote $2k$ since it is always even --- and by the way these points are connected by segments inside. We will call these points on the boundary `endpoints' (of course all loops are closed when we consider the full configuration).
In order to classify states on $D$, we neglect closed loops contained entirely within $D$.
We label the endpoints of internal segments by $1, \bar{1}, 2, \bar{2}, 
\ldots k, \bar{k}$, starting at an arbitrary point and proceeding anticlockwise; then the connection corresponds to a pairing of the points $1, 2, \ldots, k$ with the strands $\bar{1}, \bar 2, \ldots, \bar k$.\footnote{A little thought shows that under the current labeling rule, strands $1,2,\dots,k$ must connect to $\bar{1},\bar{2},\dots,\bar{k}$ instead of themselves, in order to avoid strands crossing. The non-crossing rule means that there are $C_k=(2k)!/(k!(k+1)!)$ possible choices of pairing (a Catalan number). A pairing corresponds to a `non-crossing permutation' in the permutation group $S_k$.}
See  Fig.~\ref{Fig: label spectator}(a) for an example.

Let us label connections (pairings) by $\sigma$, $\sigma'$.
We will use the shorthand $\mathcal{O}=\ket{\sigma}\bra{\sigma'}$ to denote an operator whose only nonzero matrix elements are between `in' states of type $\sigma'$ and `out' states of type $\sigma$; 
see
Fig.~\ref{Fig: label spectator}(b) and
Fig.~\ref{Fig: topological type partial order} for examples.
Such an operator inserts  a reconnection event of a particular type in spacetime.

For a topological classification, we may assume that the pair $(\sigma, \sigma')$ does not contain any removable strands: these are strands that are \textit{both}   paired the same way in $\sigma$ and $\sigma'$, i.e. are not reconnected by the operator; \textit{and} can be pushed out of the disc without being blocked by other strands that \textit{are} reconnected (Fig.~\ref{Fig: label spectator}(b)). 
For example, operators of the form $\ket{\sigma}\bra{\sigma}$ are of the trivial topological type, independent of $\sigma$.
We must also identify pairs $(\sigma, \sigma')$ and $(\tau, \tau')$ that are related simply by a `rotation' of the labels $1, \bar{1}, 2, \bar{2}, 
\ldots k, \bar{k}$, i.e. by changing the starting point for the labelling (note also that rotating an operator does not change its topological type).
We refer to an operator that nontrivially reconnects $2k$ strands as a `$k$--loop reconnection operator'.

\begin{figure}[b]
\begin{center}
\includegraphics[width=0.45\textwidth]{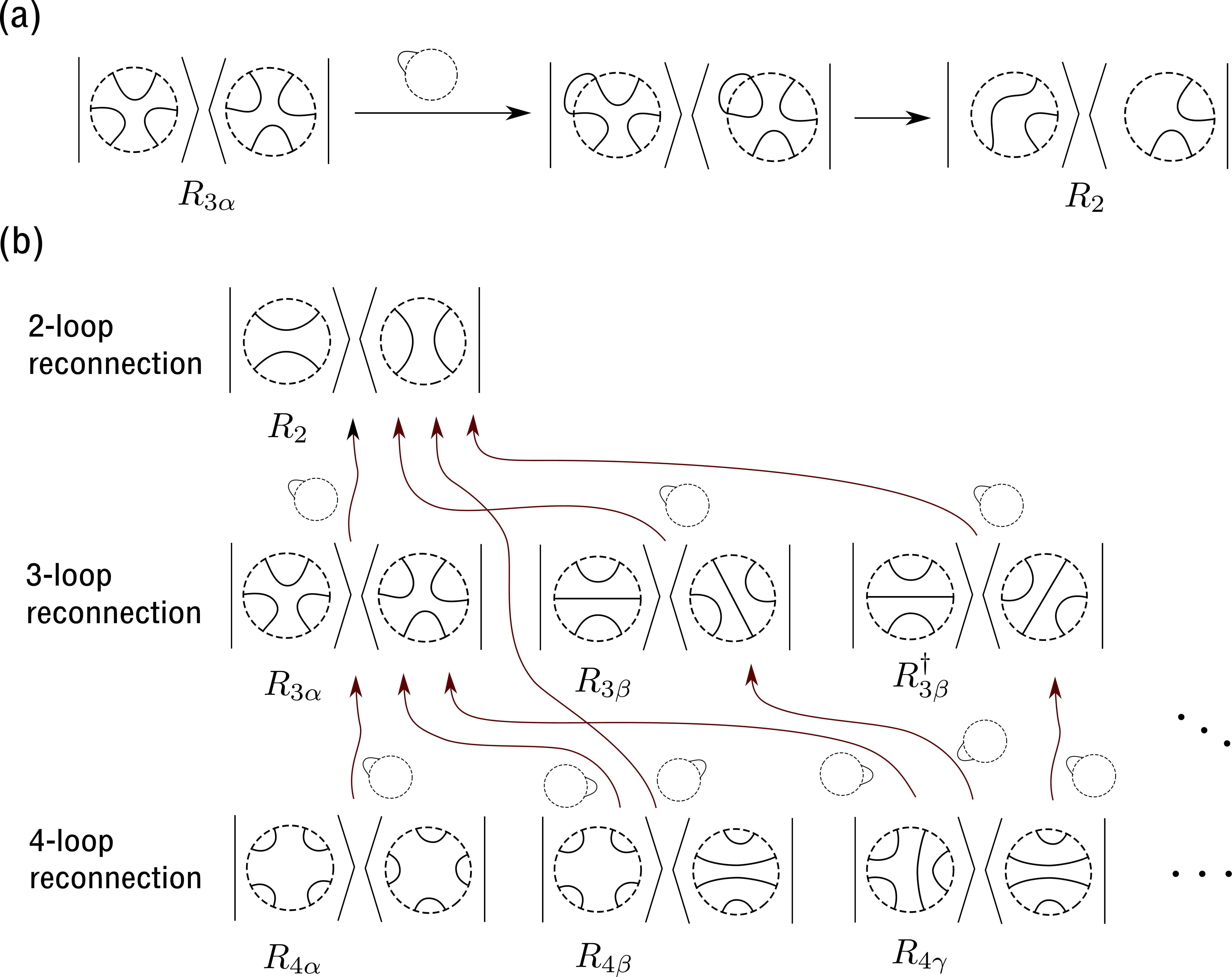}
\caption{Fig.(a): By capping off 2 strands from outside, the 3-loop reconnection $R_{3\alpha}$ becomes $R_2$ (but the reverse \textit{cannot} happen). Thus $R_2$ is an offspring type of $R_{3\alpha}$, and can be generated from $R_{3\alpha}$ under RG. Fig. (b): simple topological types and their partial order. We list all possible 2-loop and 3-loop reconnection types and three of the 4-loop reconnection types.}
\label{Fig: topological type partial order}
\end{center}
\end{figure}

What can happen to a reconnection operator $\ket{\sigma}\bra{\sigma'}$ under coarse-graining?
At first sight we might think that the topological type is simply preserved. 
However a key point is that some of the strands inside the disc that are involved in the reconnection can lie on microscopic loops whose world-surfaces disappear under coarse-graining. Similarly, different strands may be connected by short sections of the same loop, so that after coarse-graining they become a single strand. 
This means that an operator of the form $\ket{\sigma}\bra{\sigma'}$ in general transforms into a sum of operators,  including not only the same topological type but also simpler, `offspring' topological types. These simpler types are obtained by capping off strands in the way shown in Fig.~\ref{Fig: topological type partial order}.
 
There is therefore a partial order on this basis of quantum operators. Since coarse-graining cannot generate nontrivial reconnection events that were not there microscopically, a scale transformation can transform an operator only into operators of the same or lower type. The lowest type is the topologically trivial type which does not reconnect any loop.

As a result we do not expect an operator of a fixed type $\ket{\sigma}\bra{\sigma'}$ to be a scaling operator in general: a scaling operator transforms into itself (times $b^{-x_\mathcal{O}}$) under an RG transformation, whereas $\ket{\sigma}\bra{\sigma'}$ will also generate lower reconnections.
Instead, the RG transformation has a  block-upper-triangular form when restricted to $\ket{\sigma}\bra{\sigma'}$ and its `offspring'. We must diagonalize this matrix in order to construct scaling operators. We assume this can be done and will do it explicitly in several cases.\footnote{ In principle the scale transformation might contain nontrivial Jordan blocks which would imply logarithms in correlation functions \cite{gurarie1993logarithmic,cardy1999logarithmic,vasseur2012logarithmic,cardy2013logarithmic}. We will not address this here in general. The correlation functions we compute explicitly do not contain logarithms.}

At the end of the day, we can still label every scaling operator by its leading topological type $(\sigma, \sigma')$,
but we should think of such an operator as a linear combination of microscopic operators not only of type $(\sigma, \sigma')$ but also of the lower types.

In Sec.~\ref{Sec: operators and correlation functions} and Appendix~\ref{Appendix:topologicaloperatorclassification}, we verify the physical arguments presented above by directly calculating equal-time correlators. 
The loop model has the advantage that equal-time correlators can be mapped to (non-local) 2D classical correlation functions. In Sec.~\ref{Sec: operators and correlation functions} we use this mapping to classify equal-time correlators in the quantum models. 
We will see that in order to construct
a $k$--loop reconnection scaling operator, we have to subtract operators of simpler (`offspring') topological types that connect fewer than $k$ loops.

We find that the equal-time two-point functions of a large class of quantum reconnection operators can be mapped exactly to watermelon correlators in the classical loop ensemble.
The basic watermelon correlators measure the probability that two distant points are connected by a certain number, $2k$, of strands: see the cartoon in Fig.~\ref{Fig:watermelon}. This correlation function has scaling dimension (for a review, see ~\cite{jacobsen2009conformal})
\be\label{Eq: x2k}
x_{2k} = 
\frac{g^2k^2-(1-g)^2}{2g},
\ee
with 
\ba\label{eq:gdef}
g & = \text{arccos}\lf-\f{|d|^2}{2}\ri / \pi,&  \frac{1}{2}& <g<1.
\end{align}
The quantum model has $k$-loop reconnection operators with dimension $x_{2k}$ for every $k>1$. 
(The lowest of the above dimensions, $x_2$, determines the fractal dimension of the loops in the classical ensemble, via $d_f=2-x_2$~\cite{jacobsen2009conformal}. But as far as we are aware, there is no local quantum operator in the present models with dimension $x_2$.\footnote{ A modified quantum model, allowing multiple colours of loops, also has a quantum operator with dimension $x_2$, which simply measures the local colour.}) 

For large $k$, there are many distinct operators with the same scaling dimension $x_{2k}$, which effect topologically distinct reconnections $\ket{\sigma}\bra{\sigma'}$ involving the same number of strands. In addition to the topological label, we will also discuss more conventional symmetry indices of reconnection operators, for example the spin under spatial rotations (Sec.~\ref{Sec: operators and correlation functions}).

What is the relation between the topological operator classification above and conventional classification of operators by symmetry? The two ideas share the logic that constraints in a Hamiltonian that are preserved under RG, either symmetry constraints or topological constraints (both can be exact or emergent) dictate the scaling structure in the IR. 
In this sense, the dynamical topological constraint we discuss here may be viewed as a generalization of symmetry, and the topological operator classification as the analog of classification by symmetry representations. It is also possible that in some models, these two points of view overlap.

\section{Scaling operators for frustration-free Hamiltonians}
\label{sec:ffscalingoperators}

A frustration-free Hamiltonian is at first glance just an infinitely fine-tuned choice among all possible Hamiltonians in a universality class. However, a closer examination shows that the existence of one frustration-free Hamiltonian imposes strong constraints on the scaling structure of the universality class.

Local frustration--free Hamiltonians have the schematic form 
\be\label{eq:ffH}
\mathcal{H} = \sum_i  J_i \sum_{\mathbf r} \mathcal{P}_{i, {\mathbf r}},
\ee
where the index $i$ runs over different types of Hamiltonian term, and $\mathcal{P}_{i, {\mathbf r}}$ is a projection operator.\footnote{We have assumed translational symmetry here, but similar considerations apply without it.} 
Let us write
\be
\mathcal{P}_{i, {\mathbf r}} =
\ket{i}_{D_{\mathbf r}} \bra{i}_{D_{\mathbf r}}
\otimes
\mathbb{1}_{\overline{D_{\mathbf r}}}
\label{Eq: local projector}
\ee
where $\ket{i}_{D_{\mathbf r}}$ is a state defined in some small patch $D_{\mathbf r}$ around ${\mathbf r}$, and  $\mathcal{P}_{i, {\mathbf r}}$ acts as the identity, $\mathbb{1}_{\overline{D_{\mathbf r}}}$, outside this patch. Frustration-freeness means that all the terms can be simultaneously satisfied, i.e. there is at least one ground state $\ket{\Psi}$ that is annihilated by all the projectors,
\be\label{eq:projectorzero}
\mathcal{P}_{i, \mathbf{r}}\ket{\Psi} = 0.
\ee
`Rokhsar-Kivelson'--type Hamiltonians are a subclass of frustration-free Hamiltonians for which there is also a correspondence to a local classical Markov process; this requires an additional condition on the projectors that we provide in Appendix~\ref{Appendix: Quantum Markovian correspondence}.

The structure in Eqs.~\ref{eq:ffH},~\ref{eq:projectorzero} implies that critical frustration-free models will generally have scaling operators whose correlators vanish in the ground state(s). For simplicity consider equal time correlators, and to avoid clutter let us drop the subscripts $D_{\mathbf r}$. 
The frustration-free condition implies that
the reduced density matrix  (of a ground state) for a patch $D_{\mathbf r}$, which is ${\rho=\Tr_{\overline{D_{\mathbf r}}} \ket{\Psi}\bra{\Psi}}$, is orthogonal to the `forbidden' states $\ket{i}$  in the patch for all $i$: 
\ba\label{eq:hiddenstate}
\rho \ket{i} &=0,
&
\bra{i} \rho &=0.
\end{align}
Therefore the equal time correlation function 
\be
\bra{\Psi} (\ldots) \,  \mathcal{O}_{\mathbf r} \ket{\Psi}
\ee
(where `$\ldots$' represents operators
outside the patch $D_{\mathbf{r}}$, hence
commuting with $\mathcal{O}_{\mathbf{r}}$)
vanishes if $\mathcal{O}_{\mathbf r}$ is of the form $\ket{i}\bra{\phi}$  or
$\ket{\phi}\bra{i}$, for any state $\ket{\phi}$ on the patch ${D_{\mathbf r}}$.\footnote{For operators of the former (latter) type, nonequal time correlation functions also vanish if $\mathcal{O}$ is the latest (resp. earliest) time operator in the correlator.} 
We refer to linear combinations of these operators as `hidden' operators. 
More generally, we define a hidden operator as an operator of this form in which $\ket{i}$ is orthogonal to the reduced density matrix (for the appropriate local patch) as in Eq.~\ref{eq:hiddenstate}, regardless of whether the projector onto $\ket{i}$ appears in the Hamiltonian.

A subset of hidden operators correspond to infinitesimal deformations of the Hamiltonian which  transform it into a different frustration-free Hamiltonian. A  deformation in which the state $\ket{i}$ appearing in one of the Hamiltonian's projectors is replaced with ${\ket{i} + \epsilon_i \ket{\phi}_i}$ yields a perturbation of the form 
$\ket{i}\bra{\phi}_i +  \ket{\phi}_i\bra{i}$.
(Whether the total perturbation preserves the frustration free property is a global property, not a property of the local operator.)

Another subset of hidden operators, which we might call `doubly hidden', is made up of those of the form $\ket{f}\bra{f'}$ that act entirely within the subspace that is forbidden in the ground state (i.e. $\rho\ket{f}=\rho\ket{f'}=0$ for the appropriate local patch). This includes the perturbations $\ket{i}\bra{i}$ that are induced by varying the couplings $J_i$ in Eq.~\ref{eq:ffH}. In Sec.~\ref{Sec: Jones Wenzl} we will discuss an example of a nontrivial doubly-hidden operator in the loop model which appears only at $d = \pm\sqrt{2}$ (and will describe a general feature of RG flows induced by doubly-hidden operators, when they are RG-relevant).

Not all frustration-free Hamiltonians admit a classical correspondence: we clarify the conditions under which a classical correspondence exists in Appendix~\ref{Appendix: Quantum Markovian correspondence}. 
But in cases where there is a classical mapping, some of the hidden operators can be related to perturbations of the classical problem. (Recall that $|\Psi(C)|^2$ is the Boltzmann weight in the classical ensemble.)

Take a local observable 
$\mathcal{O}_\text{cl}=\mathcal{O}_{\text{cl},\mathbf{r}}$ at position $\mathbf{r}$
in the classical ensemble, with scaling dimension $x_\text{cl}$.
In general there are at least three separate quantum operators whose scaling dimensions are related to $x_\text{cl}$.
First, there is a diagonal quantum operator $\mathcal{O}_1=\mathcal{O}_{1,{\mathbf r}}$ whose matrix elements are given by $\mathcal{O}_\text{cl}$, and whose equal-time correlators are equal to those  of $\mathcal{O}_\text{cl}$ in the classical ensemble. Evidently,
\be
x_1 = x_\text{cl}.
\ee

Next, imagine perturbing the phase or the amplitude of the ground state wavefunction with $\mathcal{O}_\text{cl}$:
\ba\label{eq:wavefunctiondeformation}
|\tilde{\Psi}\> & = e^{i\lambda \mathcal{O}_{1}}|\Psi\>,
&&\text{or}
&
|\tilde{\Psi}\> &= e^{\lambda \mathcal{O}_{1}}|\Psi\>,
\end{align}
where $\lambda$ is a small real number. What perturbations of the \textit{Hamiltonian} induce these perturbations of the ground state?
The first one is nothing but a local unitary transformation of the ground state, corresponding to the infinitesimal change
\ba
\mathcal{H} &\rightarrow \mathcal{H} + \lambda \mathcal{O}_2,&
&\text{where} 
&
\mathcal{O}_2 &= i[\mathcal{O}_{1}, \mathcal{H}] =  - \dot{\mathcal{O}}_{1}.
\end{align} 
The operator $\mathcal{O}_2$ has scaling dimension
\bea
x_2 = z + x_\text{cl}.
\eea

For a generic system, the second perturbation in Eq.~\ref{eq:wavefunctiondeformation} 
does not correspond to any local perturbation of the Hamiltonian. However for the present class of models, with a classical correspondence, it does.
The energy $E_C$ in the classical partition function (given for a configuration $C$ by $e^{-E_C} = |\Psi(C)|^2$)  is perturbed by $-2 \lambda \mathcal{O}_\text{cl}$.
The RG eigenvalue of this perturbation in the classical problem is $y = d - x_\text{cl}$,
where $d$ is the spatial dimension.
We expect that there is a corresponding local operator $\mathcal{O}_3$ (preserving frustration-freeness)
with which we can perturb the quantum model
so as to yield a correspondence with the perturbed classical model. 
This operator must have the same RG eigenvalue $y$ as in the classical problem. This means that it also has the scaling dimension
\be\label{eq:x3}
x_3 = z + x_\text{cl}.
\ee

What is the algebraic expression for $\mathcal{O}_3$? Consider a spatial patch around $\mathbf{r}$ with the local Hamiltonian
\bea
\mathcal{H}_\mathbf{r} = \sum_i  J_i \mathcal{P}_{i, {\mathbf r}}.
\eea 
We choose the patch large enough so that $\mathcal{O}_\text{cl}$ depends only on the configuration within the patch (i.e. so that $\mathcal{O}_1$ is supported on the patch), and also large enough so that all the terms in $H$ that do not commute with $\mathcal{O}_1$ are included. 
It is convenient to take the `forbidden' states $\ket{i}$ that are penalized by the projectors $\mathcal{P}_{i,{\mathbf r}}$ to be states on the full patch rather than on subregions of it: this can be done by a simple rewriting (preserving the form above but increasing the number of terms; see Appendix ~\ref{Appendix: O3 patches}).
 
The ground state is orthogonal to all the forbidden states $\ket{i}$.
In order to implement the desired perturbation to it (hence to the projector in Eq.~\ref{eq:ffH}), we make the invertible transformation ${|i\>\rightarrow e^{-\lambda \mathcal{O}_\mathbf{r}}|i\>}$: this preserves the frustration-freeness of the Hamiltonian and induces the change
\ba
\mathcal{H}_\mathbf{r}&\rightarrow \mathcal{H}_\mathbf{r} -\lambda\mathcal{O}_{3},
&
&\text{with}
&
\mathcal{O}_{3} &= \{\mathcal{O}_{1}, \mathcal{H}_\mathbf{r}\}\label{Eq: O3 expression}.
\end{align}
Note that $\mathcal{O}_3=\mathcal{O}_{3,{\mathbf r}}$ is a local operator, since $O_1$ and $\mathcal{H}_{\mathbf r}$ are both supported in the local patch.

In fact, the relation between the scaling dimensions of $O_1$ and $O_3$ above is an example of a more general relationship for  frustration-free Hamiltionians, including those without a classical correspondence, which constrains scaling dimensions of certain operators.
In general, multiplying two local operators with dimensions $x$ and $x'$ does not simply give an operator with dimension $x+x'$.
However in a frustration-free model, multiplying a local operator with the  local Hamiltonian density (summed over an appropriate local patch) 
simply increases the scaling dimension by $z$, which is the dimension of the Hamiltonian, as in Eq.~\ref{eq:x3}. We show this by directly computing the 2-point function: 
\ba \notag
\<(\mathcal{O}_{\mathbf{r}}\mathcal{H}_\mathbf{r})(t)  (\mathcal{O}_{\mathbf{r}'}\mathcal{H}_{\mathbf{r}'})^\dag(0) \> 
& = \<\Psi|\mathcal{O}_{\mathbf{r}}\mathcal{H}_{\mathbf{r}}e^{-\mathcal{H}_{\mathbf{r}}t}\mathcal{H}_{\mathbf{r}'}\mathcal{O}_{\mathbf{r}'}^\dag|\Psi\>\\ \notag
& = \<\Psi|\mathcal{O}_{\mathbf{r}}\mathcal{H}e^{-\mathcal{H}t}\mathcal{H}\mathcal{O}_{\mathbf{r}'}^\dag|\Psi\>\\
&=\f{d^2}{dt^2}\<\mathcal{O}_{\mathbf{r}}(t)\mathcal{O}_{\mathbf{r}'}^\dag(0)\>.
\end{align}
We used frustration-freeness in going from the first line to the second: 
the difference between $\mathcal{H}_{\mathbf r}$ and $\mathcal{H}$ consists of terms that commute with $\mathcal{O}_{\mathbf r}$ and individually annihilate the ground state.
Comparing with Eq.~\ref{eq:twotime} gives the result for the scaling dimension.

Operators $\mathcal{O}_2$ and $\mathcal{O}_3$ are examples of hidden operators. One of these, $\mathcal{O}_2$, is however a redundant perturbation in the RG sense \cite{Wegner:1976bn}, as the first part of Eq.~\ref{eq:wavefunctiondeformation} shows that it can be absorbed into a quasilocal unitary basis change.
In the loop models, we cannot obtain the scaling dimensions of hidden operators by directly calculating their \textit{equal-time} two-point functions, since these vanish; however, we can obtain the scaling of at least some hidden operators by the logic above.

If the classical model has an exactly marginal \textit{local} perturbation $\mathcal{O}_\text{cl}$ with dimension $x_\text{cl}=d$, this will give rise to a quantum operator $\mathcal{O}_1$ with dimension $x_1=d$ and two exactly marginal quantum operators of dimension $x_2 = x_3 =z+d$, as above.

It is worth noting that in the quantum loop model there are marginal operators for any $d$ that, unlike the above, do not have a counterpart \textit{local} operator of lower dimension. These are the operators that change the real and imaginary parts of the loop amplitude $d$. These have the scaling dimension $x=z+d$, like the operators $\mathcal{O}_2$ and $\mathcal{O}_3$ in the above situation. However, in this case  the analogues of the operators $\sum_{\mathbf r}\mathcal{O}_{\text{cl},{\mathbf r}}$ and $\sum_{\mathbf r} \mathcal{O}_{1,{\mathbf r}}$ (in the classical and quantum models respectively) are nonlocal operators that count the number of loops.
Note that changing the loop amplitude is a nonlocal perturbation in the classical model, but can be achieved by a local perturbation of the quantum Hamiltonian.\footnote{
For this reason, the  loop weight $|d|^2$ in the classical model does not flow under RG. There is no such protection in the quantum model.}

\section{Excitations and dynamics}\label{Sec: Excitations and the dynamical exponent}

The existence of gapless excitations is equivalent to the presence of slow dynamics in either real or imaginary time.
In the present model, thinking about time evolution of large loops with the no-reconnection constraint provides useful insights into the excited states. 
With the topological constraint, what could otherwise be done by a single reconnection of large loops now requires gradually shrinking loops, annihilating loops, creating new loops, and growing them to the new position. 
When we rely on such extended paths through state space to connect two configurations, then low-lying excitations can be made by introducing smooth phase twists into the wavefunction \cite{freedman2008lieb}, as we discuss below.

We note that the dynamical exponent, and in fact the entire spectrum, is necessarily independent of the argument of the complex loop amplitude $d$ and can depend at most on its modulus $|d|$. 
This follows from the fact that the phase of $d$ can be rotated by $\theta$ by conjugating the Hamiltonian with a unitary transformation $U^{\theta}$.
This transformation is diagonal in the loop basis and acts by $U^{\theta}\ket{C}=e^{i\theta |C|} \ket{C}$, where again $|C|$ is the number of loops in configuration $C$. This transformation is non-local, but preserves the locality of the Hamiltonian. Since it is unitary, it preserves the spectrum and therefore the value of $z$, so this cannot depend on $\arg d$. 

In this section, we first introduce a correspondence between the frustration-free quantum Hamiltonian and a classical Markov process for loops in two spatial dimensions. 
This correspondence allows us think about of the motion of loops in a classical language, and to compute the dynamical exponents and various temporal correlation functions numerically with a classical Monte Carlo simulation. 

Next, we prove a new analytical bound on the dynamical exponent,  strengthening the previous bound $z\geq 2$ \cite{freedman2008lieb}.
We show that $z\ge 4-d_f$, where
\be
d_f = 1 + \frac{\pi}{2\text{arccos}(-|d|^2/2)}
\ee
is the fractal dimension of loops discussed in Sec.~\ref{subsection:classicalmapping}.
Our result rules out the value $z=2$ that was previously believed to be exact \cite{freedman2005line,freedman2008lieb}.

We then report numerical results which show that the actual value of $z$ is close to 3, and apparently independent of the weight $|d|$ (Sec.~\ref{sec:znumerics}). 
In the following section (Sec.~\ref{sec:superuniversality}) we will explain this `superuniversality' of $z$ as a generic feature of RG fixed lines/surfaces. 

Taking the fact that $z$ must be $d$-independent  into account allows the bound  ${z\geq 4-d_f(d)}$ to be strengthened, by using the smallest  value of $d_f$ that is attained anywhere on the critical surface. 
At first glance this is $d_f=3/2$ which is attained at $n=2$, giving $z\geq 2.5$ for all $d$.
But in fact, as we explain in Sec.~\ref{subsec:dilutecriticalpoint}, the critical surface folds over at $|d|^2=2$ onto another `sheet' for the same range $0<|d|^2<2$, with smaller fractal dimension for a given $d$. 
This gives the analytical bound $z\geq 2.6\dot 6$.

At the end of the present section we describe anomalously low-lying states that appear for the loop model on topologically nontrivial manifolds. (Sec.~\ref{sec:gsd}).

\subsection{Mapping to a Markov process}
\label{sec:markovprocess}

A classical Markov process is described by a master equation
\be
\label{Eq: Master Equation}
\frac{dp_{C}(t)}{d t}=\sum_{C'} W_{CC'}p_{C'}(t)
\ee
where $C,C'$ label classical configurations and 
$p_{C}(t)$ is the evolving probability distribution.
The off-diagonal elements $W_{CC'}$ are transition rates from $C'$ to $C$, and in order to conserve probability the diagonal elements are $W_{CC}= -\sum_{C'\neq C}W_{C'C}$. 

The transition matrix is in general non-Hermitian, unlike the Hamiltonian appearing in a Schrodinger equation.
However  a large class of frustration-free quantum models \cite{henley2004classical}, including the loop model, are related to a classical system by a diagonal similarity transformation built from the wavefunction:
\bea 
S_{CC'} 
= \Psi^*(C) \delta_{C,C'} 
=
\frac{
\bar{d}^{|C|}\delta_{C,C'}
}
{
\sqrt{Z}
}.
\eea 
Multiplying by $S$ maps the ground state wavefunction ${\Psi(C) = d^{|C|}/\sqrt{Z}}$ to the equilibrium probability distribution of the classical loop model, ${p(C) = |\Psi(C)|^2 = |d|^{2|C|}/Z}$. The nontrivial observation is that the frustration-free Hamiltonian is mapped to (minus) the transition matrix of a local Markov process for the loops:
\bea 
{H}\ \longrightarrow \ 
-W\equiv S {H}S^{-1}.
\eea 
This Markov process allows the same transitions as the quantum Hamiltonian (motion of a loop and creation/annihilation of a small loop) and obeys the same no-reconnection constraint.
It is worth noting that while the equilibrium state of the Markov process contains the nonlocal factor $|d|^{2|C|}$, its dynamics is entirely local.

For a simple example of the mapping, take the two configurations for a single hexagon shown in Fig.~\ref{Fig: local move on lattice}(a): one empty (`${\ket{\phantom{\circ}}}$')and one occupied with a small loop (`${\ket{\circ}}$'). Restricted to these two states, the quantum Hamiltonian and the classical transition matrix have the forms
\ba
{H} & \propto \left( \begin{array}{cc}
|d|^2 & -\bar{d}\\
-d & 1
\end{array} \right),
&
W & \propto \left( \begin{array}{cc}
-|d|^2 & 1\\
|d|^2 & -1
\end{array} \right).
\label{Eq:2by2projector}
\end{align}
The classical transition matrix $W$ has positive real transition rates, and the steady state (with zero-eigenvalue) is ${\ket{\phantom{\circ}} + |d|^2\ket{\circ}}$, as expected. Each column of $W$ adds up to zero, ensuring conservation of probability in the classical dynamics.

For the present loop model, the similarity transformation gives a local Markov process.  Refs.~\cite{henley1997relaxation,henley2004classical,castelnovo2005quantum} point out that such a mapping always exists for a Hamiltonian decomposable in terms of ${2\times2}$ projectors. However, a classical correspondence is not guaranteed for a generic frustration-free Hamiltonian.
Ref.~\cite{velenich2010string} discussed possible obstacles to constructing such a mapping.
For example, for the topological models constructed in Ref.~\cite{levin2005string}, those exhibiting abelian topological order map to local Markov processes, but the non-abelian models do not.
We explicitly write down  sufficient conditions under which a quantum frustration-free Hamiltonian maps to a local Markov process, and discuss general aspects of the map, in  Appendix~\ref{Appendix: Quantum Markovian correspondence}.

The classical correspondence greatly simplifies the computation of temporal correlators.  Refs.~\cite{henley1997relaxation,henley2004classical,castelnovo2005quantum} show that the imaginary time quantum correlator of \textit{diagonal} operators maps to the two-point correlation function in the classical Markov process,
\ba
\<\mathcal{O}_{1}(t)\mathcal{O}_2(0)\> & =
\< \mathcal{O}_1(t)\mathcal{O}_2(0)\>_\text{cl}.
\end{align}
Explicitly, 
\ba
\< \mathcal{O}_1(t)\mathcal{O}_2(0)\>_\text{cl}
&= \sum_{C,C'}\mathcal{O}_{1}(C')p_{C',C}(t)
\mathcal{O}_{2}(C)
p_C,
\ \ \label{Eq: correlator Markov}
\end{align}
where $p_{C',C}(t)$ is the classical  probability to go from state $C$ to $C'$ in time $t$. 
In Appendix ~\ref{Appendix: Quantum Markovian correspondence}, we generalize this correspondence to off-diagonal operators. 

These relationships mean that quantum correlation functions can be obtained from a Monte Carlo simulation of the dynamics of the classical 2D system. By contrast, conventional quantum Monte Carlo would act on the Feynman histories in 3D spacetime (endowed with an additional fictitious `dynamics').

Simulations are described in Appendix~\ref{Appendix: numerical methods}.
The mapping relates the quantum Hamiltonian in Eqs.~\ref{eq:flippable},~\ref{Eq: H gapless loop} to a Markov process in continuous time; for numerical convenience we instead use discrete time, with  $L^2$ random  hexagons updated in each time step. This difference is unimportant when $L$ and $t$ are large.  Our choice of rates for the Markov process corresponds to the choice of couplings $K_1/K_2 = |d|^2/(1+|d|^2)$ described in Sec.~\ref{sec:latticeH}. We  expect universal properties to be independent of this choice.

\subsection{Improved analytical bound on $z$}\label{sec:zbound}

In order to motivate the bound, let us first make a crude estimate of the timescale for annihilating a loop of linear size $R \gg 1$ and of total length $\ell\sim R^{d_f}$ (see Sec.~\ref{subsection:classicalmapping} for a discussion of the fractal dimension $d_f$). We use the classical Markov process described in the previous subsection.

In a given $O(1)$ time interval, the area $A\sim R^2$ of a large loop changes by the addition of $O(\ell)$ random positive or negative increments, at various locations around the loop.
In reality the increments at different times and in different locations are correlated, but for simplicity let us first imagine that they are completely independent. 
Then the change $\Delta A$ in the area grows diffusively with time: ${|\Delta A|\sim (t\cdot \ell)^{1/2}}$. 
Setting $\Delta A\sim R^2$ gives the timescale ${t\sim R^4/\ell\sim R^{4-d_f}}$ to change the area by an $O(1)$ fraction (where we have used the fractal scaling of $\ell$ with $R$).
Identifying this with the characteristic dynamical timescale $t\sim R^z$ on scale $R$ gives $z\overset{?}{=} 4-d_f$.

In reality this value is not correct, because of the correlations mentioned above, which slow the dynamics further. However one can show analytically that the above value is a lower bound:
\be{
z\geq  4-d_f.
}\ee
In Appendix~\ref{Appendix: analytical bound} we construct a series of variational wavefunctions orthogonal to the ground states, labeled by an integer $n\neq 0$, and show that their energy $E$ satisfies $E_n \le cn^2L^{d_f-4}$ ($c$ is an $O(1)$ constant.) 
We use the idea of Ref.~\cite{freedman2008lieb}, which is to create an excited state of the loop model by twisting the phase of the wavefunction $\Psi(C)$ as a function of a slow `mode'. Our observation here is that using the area of a large loop for the slow mode gives a stronger bound than using the length, as was done previously.
We construct a series of variational states, labeled by an integer $n$, by continuously changing the phase of the wavefunction according to the area of the largest loop:
\be
\ket{n}=\frac{1}{\sqrt{Z}}\sum_{C}d^{|C|}e^{2\pi n i \, p(A_C)}\ket{C}.
\ee
Here $A_C$ is the area of the largest loop in configuration $C$, and $p(A_C)$ is the cumulative probability distribution of $A_C$ in the classical ensemble. Making the phase of the wavefunction proportional to this probability ensures the variational states are orthogonal to the ground state and to each other (see Appendix ~\ref{Appendix: analytical bound}).

\begin{figure}[t]
\begin{center}
\includegraphics[width=0.99\linewidth]{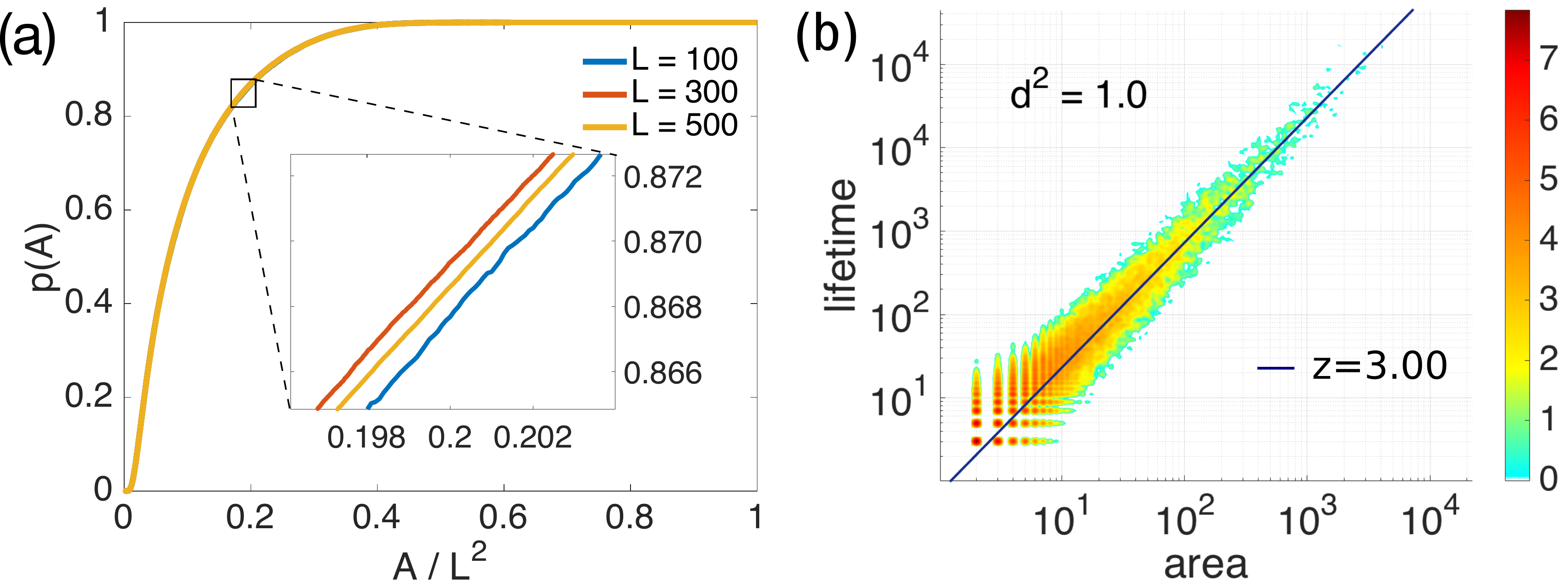}
\caption{Area, length, and lifetime distribution of loops. Fig. (a) shows the cumulative area distribution of the largest non-winding loop in the classical ensemble on an $L\times L$ torus ($d=1$). $p(A)$ is the probability that the largest loop has area smaller than $A$. For system sizes $L = 100, 300, 500$, $p(A)$ collapses to the same curve. Fig. (b) shows the joint distribution of lifetime and maximum area of loops during the Markov process corresponding to the quantum dynamics. The intensity in Fig. (b) at each point is proportional to the log of number of loops with the corresponding lifetime and maximum area. The high intensity region follows the line: lifetime $\sim\ \text{area}^{1.5}$, consistent with the scaling $t\sim R^3$.}
\label{Fig: area length life}
\end{center}
\end{figure}

In the physical proof sketched above, we implicitly assumed that the size distribution of the largest loop, $p(A_C)$, converges to a smooth function of $A_C/L^2$ in the thermodynamic limit. It is well known that the classical loop ensemble has loops of radius comparable with the system size for $|d|^2\le 2$ (this is a consequence of scale-invariance).
In Fig.~\ref{Fig: area length life}(a), we verify it for $d=1$, and numerically compute the probability distribution of the area of the largest non-winding loop. On the torus with size $L=100$, 300 and 500,\footnote{We take periodic boundary conditions in two directions: the horizontal bond direction and the direction at 120 degrees to it.}
we found $p(A_C)$ converges to a smooth function of $A_C/L^2$. The residual differences are mainly statistical.

\subsection{Numerical estimates of $z$}
\label{sec:znumerics}

We extract the dynamical exponent numerically in two ways, finding that the lower bound above is not saturated.
We use the mapping to a classical Markov process described in Sec.~\ref{sec:markovprocess} 
and Appendix~\ref{Appendix: Quantum Markovian correspondence}.
We trace the evolution of large loops in this dynamics, and we also compute dynamical correlation functions.

First, Fig.~\ref{Fig: area length life}(b) shows the distribution of \textit{lifetimes} of loops in the classical Markov process, as a function of loop size: specifically, as a function of the maximum area  attained by the loop during its lifetime (we follow the loops that are present in an initial equilibrated state). 
Scale invariance of the dynamics implies that ${(\text{lifetime})\sim(\text{area})^{z/2}}$.
We find that the data agrees well with this scaling relation if we take $z=3$.\footnote{For this result, we take a torus of size $L=500$, time $T = 10^5$ (see App.~\ref{Appendix: numerical methods} for the definition of the time unit). The results shown are averaged over 6 such runs.}

\begin{figure}[t]
\begin{center}
\includegraphics[width=0.8\linewidth]{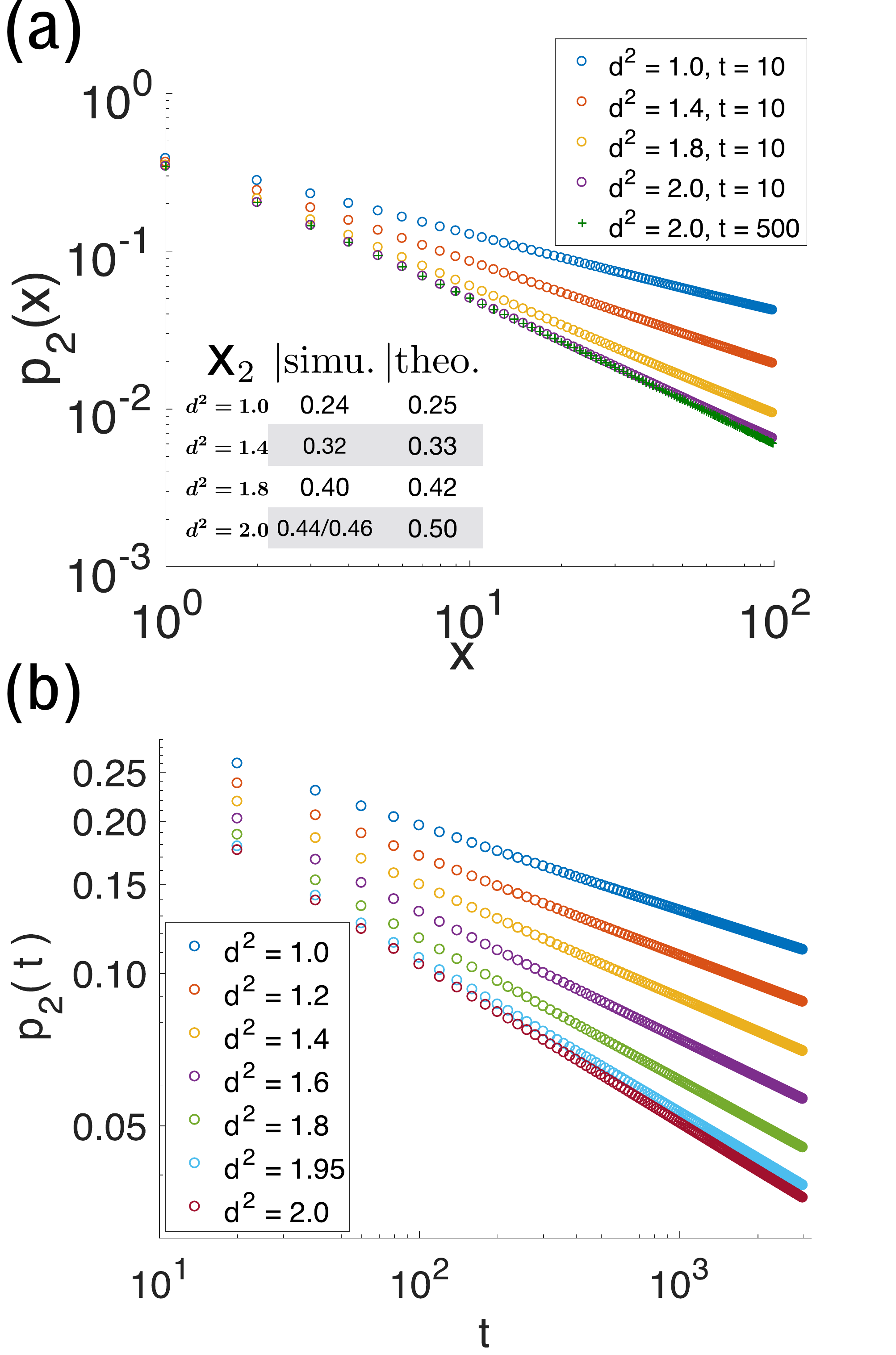}
\caption{Fig.~(a): $p_2(x)$, the probability that two points separated by distance $x$ are connected by one loop (on a torus with $1000\times 1000$ plaquettes). Different colors label different $d$, and/or different warm-up times $t$ (App.~\ref{Appendix: numerical methods}). Theoretically, $p_2(x)\sim x^{-2x_2}$. The theoretical and measured $x_2$ are shown in the inset chart. For $d^2=1.0,1.4,1.8$, after 10 warm up steps, the measured exponents fit the theory very well. For $d^2=2.0$, there is a significant deviation because of a known logarithmic correction to the power-law decay at this critical $|d|^2$. Fig.~(b): $p_2(t)$, the probability that a fixed point is occupied by the same loop at times $0$ and $t$, for various $d$, on a torus with $400\times400$ plaquettes (note that $t\ll L^z$ for these values, so finite size effects are expected to be small).}
\label{Fig: CorrX2xt}
\end{center}
\end{figure}

\begin{figure}[t]
\begin{center}
\includegraphics[width=0.7\linewidth]{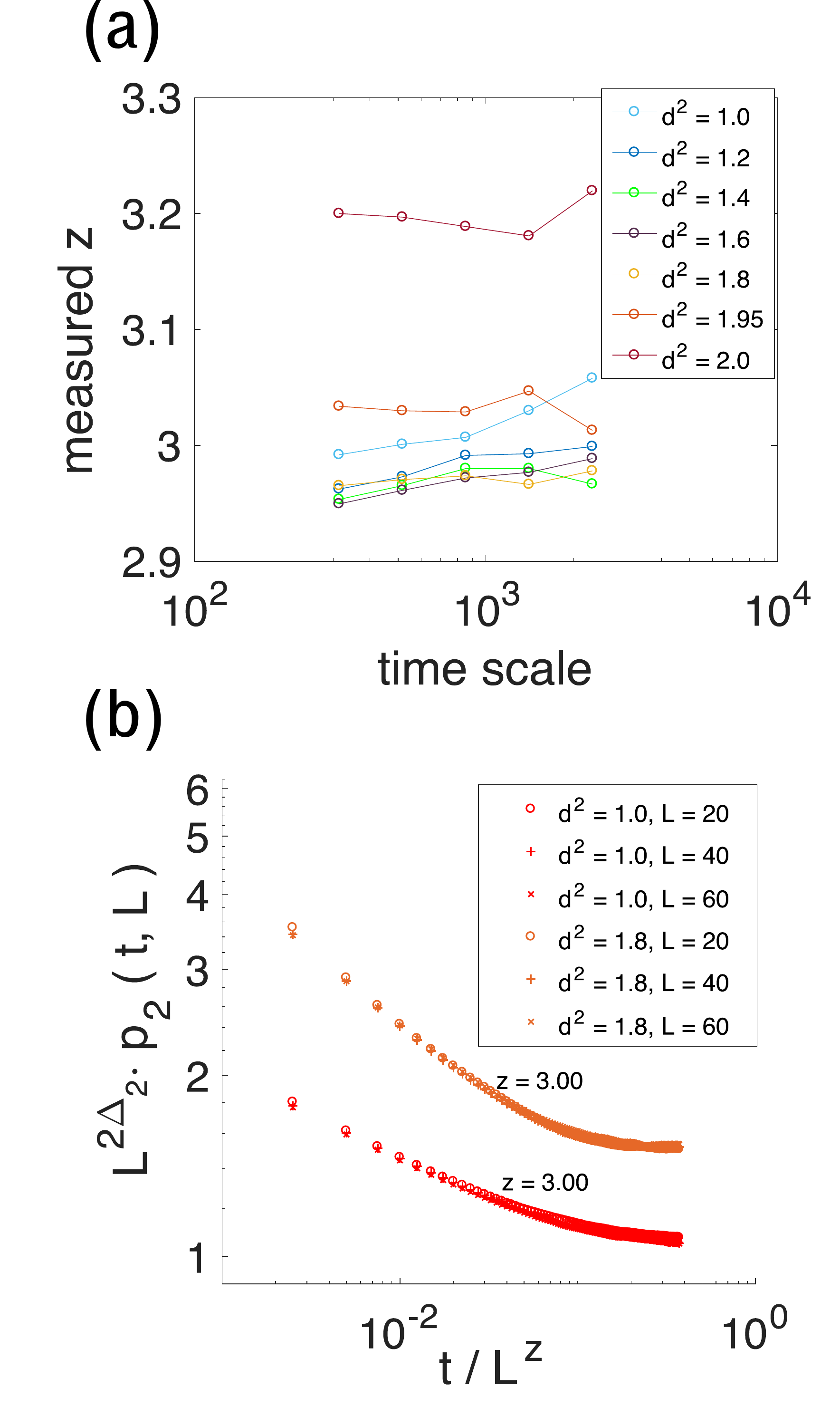}
\caption{Fig.~(a): estimates of $z$ from the data in Fig.~\ref{Fig: CorrX2xt}~(a). We divide the time $t$ into 5 ranges: $5.5\le\ln t<6.0$, $6.0\le\ln t<6.5$,\dots,$7.5\le\ln t<8.0$, and perform separate linear fitting to the log-log plot, to get dynamical exponent estimates on each scale. Results are consistent with $z=3.00(6)$ for all ${|d|^2<2}$. We expect that the larger deviation from 3 at $|d|^2 = 2$ is due to logarithmic finite-time corrections there (see text). 
In Fig.~(b), we plot the correlator $p_2(t)$ for smaller systems, with linear size $L = 20, 40, 60$. The scaling hypothesis requires ${p_2(t,L) = c'L^{-2\D_1}F(t/L^z)}$. Correlators for different $L$ collapse very well assuming $z=3$ for both $|d|^2= 1$ and $|d|^2 = 1.8$.}
\label{Fig:X2zfit}
\end{center}
\end{figure}

Next we use time-dependent correlation functions to give a more accurate measurement of the dynamical exponent.
Specifically, we consider here the probability, which we denote $p_2$, that two space-time points lie on the same worldsurface of a loop. 
This probability does not map to a local correlation function in the quantum model, but scale-invariance implies that it still obeys the scaling form in Eq.~\ref{eq:generalscalingform}, so it can be used to extract $z$.
In Sec.~\ref{Sec: operators and correlation functions} we perform similar analyses of conventional local  correlators in the quantum model.

First we check the expected equal-time scaling (which does not involve the dynamical exponent $z$). 
When the points are at equal time, $p_2(x)$ is the probability in the classical 2D ensemble that two points separated by $x$ (along the horizontal lattice direction) lie on the same loop. This scales as ${p_2(x)\sim x^{-2 x_2}}$, where ${x_2 = 2-d_f}$ depends on $|d|$ (see the discussion around Eq.~\ref{Eq: x2k}).
Our numerics for various values of $d$ in Fig.~\ref{Fig: CorrX2xt}(a), where the fitted exponent values are shown, are in good agreement with the expected values of $x_2$, except at ${|d|^2 = 2}$ where there is a larger error.
We attribute this to logarithmic finite-size effects which are present at the special value ${|d|^2=2}$ because of a marginally irrelevant scaling variable (see  Sec.~\ref{subsec: lowlying operator}).

Fig.~\ref{Fig: CorrX2xt}(b) shows the probability that a given spatial point lies on the same world surface at time $0$ and time $t$, which we denote $p_2(t)$. 
By the scaling hypothesis, we expect $p_2(t)\sim t^{-2x_2/z}$.
Results in Fig.~\ref{Fig: CorrX2xt}(b) are consistent with power-law decay in time. 
To test the scaling hypothesis and to put an error bar on $z$, we divide the time $t$ into 5 ranges: $5.5\le\ln t<6.0$, $6.0\le\ln t<6.5$,\dots,$7.5\le\ln t<8.0$, and perform separate linear fits to the log-log plot. 
The result in Fig.~\ref{Fig:X2zfit}(a) is consistent with $z=3$ for all values of $d$ (with an error bar $\simeq \pm 0.06$), except for the case $|d|^2 = 2$, where we expect logarithmic finite-time effects (Sec.~\ref{subsec: lowlying operator}). 
There is some indication of a systematic drift in $z$ as the timescale is increased, but nevertheless it seems $z$ is close to three.

Finally, in Fig.~\ref{Fig:X2zfit}(b), we
check the scaling ansatz for the full range of $t/L^z$, using smaller systems with $L\leq 60$ so that times of order $L^z$ can be accessed numerically. We expect the following scaling form (the correlation function is at equal position; the second argument denotes the system size)
\be
p_2(t,L) = L^{-2x_2}F(t/L^z), 
\ee
where $F$ is an unknown scaling function. Therefore we plot $L^{2x_2}p_2$ against $t/L^z$ for $L=20,40,60$. 
Correlators for the same $d$ and different $L$ are seen to collapse when we take ${z=3}$, consistent with this value of the dynamical exponent.
Different scaling curves are obtained for different $d$.

See Appendix~\ref{Appendix: numerical methods} for details of numerical methods.

\subsection{Ground state degeneracy}
\label{sec:gsd}

Does the energy of the first excited state scale as $E\sim L^{-z}$, with $z=3$? 
This is the naive expectation from scale invariance, and it is true on the sphere and on the disc with appropriate boundary conditions. 
However, on topologically nontrivial manifolds the loop model Hamiltonian (Eq.~\ref{eq:flippable}) has a nontrivial ground state degeneracy.
For a more general Hamiltonian in the same universality class,
this ground state degeneracy is lifted by \textit{dangerously irrelevant} operators, giving a gap that scales with an exponent larger than $z$.

On the torus, there are ${ \mathcal{O}(L^2)}$ different topological sectors, labeled by winding numbers of loops in the $x$ and $y$ directions (see Appendix~\ref{Appendix: GSD} for details).
Without reconnection, winding numbers are conserved, and there is at least one ground state in each sector. 

In fact, the frustration-free Hamiltonian on the torus has a more severe (though still subextensive) ground state degeneracy: there are $\sim \exp \lf {\text{const.}\times L} \ri$ `frozen' ground states which are fully packed by large loops all winding in the same direction (every site is visited by a winding loop). These frozen configurations do not have any allowed moves, and are ground states of the solvable Hamiltonian.

However, the degeneracy of these states with the  ground states of interest is an artifact of fine-tuning.
A \textit{generic} Hamiltonian in the same universality class will include irrelevant perturbations in the ultra-violet that are absent in the ideal Hamiltonian displayed in Sec.~\ref{sec:latticeH}. 
The irrelevant four-loop reconnection operator discussed in the following section is one  candidate for such an irrelevant operator.
These perturbations can push the frozen states up to an $O(1)$ energy.
This energy is $O(1)$ (rather than a negative power of $L$) because the frozen states already differ from the nontrivial ground state at the lattice scale, and because the coefficient of the irrelevant operators in the UV is generically $O(1)$.

The states with $O(1)$ winding 
(which are locally close to the ground state, since the winding number is $\ll L$)
also have their degeneracy lifted by `dangerously' irrelevant\footnote{They are dangerously irrelevant in the sense that despite being irrelevant they give the leading contribution to this gap.} operators. But instead of being pushed to $O(1)$ energies they acquire an anomalously small gap (smaller than $L^{-z}$).
In general, states with $O(1)$ winding numbers recombine into the new ground state and the anomalously low-lying states: these have a gap of order $L^{-w}$, where $w>z$ is determined by the RG eigenvalue $y_\text{irr}<0$ of the irrelevant operator and by its matrix elements in the space of low-lying states (the simplest possibility being $w = z+|y_\text{irr}|$).

Interestingly, we have these extra low-lying states only on topologically nontrivial manifolds, and they are related to the total winding number.
The quantum dimer model also has low-lying states (that can be split by dangerously irrelevant operators) associated with a winding number with a slightly different definition \cite{fradkin2004bipartite, henley2004classical}.
These states are reminiscent of the Anderson tower of states in the ordered phase of a quantum antiferromagnet, where states with spin $S$ have energy ${\sim S^2/\text{volume}}$, well below the gap to the Goldstone mode. However, the reason for low-lying states in the present model is purely topological.

At $d=\pm\sqrt{2}$, we can add a \textit{relevant} 3-loop reconnection operator which preserves the ground state on the sphere (and hence the equal-time scaling functions) but completely changes the dynamics of loops: see Sec.~\ref{Sec: Jones Wenzl} and Appendix~\ref{Appendix: GSD}. The low-lying states on the torus are also lifted, with only 9 of them remaining, corresponding to states in the doubled $SU(2)_2$ topological field theory~\cite{freedman2004class,freedman2005line} (Appendix~\ref{Appendix: GSD}).

\section{Superuniversality of dynamical exponents}
\label{sec:superuniversality}

A striking feature of our numerical results in Sec.~\ref{sec:znumerics} is that, while the scaling dimensions $x$ depend on $|d|$, the dynamical exponent $z$ seems to be independent of $d$.  
Here we show that this is in fact the generic expectation for any line of RG fixed points.

Assume we have a Hamiltonian $H_u$ that depends on a parameter $u$, and that in the IR this theory flows to an RG fixed line, with $u$ controlling the position on the fixed line. Let $z(u)$ be the dynamical exponent at the corresponding position on the fixed line. At large $L$, the energy gap on (say) the sphere\footnote{For the present class of models the energy gap scales with $L^{-z}$ on the sphere, but scales with a larger exponent on the torus because of almost-degenerate states that are split by dangerously irrelevant operators.} scales as 
\be
\Delta(u,L) = \f{A(u)}{L^{z(u)}}  + \ldots
\ee
where $A(u)$ is a nonuniversal constant. If we differentiate with respect to $L$, assuming that $A(u)$ and $z(u)$ and the subleading terms are well-behaved as a function of $u$,
\be\label{eq:gapderivative}
\f{\dd \Delta(u,L)}{\dd u}
\simeq
\f{ A'(u) - A(u) z'(u) \ln L }{ L^{z(u)} }.
\ee
Alternatively, we may compute the same quantity from the derivative of the Hamiltonian:\footnote{By the Feynman-Hellmann theorem, the derivative of the ground state or the excited state with respect to $u$ does not contribute.}
\be
\f{\dd \Delta(u,L)}{\dd u}
= 
\bra{\text{ex}} 
\f{\dd H_u}{\dd u}  
\ket{\text{ex}}
-
\bra{\text{GS}} 
\f{\dd H_u}{\dd u}  
\ket{\text{GS}}
\ee
where $\ket{\text{ex}}$ is the first excited state.
The right-hand side involves expectation values of a local perturbation summed  over space. From a standard coarse-graining argument,  we would expect the right-hand side to scale as  $L^{d-x}$, where the factor of $L^d$ comes from the spatial sum, and $x$ is the scaling dimension of the perturbation (which may be expressed in terms of scaling operators of the continuum theory). Since this perturbation is  marginal by assumption, $x=d+z$, this gives
\be\label{eq:nolog}
\f{\dd \Delta(u, L)}{\dd u} \sim 
L^{-z(u)}.
\ee
Note the absence of a logarithmic term in $L$. 

Therefore, assuming conventional scaling for the expectation value of  $\dd H_u / \dd u$, comparing with Eq.~\ref{eq:gapderivative} implies that $z'(u) = 0$. This is the fact stated above.

Could the conventional scaling expectation for $\dd H_u / \dd u$ break down?
We can certainly obtain logarithms in expectation values if the theory has a marginally \textit{irrelevant} perturbation in addition to the exactly marginal one. However generically we do not expect such a perturbation, and even if one is present for a given microscopic Hamiltonian, in most cases we can simply tune the Hamiltonian so that the coefficient of this perturbation vanishes (and  the logarithms go away). We can then repeat the argument to obtain the desired property for the RG fixed line.\footnote{There may be exceptional cases where the marginally irrelevant perturbation is dangerous and cannot be set to zero.}
In the very different context of nonunitary conformal field theories logarithms appear via a different mechanism, but we do not expect that to be relevant here.\footnote{In some classical statistical mechanics models described by nonunitary conformal field theories the action of the renormalization group transformation on the set of operators with a given scaling dimension is not diagonalizable, and this leads to logarithms in correlation functions \cite{cardy2013logarithmic}. 
We do not know whether something similar is possible in unitary but non-conformal quantum theories, but in any case we argue that it could not change the above result (assuming the RG flow terminates on a fixed line without dangerously irrelevant couplings). The simplest case would be a pair of operators $\mathcal{O}$, $\mathcal{O}'$ which under coarse-graining by a factor of $b$ transform as $\mathcal{O}\rightarrow b^{-x} \lf \mathcal{O}+ \mathcal{O}' \ri$, $\mathcal{O}'\rightarrow b^{-x} \mathcal{O}'$. In this case the one-point function $\< \mathcal{O} \>$ contains a logarithm. However if $\dd H_u /\dd u$ transformed like $\mathcal{O}$ does here, then  $\dd H_u /\dd u$ would not be an exactly marginal perturbation in the usual sense, simply because under RG it would generate the additional perturbation $\mathcal{O}'$ to the Hamiltonian.}

\section{Correlation functions and
scaling operators in the loop model} 
\label{Sec: operators and correlation functions}

In this section we discuss the operator content of the loop model, provide analytical scaling dimensions for various quantum operators and check them numerically.

We develop a systemic treatment that maps every equal-time quantum correlator to a sum of classical probabilities, from which we can read off the scaling dimensions of many operators. This formalism allows us to understand (at least in principle) every quantum operator with nonzero equal-time correlator, revealing the topological structure the in operator spectrum argued for on general RG grounds in Sec.~\ref{sec:topologicaloperatorclassification}.

This understanding is not quite the end of the story since (as noted in Secs.~\ref{sec:scalingforms}, ~\ref{sec:ffscalingoperators}) some nontrivial scaling operators have vanishing equal-time correlators. In the special case $d=1$, all operators have this property, but  we can obtain their scaling dimensions simply by taking the limit $d\rightarrow 1$ in formulas for generic $d$. 
For $d\neq 1$ a subset of operators are `hidden' operators whose  equal-time correlation functions vanish.
However we can determine the scaling dimensions of at least some of these by the logic in Sec.~\ref{sec:ffscalingoperators}, i.e. by relating them to perturbations of the associated classical Boltzmann weight.

One natural question is how many relevant or marginal perturbations the quantum loop models have. 
We will exhibit a set of low-lying (relevant or marginal) scalar operators that is plausibly complete.
We cannot prove that there are no other low lying scalar operators, because we cannot rule out the possibility of additional hidden operators that neither show up in equal-time correlation functions nor correspond to deformations of the classical Boltzmann weight.  However, numerical results discussed below are consistent with the hidden operators discussed above being the most relevant ones.

\subsection{Low-lying local operators}\label{subsec: lowlying operator}

Before giving a formal classification of operators, let us summarize some of the most important ones. 

Since we have a preferred basis for the Hilbert space, the loop occupation basis, we can distinguish diagonal and non-diagonal operators in this basis. They will have different interpretations in terms of fluctuating loops.\footnote{{ We can also neglect operators that create open strings: these have exponentially decaying correlators, as the Hamiltonian imposes a gap for strings with dangling ends.}}

When we consider equal time correlators, diagonal operators built from $\sigma^z$ map to local operators in the classical loop ensemble,
which can be expanded in terms of scaling operators in the conformal field theory for the loop model. This allows us to determine the scaling dimensions of the quantum operators.

At first glance, the simplest microscopic operator is the spin $\sigma^z$ on a link of the lattice.
Surprisingly, the leading continuum operator contributing to this lattice operator is not a scalar operator, 
but instead an operator of spin 2 under spatial rotations.
When inserted into equal-time correlation functions, this diagonal quantum operator maps to the stress tensor $T_{\mu \nu}$ of the two-dimensional classical theory. 
The point is that  $\sigma^z$ on a particular edge is invariant under reflection about this edge, but is not invariant under spatial rotations; it has the same symmetry as $T_{xx}$, where the $x$ axis is aligned with the bond direction. 
The 2D stress-energy tensor has scaling dimension 2 and is the most relevant operator with the required symmetry, and hence the leading contribution to the correlator of $\sigma^z$. 
Note that, since the spin-2 scaling operator in the quantum theory is not symmetric under rotations, it does not appear in a perturbed Hamiltonian if the perturbation respects the symmetry.

Fig.~\ref{Fig:PhysicalcorrelatorSz} shows numerical data for the  correlator of $\s^z$ at \textit{distinct} times. From the above result for the scaling dimension, and the scaling forms in Sec.~\ref{sec:scalingforms}, we expect this to decay as $t^{-4/z}$ with $z\simeq 3$, for all values of $d$. The data is consistent with this expectation.
In Ref.~\cite{troyer2008local} the loop models were  simulated using a different approach, and numerical scaling $\sim t^{-1}$ was reported for the $\s^z$ two-point function at $d=1$; this exponent is not too far from the $4/3$ we find.

To obtain a scalar operator (i.e. to isolate the subleading scalar contribution to $\sigma^z$) we can symmetrize the lattice operator under rotations, for example by taking the sum of $\sigma^z$ for the three spins surrounding a given vertex. 
The resulting operator simply measures whether or not a given site is visited by a loop. We denote it by $\sigma^z_\text{symm}$.\footnote{Note that there is no symmetry that changes the sign of $\sigma^z$, so it is not significant that the operator is odd in $\sigma^z$. Any generic lattice operator that is invariant under rotations and diagonal in the $\sigma^z$ basis will yield the same continuum operator as $\sigma^z_\text{symm}$.}
In the 2D theory there are two low-lying scalar operators, with dimensions:
\ba
x_{\text{symm}}&=4
&
x'_{\text{symm}} & = { \f{4-2g}{g} . }
\end{align}
The first of these is $\bar{T}T$, the product of the holomorphic and antiholomorphic components of the 2D stress tensor;
see Eq.~\ref{eq:gdef} for the definition of $g$ as a function of $|d|$.
Correspondingly we expect two scaling operators in the quantum theory, $\mathcal{O}_\text{symm}$ and $\mathcal{O}_\text{symm}'$,
both trivial under all symmetries, with these dimensions. 
Both will appear when we expand a lattice operator like $\sigma^z_\text{symm}$ in terms of continuum operators, and the one with the smaller dimension will dominate in correlation functions.
For $|d|^2<1$, this is $x_{\text{symm}}$, but for $|d|^2 >1$ it is $x'_{\text{symm}}$. 
For $|d|^2 \gtrsim 0.4$ both operators are relevant in the RG sense if used to perturb the Hamiltonian, i.e. $y=2+z-x$ is positive (Eq.~\ref{eq:ydef}) for both.

\begin{figure}[t]
\begin{center}
\includegraphics[width=0.4\textwidth]{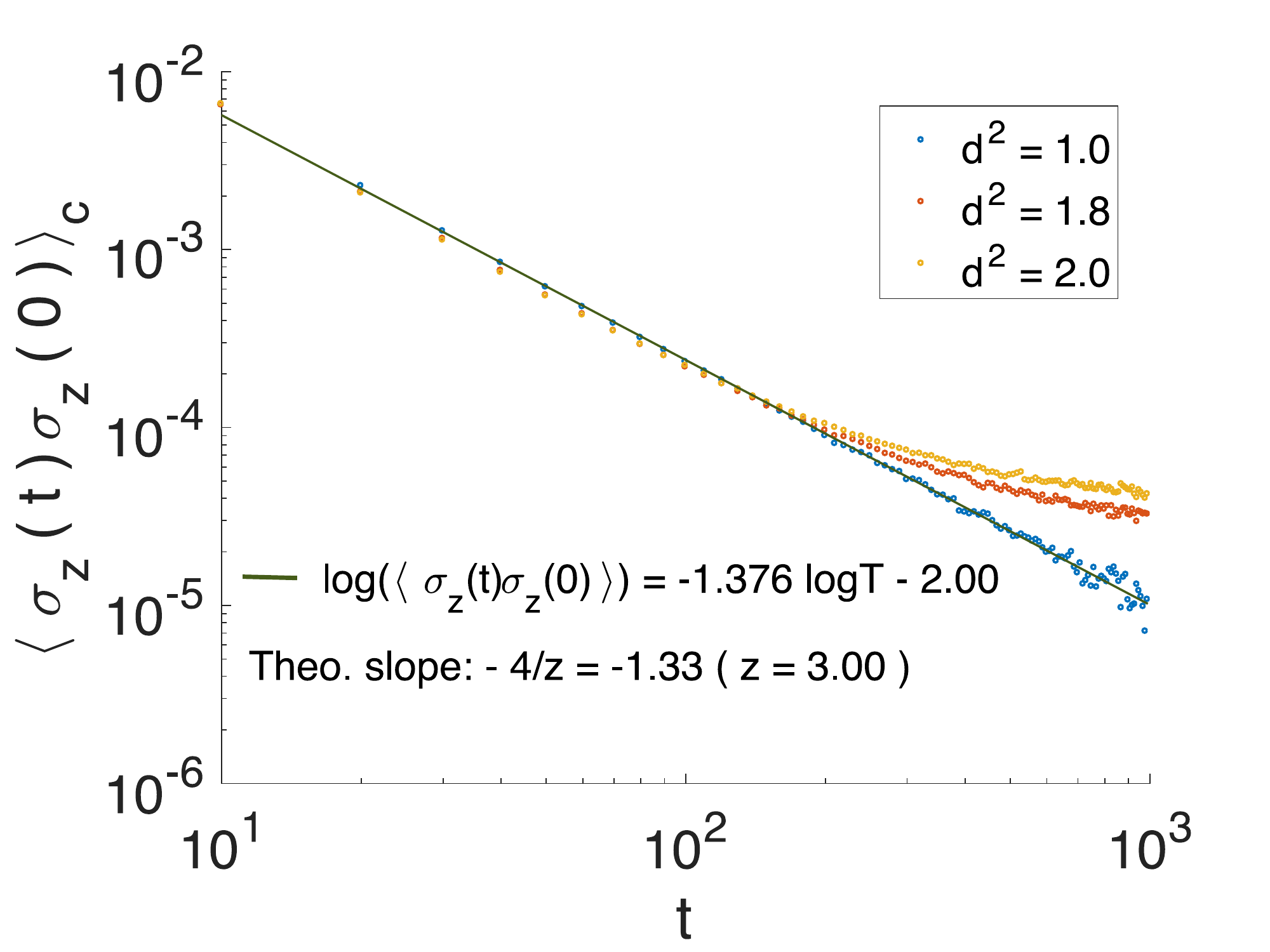}
\caption{The connected 2-point function of $\sigma^z$ on an edge (unsymmetrized), on a torus with $400\times 400$ plaquettes. Theoretically we expect a leading scaling dimension 2. The numerical result fits the expected power-law decay very well at smaller $t$. For the larger $t$, the numerical error for $\<\sigma^z\>$ becomes significant and $\<\sigma^z(t)\sigma^z(0)\>_c$ inherits this error. At $d=1$, we know $\<\sigma^z\>$ vanishes exactly, so this problem does not exist.
}
\label{Fig:PhysicalcorrelatorSz}
\end{center}
\end{figure}

\begin{figure}[t]
\begin{center}
\includegraphics[width=0.35\textwidth]{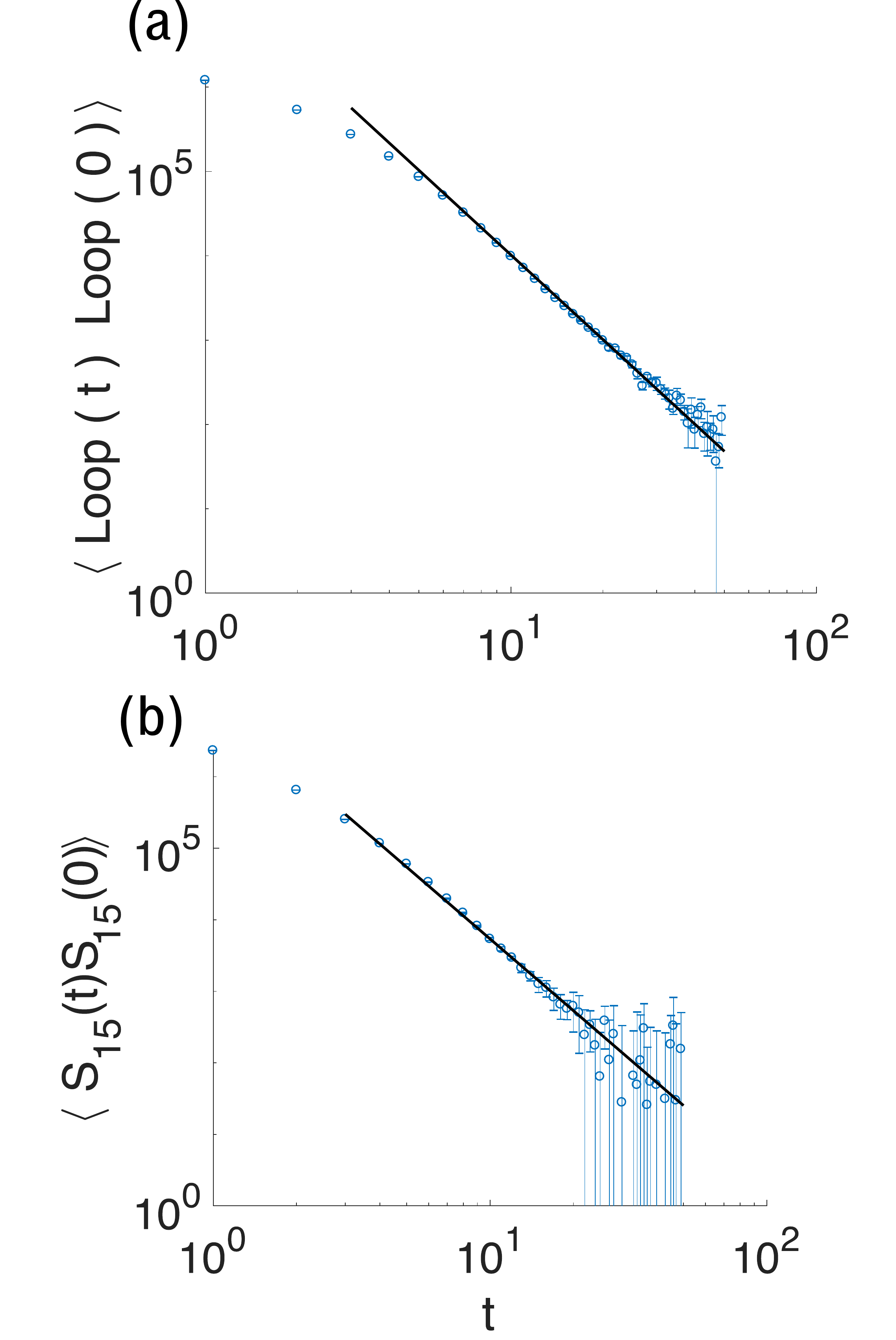}
\caption{Correlators of hidden operators measured at ${|d|^2=1.8}$. Fig.~(a): Correlator of the loop creation/annihilation operator, the hidden operator that preserves the frustration-free condition and induces the change of $|d|$ in the ground state. The black line indicates a power-law decay $t^{-2(z+2)/z}$, where we have set $z=3$ (ditto in Fig.~(b)).  Fig.~(b): Correlator of the lattice hidden operator $S_{15}$ defined in the text, which has no useful classical correspondence. Numerical results are consistent with a single exactly marginal continuum scaling operator dominating both of the lattice operators shown in this figure.}
\label{Fig:hiddenOperator}
\end{center}
\end{figure}

By the argument in Sec.~\ref{sec:ffscalingoperators}
there are also less-relevant quantum operators with dimensions ${x_\text{symm}+z}$ and ${x_\text{symm}'+z}$ obtained using the anticommutator with the Hamiltonian density; these are irrelevant, except at $|d|^2=2$ when the latter becomes marginal (though not exactly marginal), reflecting the existence of a marginal perturbation in the classical ensemble at $|d|^2=2$.
(This is the source of the logarithmic finite-size corrections mentioned in Sec.~\ref{sec:znumerics}.)

The definition of a local operator also depends on whether we interpret the loops as chains of flipped spins or as Ising domain walls. In the latter case the Ising spin, which we will denote $\tau^z$, gives an additional low-lying local operator (which is odd under the Ising symmetry that exists in that representation). This maps to a simple kind of twist operator in the loop model: the correlator $\<\tau^z(r) \tau^z(0)\>$ measures the parity of the number of loops crossing a line between $r$ and $0$. This operator has scaling dimension
\be
x_{\tau}= { \f{3-2g}{2g}}.
\ee

We turn now to non-diagonal operators. We have already discussed (Sec.~\ref{sec:ffscalingoperators}) two exactly marginal operators, with $x=2+z\simeq 5$, which, when added to the Hamiltonian, change the magnitude and phase of $d$ by modifying the loop creation/annihilation term in the Hamiltonian. These operators have nonzero matrix elements between states with and without a small loop. They do not reconnect large strands, however, so they are topologically trivial according to the classification in Sec.~\ref{sec:topologicaloperatorclassification}. 

For a numerical calculation of the  temporal correlator for one of these operators, see Fig.~\ref{Fig:hiddenOperator}(a). This is the operator which we perturb the the Hamiltonian by when we change $|d|$.\footnote{For real $d$, we can define the operator as $\operatorname{Loop} \propto \left( \begin{array}{cc}
2d & -1\\
-1 & 0
\end{array} \right)$ in the basis of an empty plaquette state and a small loop state.}
Results are consistent with the expected power-law decay in time.
By the argument in Sec.~\ref{sec:ffscalingoperators}, this operator is a hidden operator with vanishing equal-time correlator.

\begin{figure}[t]
\begin{center}
\includegraphics[width=0.4\textwidth]{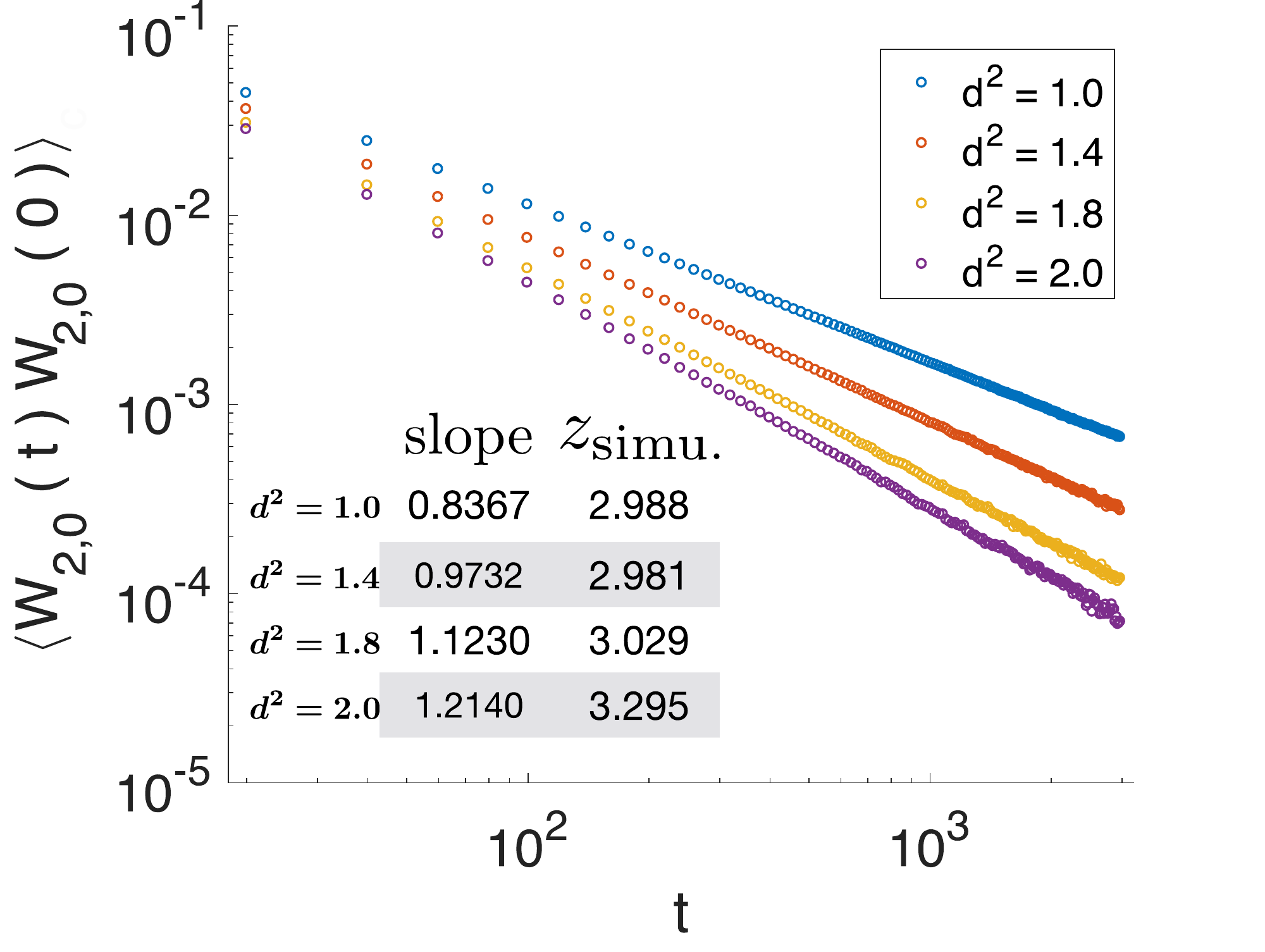}
\caption{Temporal correlation function of the 2-loop reconnection watermelon operator $W_{2,0}$ on a torus with $400\times 400$ plaquettes, for $d=1.4,~1.8,~2.0$. For $d=1$, where the correlator is zero, we instead plot its derivative with respect to $d$. These results are consistent with the analytical result for the scaling dimension of the 2-loop reconnection operator and the previous result $z=3.00(6)$ from Fig.~\ref{Fig:X2zfit}(a) (noting that for $|d|^2 = 2$ we expect a log correction).
}
\label{Fig: PhysicalcorrelatorX4}
\end{center}
\end{figure}

The most relevant topologically \textit{nontrivial} operator (reconnection operator) is the two-loop reconnection operator with spin zero, which we call $W_{2,0}$,\footnote{See the definition and detailed discussions in Sec.~\ref{sec:constructreconnectionoperators}. Reconnection operators with nonzero spin are also discussed in in Sec.~\ref{sec:constructreconnectionoperators}.} whose scaling dimension is
\be
x_{4,0} =\f{4g^2-(1-g)^2}{2g} 
\ee
as we show in the next subsection. 
This operator is strongly RG-relevant if added to the Hamiltonian. 
This operator is the leading contribution to the operator $\prod \sigma^x$ which flips all the spins around a plaquette.
The temporal correlator of the two-loop reconnection operator (whose lattice definition is given in the next subsection) is shown in Fig.~\ref{Fig: PhysicalcorrelatorX4}.
At $d=\sqrt{2}$ our result implies a spatial two-point function that decays as $r^{-4}$, which differs from the value close to three quoted from numerical simulations in Ref.~\cite{troyer2008local}: this may be a result of finite size effects in that study.

The  three-loop reconnection operators (with spin zero) have the larger scaling dimension
\be
x_{6,0} = \f{9 g^2-(1-g)^2}{2g}.
\ee 
Three-loop reconnection operators act within a disc with six incoming strands (recall Fig.~\ref{Fig: label spectator}). 
The number of topologically distinct reconnection events increases with the number of strands. 
As a result, while there is only a single spin-zero two-loop reconnection operator,\footnote{Barring possible hidden operators.} 
there are three distinct three-loop reconnection operators with spin zero.
Of these, two are even, and one is odd, under time reversal and parity (reflection in a spatial axis). Therefore if we retain symmetry under spatial rotations and reflections, there are two distinct three-loop reconnection operators that can be used to perturb the Hamiltonian. They are RG relevant, since the scaling dimension $x_{6,0}$ above is smaller than $2+z$.

At the special point $d=\sqrt 2$, both of these three-loop reconnection operators become `hidden' operators (whose equal-time two-point function vanishes). A linear combination of them, called the `Jones Wenzl projector' \cite{freedman2004class,freedman2005line}, can be added to the Hamiltonian \textit{without changing the ground state}. Nevertheless it is a relevant perturbation that leads to a new universality class for the quantum dynamics. We discuss this new universality class in Sec.~\ref{Sec: Jones Wenzl}.

In the original family of universality classes, higher reconnection operators, starting with the four-loop reconnections, are RG-irrelevant (with the exception of the regime $|d|^2\lesssim 0.8$ when four-loop reconnection is relevant). 
The scaling dimension of a spin-zero $k$-loop reconnection operator is
\be
x_{2k, 0}=\f{g^2 k^2-(1-g)^2}{2g}.
\ee

In Table~\ref{optable} we list all the perturbations that we have found which preserve all the spatial symmetries of the lattice model (though not necessarily the dynamical topological constraint), and which are relevant or marginal for $0.8\lesssim |d|^2 <2$. We ignore redundant operators which (if used to perturb the Hamiltonian) can be absorbed in a re-scaling of space and time coordinates. 

\begin{table}[t]
\centering
\begin{tabular}{ |p{1cm}||p{5.2cm}|p{1.5cm}| }
 \hline
  & Description &
  Scaling dimension\\
 \hline
 \hline
 $\mathcal{O}_\text{symm}$   &  diagonal    & 4\\
 $\mathcal{O}_\text{symm}'$&   diagonal   & $\frac{4-2g}{g}$\\
 $W_{2,0}$ & 2-loop reconnection & $\frac{4g^2-(1-g)^2}{2g}$\\
 $W_{3\a,0}$ & 3-loop reconnection & $\frac{9g^2-(1-g)^2}{2g}$\\
 $W_{3\beta,0}$& 3-loop reconnection & $\frac{9g^2-(1-g)^2}{2g}$\\
 $\operatorname{Loop}$ &  `hidden' operator for changing $|d|^2$ & $2+z$\\
$\operatorname{Loop}'$ &  `hidden' operator for changing $\operatorname{arg} d$ & $2+z$ \\
  \hline
\end{tabular}
\caption{Relevant or marginal operators that preserve all symmetries, excluding redundant operators. (For $|d|^2\lesssim 0.8$, 4-loop reconnection also becomes relevant; at $|d|^2=2$ there is an additional marginally irrelevant operator.)}.
\label{optable}
\end{table}

As we will discuss in Sec.~\ref{subsec:dilutecriticalpoint}, there is another $d$-dependent family of fixed points for the quantum loop models that is related to the so-called `dilute' phase of the classical loop ensemble; there the scaling dimensions of the reconnection operators are larger.

In the following subsections we present a classification of local operators up to hidden operators.
We find no other (symmetric, non-redundant) operators that are relevant or marginal.

To test whether there could be any other relevant \textit{hidden} operators we must resort to numerics. Here we perform a partial test at $|d|^2 = 1.8$, considering only diagonal operators. 
We calculate the temporal correlator of a lattice operator which we denote $S_{15}$.
This is a diagonal operator defined on the six links of a hexagon: it is equal to $+1$ if a total of 5 of the 6 links are occupied, to $-1$ if 1 of the links is occupied, and $0$ otherwise.
This operator is a sum of hidden operators (according to the definition in Sec.~\ref{sec:ffscalingoperators}) and its equal-time correlator vanishes.
When it is expanded in continuum operators only hidden scaling operators will appear, and its temporal correlator will be dominated by the leading hidden operator that appears (assuming this has a nonvanishing temporal correlator).

Fig.~\ref{Fig:hiddenOperator} shows that the scaling dimension of this operator is consistent with the marginal value $x=2+z$. This is consistent with the leading hidden operators being the marginal ones $\operatorname{Loop}$ and $\operatorname{Loop}'$ discussed above.

\subsection{Operator equivalence relation}
\label{sec:opequivalence}

In this subsection, we introduce an equivalence relation among local operators. This equivalence relation allows us to write operators in an intuitive standard form that simplifies the computation of correlation functions.

\begin{figure}[t]
\begin{center}
\includegraphics[width=0.45\textwidth]{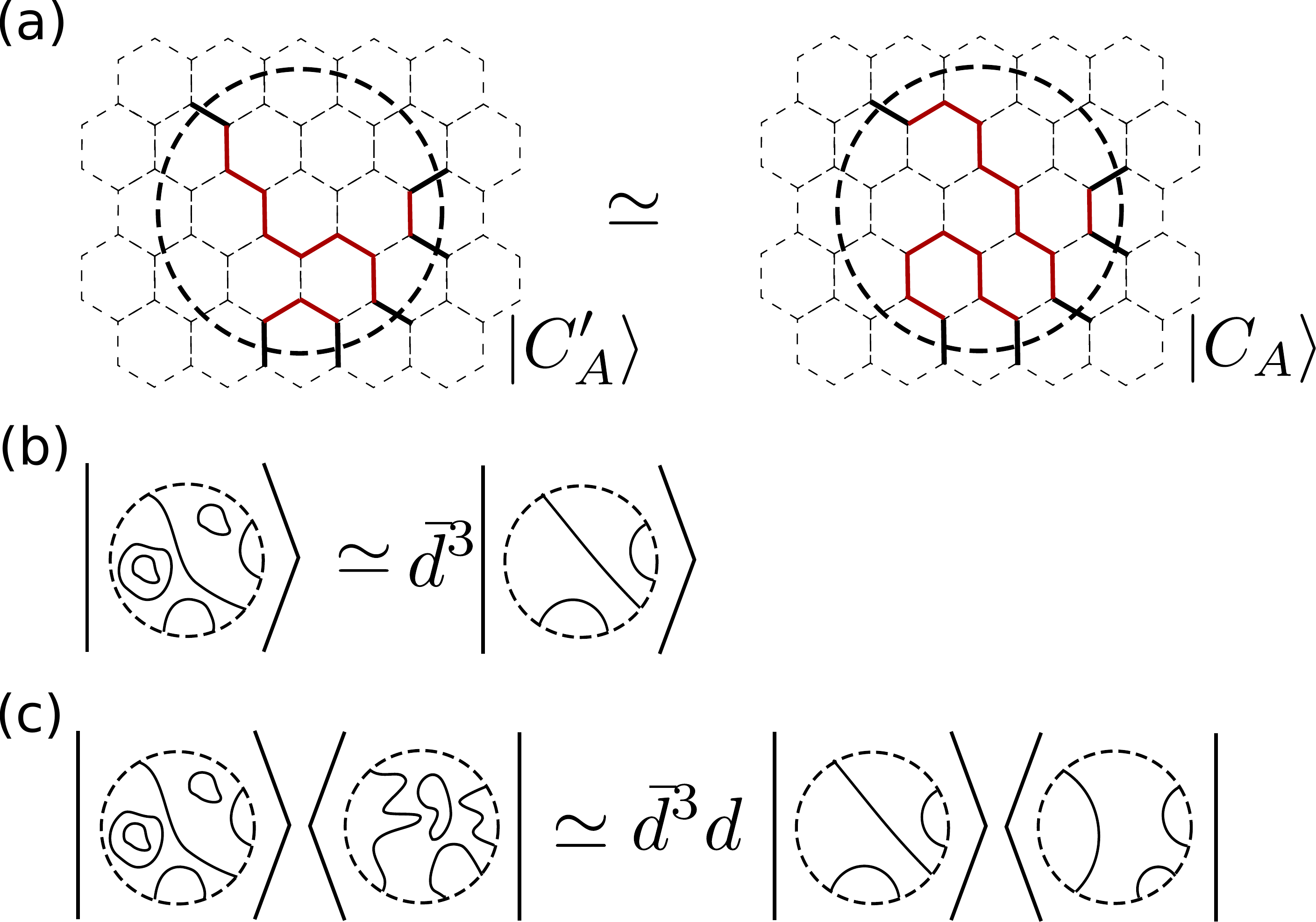}
\caption{Fig.~(a-b) Equivalence relation for states. Two states are defined to be equivalent if their difference is a `forbidden state'. Fig.~(c) Equivalence relation between operators. Two operators are defined to be equivalent if their difference is a hidden operator.}
\label{Fig: state operator equivalence}
\end{center}
\end{figure}

First we introduce an equivalence relation among states defined on a local patch $A$, e.g. the dashed circle  in Fig.~\ref{Fig: state operator equivalence}(a). We define two states to be equivalent if and only if their difference is a hidden state, i.e. if their difference is a state which is  orthogonal to the ground state reduced density matrix $\rho_A$ for patch $A$. This equivalence has a geometrical interpretation as will be clear below.

For example, the two states in Fig.~\ref{Fig: state operator equivalence}(a) are equivalent to each other, which we write as
\bea \label{eq:equivrel}
|C'_A\> \sim |C_A\>.
\eea
The two configurations are related by a deformation of the strands in the interior of $A$ without changing their connections. For any configuration \textit{outside} disc $A$, $C_{\bar{A}}$, the number of loops in the full configuration, which we denote $C_AC_{\bar{A}}$, is always the same as that in $C'_AC_{\bar{A}}$. Therefore these two configurations have the same amplitude in the ground state,
\bea 
\<\text{GS}|C'_AC_{\bar{A}}\> - \<\text{GS}|C_AC_{\bar{A}}\> = 0
\eea 
(for any $C_{\bar{A}}$), and so
\bea 
\rho_A (|C'_A\> - |C_A\>) = 0,
\eea 
which is the meaning of Eq.~\ref{eq:equivrel}.

More generally, if $C'_A$ has $n$ extra small loops compared to $C_A$, as in the example in Fig.~\ref{Fig: state operator equivalence}(b), then using the ground state wavefunction we have
\bea 
\rho_{A}\lf |C'_A\> - \bar{d}^{n}|C_A\> \ri = 0,
\eea 
hence
\bea 
|C'_A\> \sim \bar{d}^{n}|C_A\>.
\eea 

Let $\a_A$ denote an equivalence class of configurations inside $A$. The equivalence class is specified by the positions of strand endpoints on the boundary of the disc $A$, and the topological information about their connections by strands within $A$.
It is useful to pick a reference configuration, without any small loops, for each equivalence class: we call the corresponding state $|\a_A\>$. Any configuration $C_A$ in the equivalence class can be reduced to the reference configuration by removing the small loops and deforming the strands.

\begin{figure}[t]
\begin{center}
\includegraphics[width=0.35\textwidth]{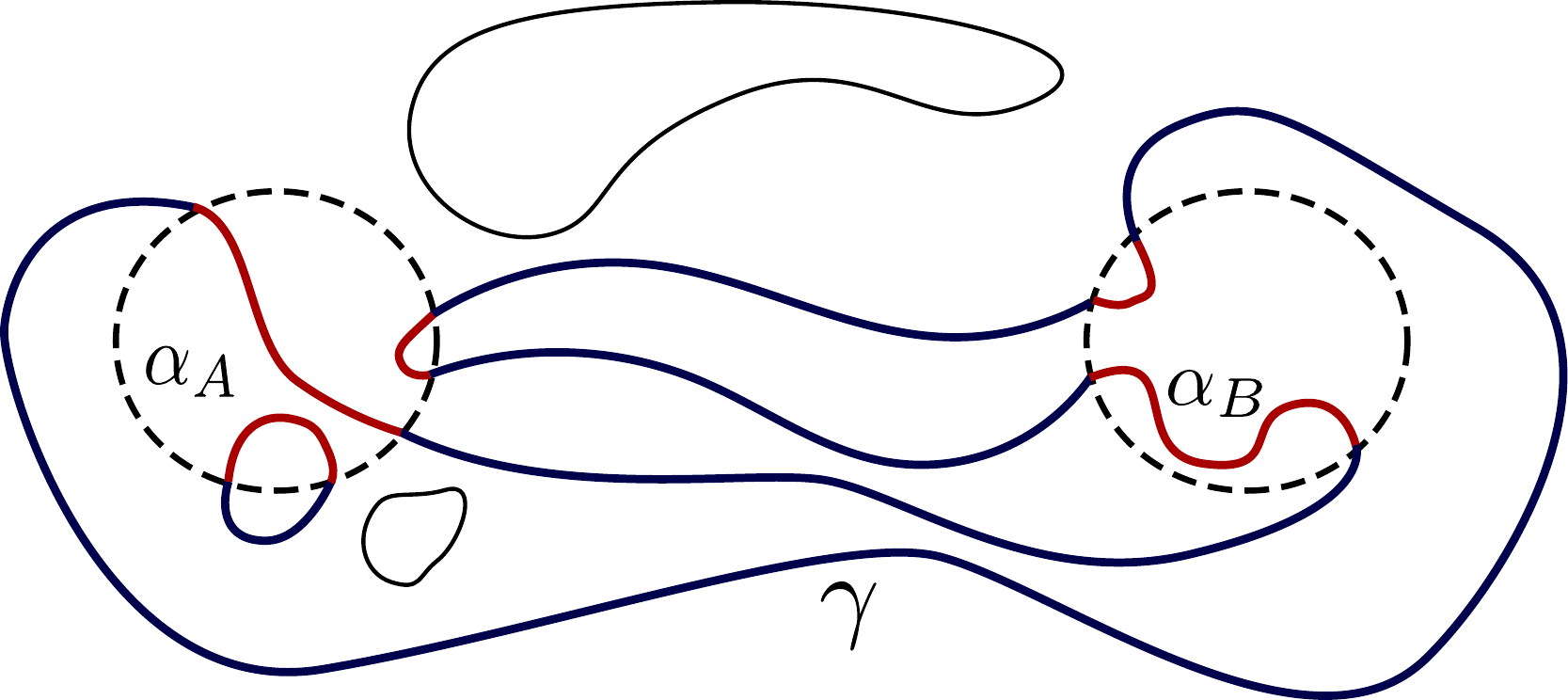}
\caption{Local operators and their 2-point functions. We label the connections of end points and the position of end points, by $\a_A$, $\a_B$ (red lines, inside disc A/B), and $\g$ (blue lines, outside disc A and B). In this configuration, the number of loops passing through the two discs, which we call $[\a_A\g\a_B]$, is 2. The number of loops completely outside the two discs, which we call $[C_{\g}]$, is also 2.}
\label{Fig:alphaGammaConnection}
\end{center}
\end{figure}

Next, we define two local operators on disc $A$ to be equivalent {($\mathcal{O}\sim \mathcal{O'}$)} if their difference is a hidden operator as defined in Sec.~\ref{sec:ffscalingoperators}. Since hidden operators have vanishing equal-time correlators with all operators outside $A$ (Sec.~\ref{sec:ffscalingoperators}), equivalent operators have exactly the same equal-time correlators with every operator outside disc $A$.

Given an arbitrary operator $O_A$ on disc $A$, written in terms of states in $A$ as
\bea
O_A = \sum_{C'_A,C_A}O_{C_A,C'_A}|C'_A\>\<C_A|,
\eea
we can always reduce it to an equivalent operator acting on the smaller space spanned by the reference states $|\a_A\>$ [this is illustrated in Fig.~\ref{Fig: state operator equivalence}(c)]
\bea 
O_A \sim \sum_{\a_A,\a'_A} \widetilde{O}_{\a'_A,\a_A}|\a'_A\>\<\a_A|,
\label{Eq:operatorEquivalence}
\eea
by subtracting hidden operators of the form
\begin{align*}
\lf |C'_A\> - \bar{d}^{n}|{\a_A'}\>\ri &\<C_A|, &
|C'_A\>&\lf \<C_A| - d^{m}\<\a_A| \ri,
\end{align*}
where $m$ and $n$ are the numbers of small loops in configurations $C_A$ and $C'_A$.

Thus, as far as equal-time correlators are concerned, we only need to study the operators on the right hand side of Eq.~\ref{Eq:operatorEquivalence}.

\subsection{Map between equal-time quantum correlators and classical probabilities}

Using the operator equivalence relation introduced in the last subsection, we now prove an important property of equal-time correlators of quantum operators: every equal-time quantum correlator is equal to a sum of classical `geometric' correlators.
We focus on 2-point functions, but the formalism applies to arbitrary $n$-point functions, in particular to 1-point functions on the disc (with specified boundary conditions) which we discuss in Sec.~\ref{subsec:1ptfunction}.

We calculate the 2-point function of two arbitrary operators $O_A$ and $O_B$ supported on disjoint regions $A$ and $B$.
Making use of  Eq.~\ref{Eq:operatorEquivalence}, we can take these operators to have nonzero matrix elements only between reference states $\ket{\alpha_A}$ (and similarly for $B$) of the type described above.

Just as we label the positions of endpoints on the boundary of disc $A$ and their topological connectivity inside $A$ by $\a_A$,
and similarly for $\a_B$ in disc $B$,
let us label the connectivity of these points outside both discs by $\g$: see Fig.~\ref{Fig:alphaGammaConnection}.\footnote{{$\g$ contains the information about the positions of the endpoints on the boundaries of $A$ and $B$, plus the purely topological information about how these endpoints are connected in the exterior.}}
We also use $C_{\g}$ to denote a  configuration outside the two discs that is in equivalence class $\g$  {(so that the sum $\sum_{C_{\g}}$ is implicitly restricted to configurations with connectivity $\g$).} 

Note that the number of loops \textit{that pass through $A$ and/or $B$} does not depend on the specific configuration $C_\g$, but only on its equivalence class $\gamma$; we denote this number by $[\a_A\g\a_B]$. We denote the number of loops completely \textit{outside} both $A$ and $B$ by $|C_\g|$. The total number of loops is then
\bea 
|C| = [\a_A\g\a_B] + |C_\g|.
\eea 
With this notation, we can organize the 2-point function,
\begin{align}
\<O_AO_B\>& = \f1Z\sum_{C,C'}\bar{d}^{C'}d^{C}\<C'|O_AO_B|C\>,
\end{align}
as follows (we use the fact that the operators only act on the reference states within the discs):
\begin{align}\notag
\<O_AO_B\>& =\sum_{\g} \Bigg[   \left(\frac{1}{Z} \sum_{C_{\g}}|d|^{2|C_\g|}\right) \times \\  
&\sum_{\substack{\a_A,\a_B,\\ \a'_A,\a'_B}} \bar{d}^{[\a'_A\g\a'_B]}d^{[\a_A\g\a_B]}\<\a'_A\a'_B|O_AO_B|\a_A\a_B\>  \Bigg].
\end{align}
We now make definitions for the two factors appearing in the square bracket above:
\begin{align}\label{Eq:2ptfunctionfinalexpression}
\<O_AO_B\>&=\sum_{\g} \, \tilde{p}_{\g}(\mathbf{r})\cdot\<O_AO_B\>_\g,
\end{align}
with ($\mathbf{r}$ is the separation of the two discs)
\ba\label{eq:tildepdef}
 \widetilde{p}_{\g}(\mathbf{r}) \equiv\frac1Z\sum_{C_{\g}}|d|^{2|C_\g|}
\end{align}
and
\ba \label{Eq:OAOBgammadefinition}
 \<O_AO_B\>_\g  \equiv 
 \hspace{-1.6mm}
 \sum_{\substack{\a_A,\a_B,\\ \a'_A,\a'_B}}
 \hspace{-1.6mm}
 \bar{d}^{[\a'_A\g\a'_B]}d^{[\a_A\g\a_B]}
 \<\a'_A\a'_B|O_AO_B|\a_A\a_B\>.
\end{align}
This rewriting separates out two conceptually different types of contribution to the correlator. 

First, the quantity $\widetilde{p}_{\g}(\mathbf{r})$ is a `geometrical correlator' in the classical ensemble which is independent of the operators.
More precisely, it is given (up to a positive and $\mathbf{r}$-independent proportionality constant which will not concern us in the following\footnote{{The  probability of $\gamma$ in the classical ensemble is $p_\gamma(\mathbf{r})=\sum_{C_\gamma}\sum_{C_A,C_B} |d|^{2|C_\gamma|+ 2[C_A\gamma C_B]}$, which is $\widetilde{p}_\gamma(\mathbf{r}) \times  \sum_{C_A, C_B} |d|^{2[C_A\gamma C_B]} $. The proportionality constant depends on $\gamma$ but not on $\mathbf{r}$.}})
by the classical probability $p_{\g}(\mathbf{r})$ that the boundaries of $A$ and $B$ are connected as $\g$ indicates. 

Second, $\<O_AO_B\>_\gamma$ is the `topological' part of the correlation function, conditioned on the connectivity outside being $\gamma$. This term depends on the matrix elements of the operators. The quantity $[\a_A\g\a_B]$ is well-defined only when the positions of end points match: for convenience, we have formally extended the definition to arbitrary $\a_A$, $\a_B$ and $\g$ by defining $d^{[\a_A\g\a_B]}=0$ when the positions do not match.

This seemingly tautological rewriting of correlation functions is the foundation for the rest of this section. It reduces an arbitrary quantum correlator to a sum of classical probabilities which know nothing about quantum operators. On the other hand, $\<O_AO_B\>_\g$ is purely topological, without any dependence on $\mathbf{r}$ or the shape of loops: it depends only on the number of loops passing through the discs, before and after the action of $O_A$ and $O_B$. The important features are that local operators $O_A$ and $O_B$ together can detect the nonlocal connectivity $\g$ of the underlying loops, 
and that the topologically distinct reconnection moves performed by operator $O_A$ can be `detected' by another operator $O_B$ far away. 
However, it remains to construct operators for which the sum in Eq.~\ref{Eq:2ptfunctionfinalexpression} simplifies, for example to a single $\gamma$. We do this next.

\subsection{Watermelon operators and topological types}
\label{sec:constructreconnectionoperators}

In this subsection, we study a class of quantum operators which we call quantum watermelon operators,  whose two-point functions are classical geometrical probabilities known as watermelon correlation functions.
We find a 1-to-1 correspondence between the leading watermelon correlators and the topological types of reconnection operators introduced in Sec.~\ref{sec:topologicaloperatorclassification}.

\begin{figure}[t]
\begin{center}
\includegraphics[width=0.45\textwidth]{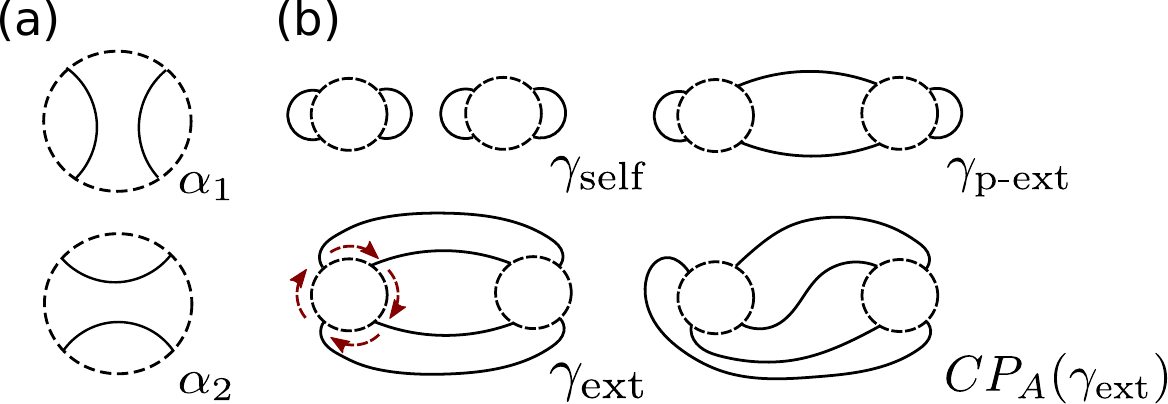}
\caption{Possible connections inside and outside discs with 4 end points. In Fig.~(a), we show the only two possible connections inside the disc with 4 end points, $\a_1$ and $\a_2$. These two connections give a 4 dimensional space of local operators with 4 end points, spanned by $|\a_i\>\<\a_j|$, $i,j=1,2$. In Fig.~(b), we show four possible connections outside the two discs. There are 24 possible connections in total, but each of them is equivalent to one of the first 3 connections in Fig.~(b) up to cyclic permutation of end points on disc $A$ (left) and on disc $B$ (right). For example, the 4th connectivity in Fig.~(b) is related to the 3rd by $CP_A$ (illustrated with red arrows), a cyclic permutation of end points on disc $A$. $CP_A$ also acts on the connections inside disc $A$, exchanging $\a_1$ and $\a_2$.}
\label{Fig: 2 loop reconnection CPA}
\end{center}
\end{figure}

In the notation of the last section, the classical watermelon correlators are the probabilities of the `fully extended' connections $\g$, for which discs $A$ and $B$ have an equal number of end points, say $2k$, and every end point on the boundary of $A$ is connected to an endpoint on $B$.
The two configurations in the bottom row of Fig.~\ref{Fig: 2 loop reconnection CPA}(b) are examples of fully extended configurations for $k=2$, whereas those in the top row are not fully extended. 

In general, for $2k$ \textit{fixed} end points on both disc $A$ and disc $B$, there are $2k$ fully extended connections (since loops cannot cross), related by the transformation generated by the cyclic permutations of end points on disc $A$ (which we denote by $CP_A$, as shown in the bottom of Fig.~\ref{Fig: 2 loop reconnection CPA}(b)). By Fourier transform, we can write the generalized watermelon correlators with spin $l = {-k+1},\dots, k$ as:
\bea 
p_{k,l}(\mathbf{r})\equiv\sum_{n=0}^{2k-1}e^{i\pi ln/k}
\,
\tilde{p}_{CP_A^n(\g_\text{ext})}(\mathbf{r}),
\eea 
where $\widetilde{p}(\mathbf{r})$ is defined in Eq.~\ref{eq:tildepdef}, and  $\g_\text{ext}$ can be any of the $2k$ fully extended connections, since repeated actions of $CP_A$ on $\g_\text{ext}$ generates all $2k$ connections. In the 2D CFT of the loop ensemble, the leading contributions to $p_{k,l}(\mathbf{r})$ at large $\mathbf{r}$ are correlators of defect operators with spin $l$ (mod $2k$).\footnote{For readers familiar with the Coulomb gas language, the spin-$l$ operator has magnetic charge $\pm k$ (emitting/absorbing $2k$ strands), and electric charge charge $l/k$ (meaning it gains a phase $e^{\pm i\pi l/k}$ when there is an extra loop surrounding it).}
In the scaling limit, they decay as 
\bea
p_{k,l}(z) \propto \frac{1}{z^{x_{2k,l} -l/2}\ \bar{z}^{x_{2k,l} + l/2}},\ l\neq k
\label{Eq:generalizedWatermelonSpin}
\eea 
where $z = x +i y$ is the complex coordinate of the point $\mathbf{r}$ and
\bea \label{eq:watermelonscalingdim}
x_{2k,l} = \frac{gk^2}{2} + \frac{l^2}{2gk^2} - \frac{(1-g)^2}{2g}.
\label{Eq:x2kl}
\eea 
The case $l=k$ is special because the spin-$k$ and spin-$(-k)$ correlators have the same scaling dimension. In that case,
\bea
p_{k,k}(z) \propto \text{Re} \, \frac{1}{z^{x_{2k,k} -k/2}\ \bar{z}^{x_{2k,k} + k/2}}.
\label{Eq:generalizedWatermelonkk}
\eea

With this understanding, we now want to look for a quantum operator $W_{k,l}$ in disc $A$ whose correlator is just the spin-$l$ watermelon correlator.
We restrict to operators in the space of states with precisely $2k$ endpoints on the boundary of the disc, and we take the positions of these endpoints to be fixed.

We define an operator $W_{k,l}$ on disc $A$ to be a quantum $2k$-leg watermelon operator if and only if
\begin{align}\notag
&\<W_{k,l}\cdot O_B\>_\g = 0 & & \text{for all $O_B$, and all $\g$ with} \\
& && \text{a self contact on disc $A$.}
\label{Eq:watermelonOperator1}
\end{align}
Further we impose
\be\label{Eq:watermelonOperator2}
CP_{A}(W_{k,l}) = e^{i\pi l/k}W_{k,l}.
\ee
$CP_A$ acts on operators by rotating the labels $\a_A$. For example, in Fig.~\ref{Fig: 2 loop reconnection CPA}(a), $CP_A(\a_1) = \a_2$.

The first condition, Eq.~\ref{Eq:watermelonOperator1}, ensures that only fully extended connections show up in correlators involving $W_{k,l}$ (see Eq.~\ref{Eq:2ptfunctionfinalexpression}); 
the second, Eq.~\ref{Eq:watermelonOperator2}, ensures that $W_{k,l}$ carries the desired spin (mod $2k$). 
(Each operator satisfying the first condition can be decomposed into watermelon operators in different spin channels.)
Since $\<O_AO_B\>_\g$ is just a polynomial of $d$ and $\bar{d}$ with total rank at most $2k$, the search for watermelon operators becomes a combinatorial task.

We prove the general results on watermelon operators, i.e. the number of them and the relation with topological types, in Appendix~\ref{Appendix:topologicaloperatorclassification}. Here we illustrate the idea with a few examples.

First consider $k=2$ (Fig.~\ref{Fig: 2 loop reconnection CPA}). With fixed end points, there are only 2 different connections inside disc $A$, $\a_1$ and $\a_2$. We can form 4 linearly independent quantum operators, and they belong to two topological types: those that reconnect 2 loops ($R_2$) and those that cannot reconnect loops ($D$). We organize them by the phase acquired under $CP_A$ (either 0 or $\pi$):
\bea
D_0 &=& |\ \raisebox{-0.3ex}{\rotatebox{90}{$\asymp$}}\ \>\<\ \ \raisebox{-0.3ex}{\rotatebox{90}{$\asymp$}}\ | + |\asymp\ \>\<\ \asymp|, \\
D_\pi &=& |\ \raisebox{-0.3ex}{\rotatebox{90}{$\asymp$}}\ \>\<\ \ \raisebox{-0.3ex}{\rotatebox{90}{$\asymp$}}\ | - |\asymp\ \>\<\ \asymp|,\\
R_{2,0} &=& |\ \raisebox{-0.3ex}{\rotatebox{90}{$\asymp$}}\ \>\<\ \asymp| + |\asymp\ \>\<\ \raisebox{-0.3ex}{\rotatebox{90}{$\asymp$}}\ |, \\
R_{2,\pi} &=& |\ \raisebox{-0.3ex}{\rotatebox{90}{$\asymp$}}\ \>\<\ \asymp| - |\asymp\ \>\<\ \raisebox{-0.3ex}{\rotatebox{90}{$\asymp$}}\ |.
\eea
By the general argument in Sec.~\ref{sec:topologicaloperatorclassification}, $D$ is an offspring type of $R_2$, and we expect to form a watermelon operator by subtracting a portion of $D$ from $R_2$. 

Outside the two discs, there are many different ways to connect the 8 end points. Fortunately, they all fall into three equivalent classes under $CP_A$ or $CP_B$ [represented by $\g_\text{self}$, $\g_\text{p-ext}$, and $\g_\text{ext}$ in Fig.~\ref{Fig: 2 loop reconnection CPA}(b)]. 
For operators which acquire a definite phase under $CP_A$ and $CP_B$, it is enough to compute $\<O_AO_B\>_\g$ for one $\g$ in each equivalence class. For example, for the operators from the above list that are invariant under $CP_A$/$CP_B$, i.e. those with $l=0$, we have according to Eq.~\ref{Eq:OAOBgammadefinition}:
\bea
\<D_0D_0\>_{\g_\text{self}} &=& (|d|^2 + |d|^4)^2,\\ 
\<R_{2,0}R_{2,0}\>_{\g_\text{self}} &=& |d|^4(d+\bar{d})^2,\\ \<D_0R_{2,0}\>_{\g_\text{self}}&=& (|d|^4+|d|^6)(d+\bar{d}),
\eea
and $\<O_AO_B\>_{\g_\text{p-ext}} = |d|^{-2}\<O_AO_B\>_{\g_\text{self}}$ for all $O_A,O_B \in \{D_0, R_{2,0}\}$.
We can then solve a linear equation to find the desired operator $W_{2,0}$  that satisfies Eq.~\ref{Eq:watermelonOperator1} (and similarly for $W_{2,2}$). By direct calculation, we find
\ba
\label{Eq:W20definition}
W_{2,0} & =  R_{2,0} - \frac{d+\bar{d}}{|d|^2+1} D_0, \\
W_{2,2} & = R_{2,\pi} + \frac{d-\bar{d}}{|d|^2-1} D_\pi.
\end{align}
Their correlators are:
\begin{align} \notag
\<W_{2,0}(\mathbf{r})W_{2,0}(0)\> &= 2\frac{|d|^6 - |d|^2(d^2+\bar{d}^2-1)}{|d|^2+1}p_{2,0}(\mathbf{r}), \\  \notag
\<W_{2,2}(\mathbf{r})W_{2,2}(0)\> &= 2\frac{|d|^6 - |d|^2(d^2+\bar{d}^2-1)}{|d|^2-1}
p_{2,2}(\mathbf{r}), \\
\<W_{2,0}(\mathbf{r})W_{2,2}(0)\> &= 0.
\end{align}

\begin{figure}[t]
\begin{center}
\includegraphics[width=0.4\textwidth]{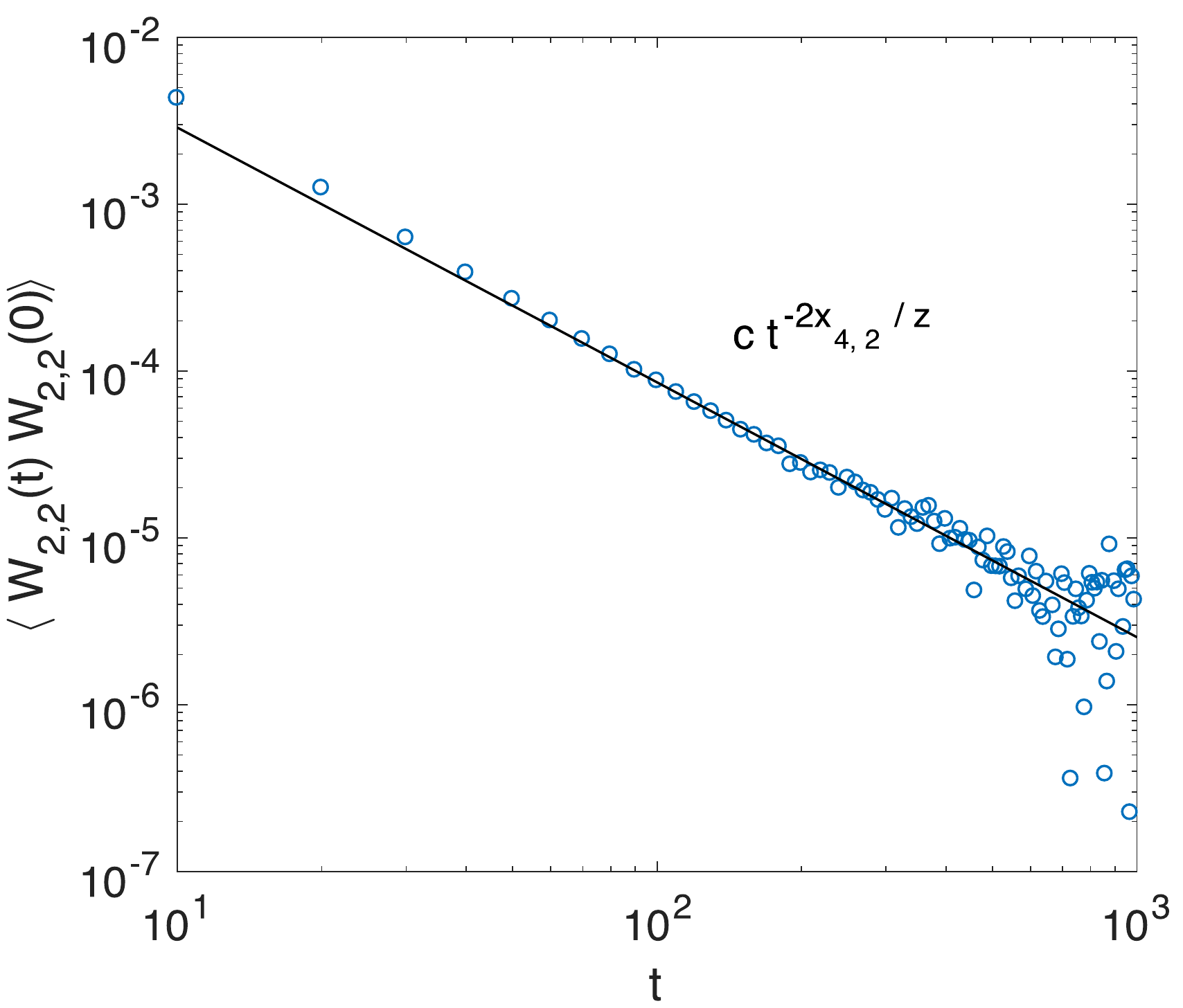}
\caption{2-point function of $W_{2,2}$. Results fit well to the black line,  which has the slope calculated from the theoretical scaling dimension $x_{4,2}$ and our numerical result ${z=3}$.}
\label{Fig:W22}
\end{center}
\end{figure}

The scaling dimensions and spins of $W_{2,0}$ and $W_{2,2}$ follow from
Eqs.~\ref{Eq:generalizedWatermelonSpin}-\ref{Eq:generalizedWatermelonkk}.
While $W_{2,0}$ corresponds to a single leading scaling operator with dimension $x_{4,0}$, which we call $\tilde{W}_{2,0}$, $W_{2,2}$ must contain 2 degenerate scaling operators with scaling dimension $x_{4,2}$ (according to Eq.~\ref{Eq:generalizedWatermelonkk}), having spins $\pm 2$, which we call $\tilde{W}_{2,+2}$ and $\tilde{W}_{2,-2}$.  (The general results in the following subsection show that we can always get a watermelon operator $\tilde{W}_{k,-k}$ by multiplying $\tilde{W}_{k,+k}$ with a diagonal operator nearby.) $\tilde{W}_{2,0}$ is the leading scaling operator with topological type $R_2$, i.e. the leading two-loop reconnection operator. $\tilde{W}_{2,\pm 2}$  is the leading scaling operator with spin $\pm2$ and topological type $R_2$.

For numerical results on their time-dependent correlation functions, see Fig.~\ref{Fig: PhysicalcorrelatorX4} and Fig.~\ref{Fig:W22}. Of course, we can multiply $\tilde{W}_{2,0}$, $\tilde{W}_{2,+2}$ and $\tilde{W}_{2,-2}$ by diagonal operators to get operators of the same topological type, which have different spins and larger scaling dimensions.

\begin{figure}[t]
\begin{center}
\includegraphics[width=0.45\textwidth]{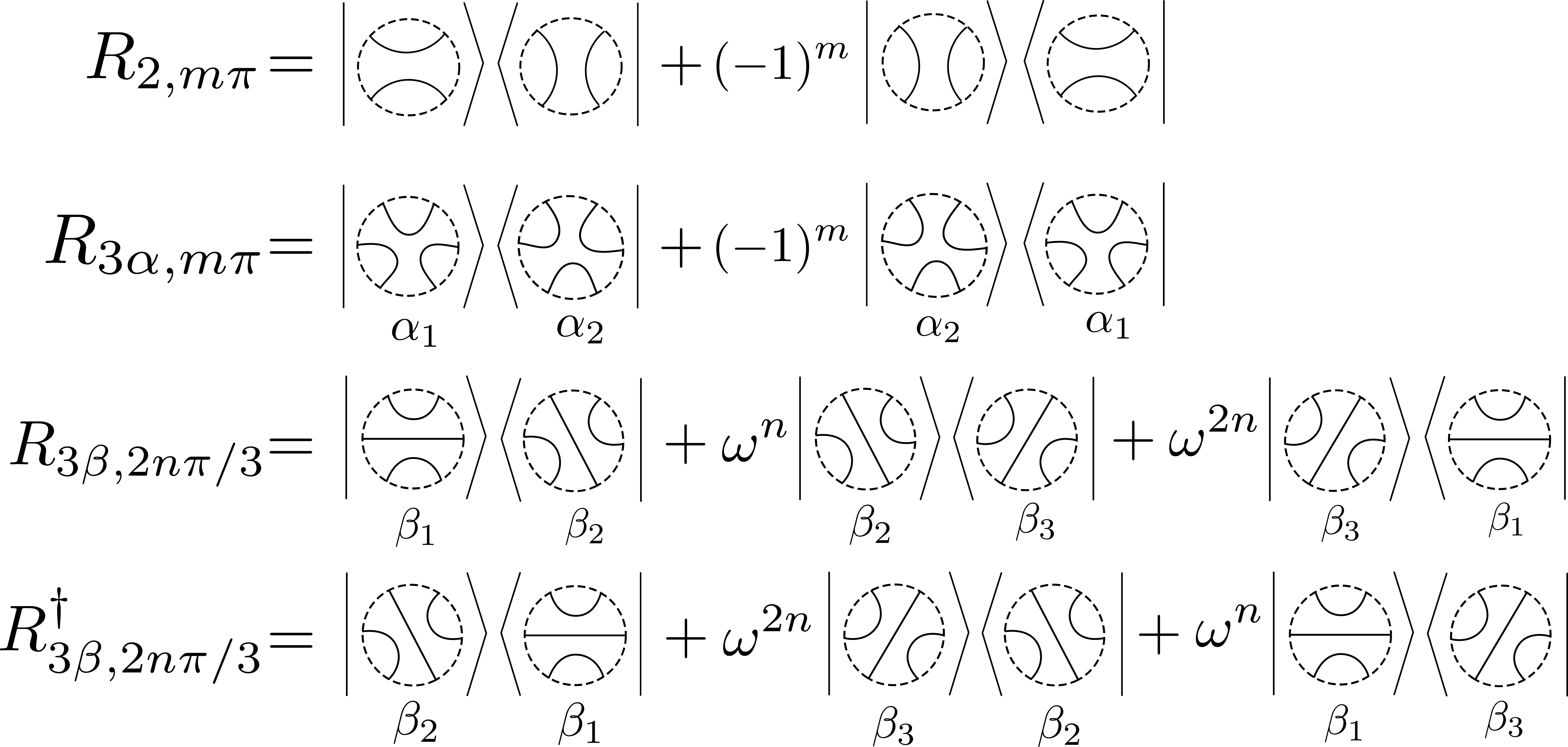}
\caption{2-loop and 3-loop reconnection operators, organized by their topological type and spin. $\a_1,\a_2$ and $\b_1,\b_2,\b_3$ labels different 3-strand configurations with fixed end points, ${m=0,1}$, ${n=0,1,2}$, ${\omega\equiv e^{2\pi i/3}}$. }
\label{Fig: 2 loop 3 loop reconnection}
\end{center}
\end{figure}

Solving for the watermelon operators for $k=2$ is not too hard, but for larger $k$, the number of linear constraints a watermelon operator must obey is apparently larger than the dimension of the corresponding operator space. A priori, there may not be any watermelon operator for a given $k$ and $l$. However, the constraints given by each $\g$ and $O_B$ are not linearly independent. In Appendix~\ref{Appendix:topologicaloperatorclassification}, we prove that the number of watermelon operators for a given $k$ equals the number of $k$-loop reconnection operators as defined in Sec.~\ref{sec:topologicaloperatorclassification} (acting on a disc with $2k$ fixed endpoints and nontrivially reconnecting them).\footnote{{Recall that a $k$-loop reconnection may involve irremovable spectators among the $k$ strands: these are strands that do not get reconnected but which cannot be pushed out of the disc because they are blocked by strands being reconnected. See Fig.~\ref{Fig: label spectator}(b) for an example.}}
In particular, the number of leading spin-0 quantum watermelon operators is just the number of distinct topological types for $k$-loop reconnection operators!

\begin{figure}[t]
\begin{center}
\includegraphics[width=0.35\textwidth]{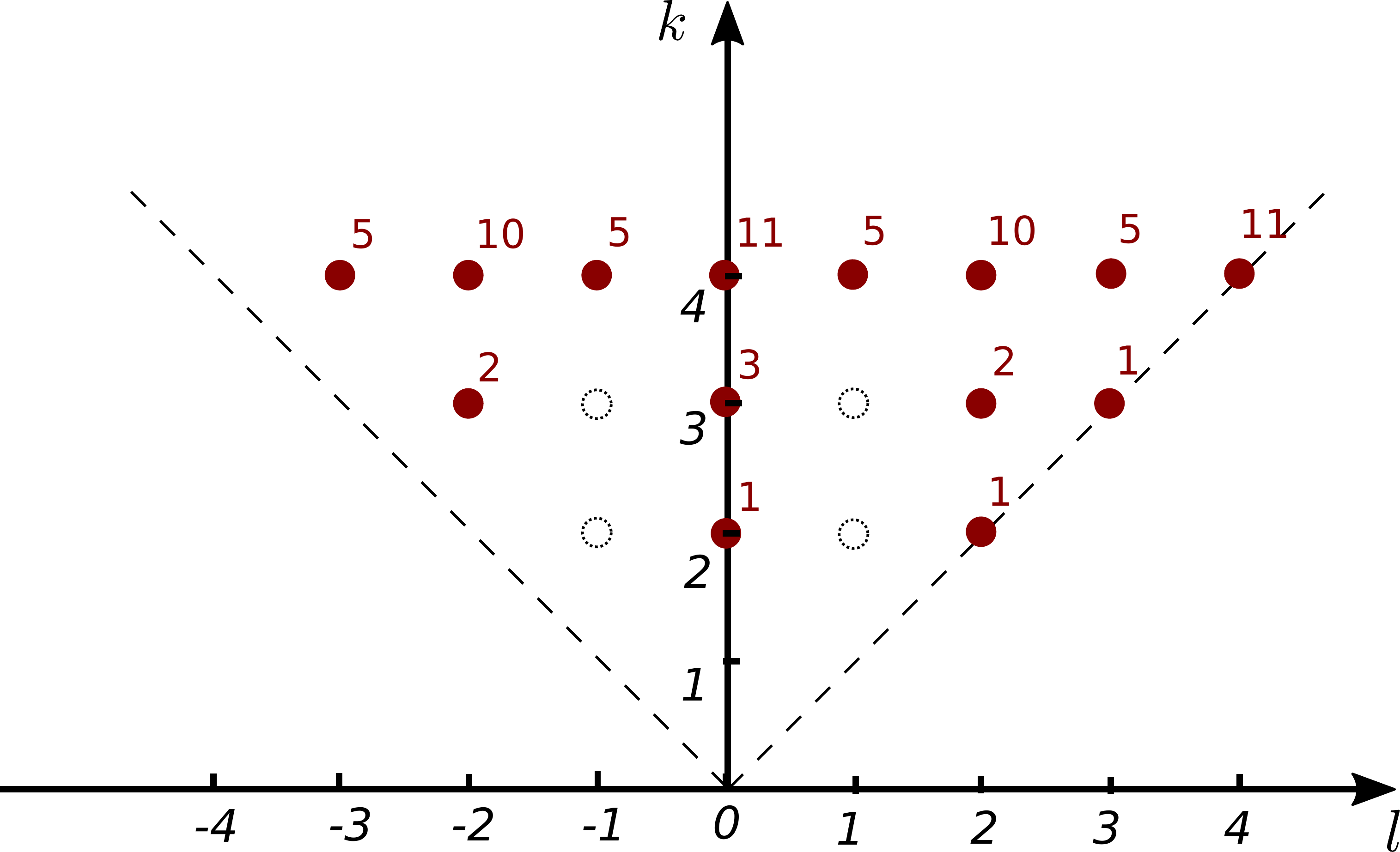}
\caption{$k$--$l$ lattice for quantum watermelon operators. Each red dot represents an allowed scaling dimension $x_{2k,l}$ (Eq.~\ref{eq:watermelonscalingdim}) for local quantum operators, with a number indicating the topological operator degeneracy at this spin and scaling dimension. Degenerate operators effect topologically different reconnection moves. Each dashed dot represents a classically allowed watermelon correlator that is \textit{not} realized by any quantum local operator.
This figure counts only the leading quantum watermelon operators, other scaling operators of the same topological type but with higher scaling dimensions (Sec.~\ref{subSec:operatorcontent}) are not included. For the leading operators, the index $l$ gives the spin under spatial rotations (operators with the same $l$ may transform differently under parity/time reversal, as discussed in the text).}
\label{Fig: m spin lattice}
\end{center}
\end{figure}

For example, there are three different topological types for 3-loop reconnection operators: $R_{3\a}$, $R_{3\b}$, and $R_{3\b}^{\dagger}$. 
These are shown in the third line of Fig.~\ref{Fig: topological type partial order}.
We can organize them by their spin $l$ (mod 6), giving 8 operators. Three of them are rotation-symmetric, as shown in Fig.~\ref{Fig: 2 loop 3 loop reconnection}.\footnote{Even though there are 3 rotation-symmetric Hermitian 3-loop reconnection operators, namely $R_{3\a,0}$, $R_{3\b,0} + R_{3\b,0}^{\dagger}$, and $i(R_{3\b,0} - R_{3\b,0}^{\dagger})$, the last one is odd under both reflection and time reversal.}
The general counting implies that we can make a watermelon operator of the corresponding spin from each of the 8 operators by subtracting operators of their offspring types. The 2-point functions involving these 8 operators can be diagonalized. The same procedure works for all $k$. This analysis confirms the general physical argument for topological types in Sec.~\ref{sec:topologicaloperatorclassification}. We summarize the results for $k\le 4$ in Fig.~\ref{Fig: m spin lattice}. For a proof of the general result see Appendix~\ref{Appendix:topologicaloperatorclassification}.

\subsection{Classification of local operators}
\label{subSec:operatorcontent}

In this subsection, we use the understanding of watermelon operators and the 2D CFT description of the loop ensemble to give a classification of quantum local operators up to hidden operators.

Eq.~\ref{Eq:operatorEquivalence} and the results of the previous subsection imply that
any lattice operator can be written, up to hidden operators, as a sum of terms, each of which is a product of a diagonal operator and, next to it in space, a $k$-loop reconnection operator.\footnote{To see this, use Eq.~\ref{Eq:operatorEquivalence} to reduce the operator to matrix elements involving reference states. It is sufficient to consider one matrix element at a time:
\be
\mathcal{O}_A = \ket{\alpha}\bra{\alpha'}
\ee
The reference configurations $\alpha$ and $\alpha'$ have the same endpoints at the boundary of the disc. Some of the endpoints are connected to `removable strands' as defined in Sec.~\ref{sec:topologicaloperatorclassification} (which are not reconnected and which could be pushed out of the disc without being blocked by the strands that are reconnected). The operator above is the product of a diagonal operator acting on these strands together with a reconnection operator acting on the other strands.}
Using the result of the previous subsection and Appendix~\ref{Appendix:topologicaloperatorclassification}, we can then write the $k$-loop reconnection operator as the sum of a $k$-loop watermelon operator satisfying Eq.~\ref{Eq:watermelonOperator1} and an operator that reconnects less than $k$ loops. Repeating this process, we can reduce every operator, up to hidden operators, to a sum of products of watermelon operators and diagonal operators.

Therefore we have a basis for (non-hidden) operators which is of the form $W_{k,q,l}\times \operatorname{Diag}$, where $W_{k,q,l}$ is a watermelon operator satisfying Eq.~\ref{Eq:watermelonOperator1}-\ref{Eq:watermelonOperator2}, and $q$ labels its topological type. Recall that there are multiple topological types for a given $k$. By definition, the type is unchanged by multiplying the diagonal operator nearby.

The leading scaling operator in the product ${W_{k,q,l}\times \operatorname{Diag}}$ is the same as that contributing to the lattice watermelon operator $W_{k,q,l}$ itself. (In general, we call the continuum scaling operator $\tilde W_{k,q,l}$.) This result is verified by directly calculating the 2-point functions of $W_{k,q,l}\times \operatorname{Diag}$ in disc $A$ and another operator in disc $B$. By Eq.~\ref{Eq:2ptfunctionfinalexpression} and the definition of the watermelon operator, it is not hard to see that the nonzero contributions to this correlator come from only the classical probabilities of configurations where the strands in the operator $W_{k,q,l}$ connect to irremovable strands of the operator in disc $B$, which requires disc $A$ and $B$ to be connected by at least $2k$ strings. The leading behavior of these probabilities is just the watermelon correlator in Eq.~\ref{Eq:generalizedWatermelonSpin} and Eq.~\ref{Eq:generalizedWatermelonkk} with the same $l$.

The leading scaling operator $\tilde{W}_{k,q,l}$ has spin $l$ (mod $2k$), known from the explicit form of the 2-point function. 
We note however that subleading operators in the product $W_{k,q,l}\times \operatorname{Diag}$ with a given $l$ may have spin differing from $l$. This is true even for the watermelon operator $W_{k,q,l}$ itself, because $l$ describes the phase acquired by the reconnection operator when the patterns of connectivity are cyclically permuted, and this permutation is not in general equivalent to a spatial rotation of the operator.

In order to determine the scaling operator content of a general lattice operator, now
consider the classical mapping for equal-time correlators involving an operator of the form $W_{k,q,l}\times \operatorname{Diag}$. 
It is convenient to imagine attaching arrows to some of the strands as in the  Coulomb gas approach to the loop models \cite{jacobsen2009conformal}. 

We may define operators on the lattice  which emit $2k$ outgoing strands from a disc, with fixed endpoints.
Let $D_{k,l}$ be such a lattice operator, and  $D_{-k,l}$ the corresponding operator which absorbs $2k$ incoming strands.
The generalized watermelon correlator $p_{k,l}(\mathbf{r})$ described above is the 2-point function of $D_{k,l}$ and $D_{-k,l}$, with outgoing and incoming arrows correspondingly.
This assignment captures the defining property of quantum watermelon operator, that the correlator is zero whenever there is a self-contact among its own strands (in the classical language, a conflict of orientations). 

The quantum two-point function for lattice operators $W_{k,q,l}$ and $W_{k,q',l}$ 
is given by
\bea
\<W_{k,q,l}(\mathbf{r})W_{k,q',l}(0)\> =
f(q,q')\<D_{k,l}(\mathbf{r})D_{-k,l}(0)\> 
\label{Eq:watermelonvsdefect}
\eea
where $f(q,q')$ is a function of the topological types.
For more general quantum operators in our basis, which can be written as $W_{k,q,l}\times 
\text{Diag}$, the classical mapping gives the two point function of the product of a defect operator and a local classical operator (a function of the local loop configuration).

After applying the OPE, the latter is a combination of the leading defect scaling operator for a given $k$, $l$, and subleading defect operators which have the same label $k$ and $l$.\footnote{In the Coulomb gas language, we can define $k$ as number of outgoing arrows minus the number of incoming arrows, and define $l$ as the product of the electric charge and magnetic charge mod $2k$. Both are preserved by multiplying local operators.}
Thus, we can expend the operator $W_{k,q,l}\times 
\text{Diag}$ as a sum of the leading operator $\tilde{W}_{k,q,l}$ and subleading operators, each of which has a 2-point function corresponding to that of a subleading classical defect operator. We can label these subleading quantum scaling operators in the product $W_{k,q,l}\times 
\text{Diag}$ by the corresponding subleading defect operators, and the topological type $q$ (which is preserved by multiplying with the diagonal operator). Thus we have the expansion
\bea 
W_{k,q,l}\times \operatorname{Diag} =  c_0 \tilde{W}_{k,q,l} + \sum_{s}c_s\tilde{W}_{k,q,l,s},
\eea
where $s$ labels different subleading operators. The three labels $k,l,s$ uniquely specify a defect scaling operator in the Coulomb gas; quantum scaling operators with the same $k,l,s$ but different topological types map to the same defect operator.

Since operators of the form $W_{k,q,l}\times 
\text{Diag}$ form a basis of non-hidden operators, we can now express each quantum operator $O$ (up to hidden operators) as a sum of diagonal scaling operators, which we call $\tilde{D}_{s}$, and the operators $\tilde{W}_{k,q,l,s}$,
\bea 
O\sim \sum_{s} o_{s} \tilde{D}_{s} + \sum_{k,q,s'}\sum_{l = -k+1}^{k} o_{k,q,l,s'}\tilde{W}_{k,q,l,s'},
\eea 
where $o_{s}$ and $o_{k,q,l,s'}$ are complex coefficients.

In conclusion, up to hidden operators, each quantum scaling operator is labeled by its topological type and the corresponding classical local/defect scaling operator.

\begin{figure}[t]
\begin{center}
\includegraphics[width=0.3\textwidth]{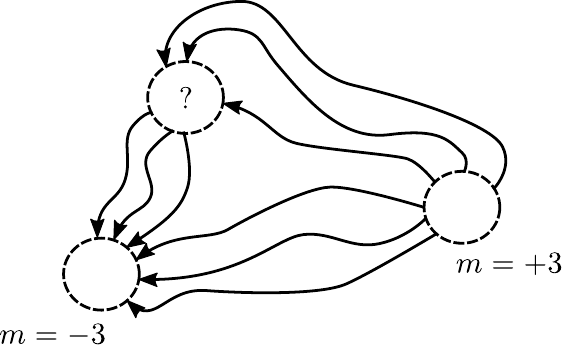}
\caption{Frustration in mapping watermelon operators to classical defect operators. When there are 2 watermelon operators, we always assign outgoing arrows for one operator, and incoming arrows for the other, but the trick fails when there are 3 or more reconnection operators.}
\label{Fig: 3 pt frustration}
\end{center}
\end{figure}

Last, we note that despite the close relation between the quantum operator spectrum and the classical defect operator spectrum in the Coulomb gas, and the correspondence used above between the quantum OPE involving only \textit{one} reconnection operator and the classical Coulomb gas OPE, the quantum OPE involving \textit{two} reconnection operators is dramatically different from the OPE of classical defects. 
For example, in the Coulomb gas language the label $k$ on the defect operator is the number of strands emitted/absorbed, so is addictive in the OPE. However, in the quantum OPE, multiplying two 2-loop reconnection operators can give the 5-loop reconnection operator with a spectator strand ($2+2\rightarrow5$), shown in Fig.~\ref{Fig: label spectator} (bottom right). We can also see this difference clearly in the calculation of 3-point functions involving 3 reconnection operators (Fig.~\ref{Fig: 3 pt frustration}).

\subsection{Topological types of local operators, revisited}
\label{subsec:1ptfunction}

In the previous subsections we introduced watermelon operators whose 2-point functions are  classical watermelon correlators, and we prove in Appendix~\ref{Appendix:topologicaloperatorclassification} that they are in one-to-one correspondence with topological types of local operators. 
In this subsection, we give an alternative understanding of the topological types using 1-point functions.
This also gives the leading scaling dimensions of scaling operators of each topological type.

\begin{figure}[t]
\begin{center}
\includegraphics[width=0.3\textwidth]{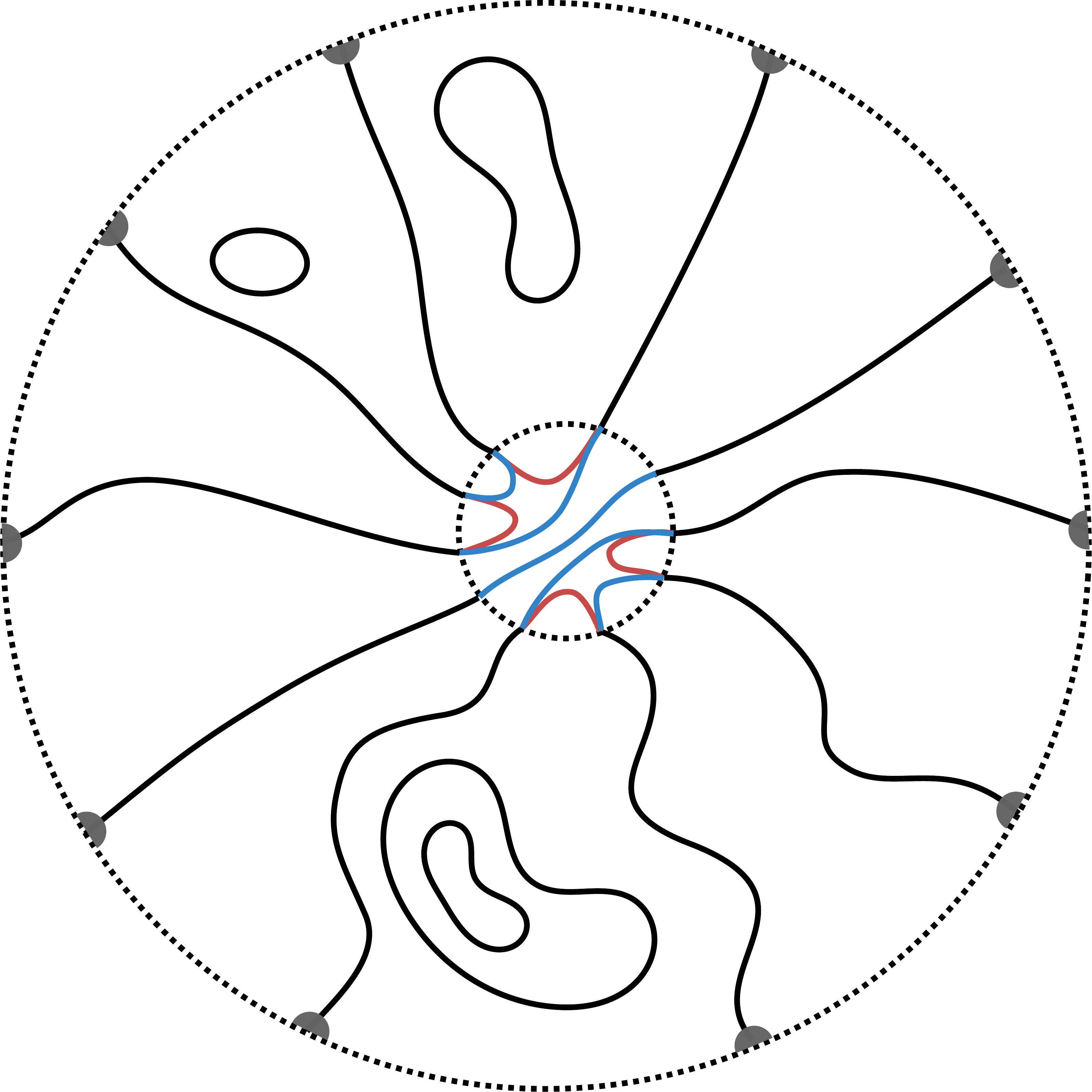}
\caption{1-point function on disc with boundary condition specified in text. In the small circle, we show an operator with the 5-loop reconnection type $(\a,\a')$. The blue lines represent the connection in $\a$, and the red lines show the change of connection.}
\label{Fig:1ptfunction}
\end{center}
\end{figure}

Consider ground states on a disc with radius $R$. We choose the boundary condition such that there are $2k$ fixed, equally spaced points on the boundary each emitting one strand (paired inside the disc into $k$ strands), and such that loops cannot touch the boundary elsewhere.
Since the Hamiltonian cannot reconnect loops, the connection of the $k$ strands, which we call $\a$, is a conserved quantity. Configurations with connectivity $\a$, which we label by $C_{\a}$, span an invariant space of the Hamiltonian, and there is a unique normalised ground state
\bea 
|\a\> \propto \sum_{C_{\a}}d^{|C_{\a}|}|C_\a\>
\eea
in this subspace.

We now insert an operator $O$ with $k$-loop reconnection type $(\a,\a')$ 
(defined in Sec.~\ref{sec:topologicaloperatorclassification}: see Fig.~\ref{Fig:1ptfunction} for example), 
acting on a much smaller disc, of order one size, at the center of the large disc. This operator has a nonzero off-diagonal matrix element $\<\a|O|\a'\>$ between different ground states. 
This matrix element is small since it comes from matrix elements $O_{C_\a,C'_{\a'}}$ involving configurations $C_\a$, $C'_{\a'}$ where all $k$ strands connecting to the boundary pass through the small disc $O$ acts on. This event has probability $p\sim 1/R^{x_{2k}}$. Thus
\bea \label{eq:1ptfn}
\<\a|O|\a'\> \sim 1/R^{x_{2k}}.
\eea 
After coarse-graining, operator $O$ also effects reconnections of lower types. Equivalently, it can connect other pairs of ground states, by acting on a subset of the $k$ strands connecting to the boundary, together with small loops. 
This effect gives other matrix elements of $O$ between ground states that decay as a smaller power. Subtracting operators of these lower topological types from $O$, we may obtain a scaling operator for $k$-loop reconnection, which we see from Eq.~\ref{eq:1ptfn} should have scaling dimension $x_{2k}$.

\section{Related critical models}

\subsection{Loop model with Jones-Wenzl projector}
\label{Sec: Jones Wenzl}

As discussed in Sec.~\ref{subsec: lowlying operator}, the loop model we focus on has many relevant perturbations: the 2-loop and 3-loop reconnection operators, and the topologically trivial operators known from the classical model. Starting from a generic spin system, which does not have the dynamical constraint, we would need to tune all of the relevant perturbations to zero to achieve this multicritical point. However, at $d=\pm\sqrt{2}$, we can add a specific 3-loop reconnection operator back while keeping the model critical. 
This is the `Jones-Wenzl projector' \cite{freedman2004class,freedman2005line,freedman2003magnetic,fendley2007quantum,fendley2008topological}.
This operator performs RG-relevant reconnection moves, leading to a new RG fixed point, but it preserves the form of the ground-state wavefunction.

Since the ground state is preserved, our results for the power-law form of equal-time correlators still hold. This implies that the new fixed point remains gapless, and that the scaling dimensions of non-hidden operators remain the same. However, a priori we know little about the \textit{dynamics} at the new fixed point. Intuitively we might expect $z_\text{new} <3$:  the 3-loop reconnection gives new ways to connect loop configurations so may speed the dynamics up. We will give a \textit{lower} bound on $z$ below.

\begin{figure}[t]
\begin{center}
\includegraphics[width=0.45\textwidth]{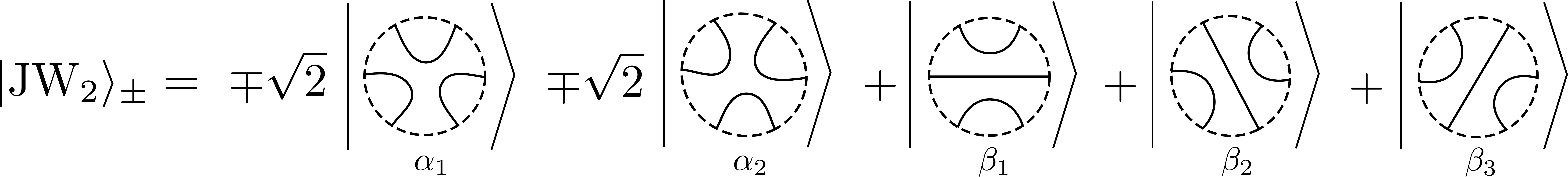}
\caption{The \textit{Jones-Wenzl state} for $d=\pm\sqrt{2}$.}
\label{Fig: JW 3 loop}
\end{center}
\end{figure}

The special property of $d=\pm\sqrt{2}$ can be traced back to the new hidden state $|\text{JW}_2\>_{\pm}$, a state on a patch that contains configurations with different connections but is orthogonal to the local reduced density matrix: see Fig.~\ref{Fig: JW 3 loop}. 

Now we can add a term $|\text{JW}_2\>_{+}\<\text{JW}_2|_{+}$ to the Hamiltonian~\cite{freedman2005line,freedman2008lieb} at $d=\sqrt{2}$. At least on the sphere,
\bea 
\rho|\text{JW}_2\>_{+} = 0
\eea 
where $\rho$ is the ground-state reduced density matrix on the disc. Thus the ground state remains a ground state of the new Hamiltonian, and the new Hamiltonian is still frustration-free. 
However, we know that at the critical point without reconnection, $|\text{JW}_2\>_{+}\<\text{JW}_2|_{+}$ has scaling dimension $x_6 = 9/2$ at $d=\sqrt{2}$ and RG eigenvalue ${y_6\simeq 1/2}$, i.e. is relevant. So we expect a new fixed point with different dynamics. 

Despite the fact that equal-time correlators still map to classical correlators,
the new Hamiltonian gives an example of a frustration-free Hamiltonian without a Markovian correspondence for its dynamics. (See Appendix~\ref{Appendix: Quantum Markovian correspondence} for explanations.)
 
Since $|\text{JW}_2\>_{+}$ is a `forbidden state' at $d=\sqrt{2}$,
operators of the form 
\bea 
|\text{JW}_2\>_{+}\<C| + |C'\>\<\text{JW}_2|_{+}
\eea 
are all hidden operators. For example, in addition to the Jones-Wenzl projector itself, the other 3-loop reconnection watermelon operator symmetric under parity is also a hidden operator at $d=\sqrt{2}$. (See Appendix~\ref{Appendix:topologicaloperatorclassification} for details.)
 
Usually, we cannot say much about the new critical point by studying the unstable critical point that flows to it. However, in this case, we can constrain the new dynamical exponent using the old scaling exponents. The argument is in fact more general, and applies to any critical frustration-free model that is perturbed by a relevant operator that leaves the ground state unchanged.
 
Consider the expectation value of a local Jones-Wenzl projector in the first excited state, which we call $|\text{ex}\>$, of the original Hamiltonian (\textit{without} reconnection) on the sphere. This matrix element must be small, since the first excited state is locally similar to the ground state, in which the  expectation value vanishes identically. Furthermore, by standard coarse-graining, we expect this matrix element to scale as $L^{-x_6}$. Thus
\bea 
\<\text{ex}|\delta H|\text{ex}\>\propto L^{2-x_6},
\eea 
where $\delta H$ is the sum of Jones-Wenzl projectors at different positions. Since $|\text{ex}\>$ is orthogonal to the new ground state, which is just the old ground state, its energy must be greater than or equal to the energy of the first excited state of the \textit{new} Hamiltonian. This argument gives a bound on the new dynamical exponent:
 \bea 
 z_\text{new} \ge \min\{z, \, x_6 -2 \} = x_6 - 2
 \eea 
where we used the relevance of the perturbing operator. Therefore
\be
z_\text{new} \geq \frac{5}{2}.
\ee

The new critical point is likely to have fewer relevant perturbations than the original one, i.e. to be more stable. In particular the JW projector itself is no longer a relevant perturbation.

On the torus, the low-lying excited states with different winding numbers are lifted by the Jones-Wenzl projector, leaving only 9 states having exactly zero energy under the frustration-free Hamiltonian~\cite{freedman2004class,freedman2005line}. For the same reason discussed in Sec.~\ref{sec:gsd}, for a generic Hamiltonian in the universality class, these 9 states still have energies well below the continuous excitation spectrum. In this sense, we can call them \textit{degenerate ground states}.

This degeneracy depends on the identification of the local physical degrees of freedom. If we interpret the loops as Ising domain walls for $\tau^z$ spins living on the hexagons, instead of chains of $\sigma^z$ down-spins living on the edges, the ground state degeneracy is reduced to 3. See Appendix~\ref{Appendix: GSD} for a self-contained discussion of the ground state degeneracy.

We have discussed the new critical point at $d=\sqrt{2}$. Similarly, we can add $|\text{JW}_2\>_{-}\<\text{JW}_2|_{-}$ to the Hamiltonian at $d=-\sqrt{2}$ to get another critical point. The two critical theories are related by the nonlocal unitary transformation $U$ with matrix elements $U_{C,C'} = (-1)^{|C|}\delta_{C,C'}$; therefore they have the same excitation spectrum. Moreover, even though the transformation $U$ is nonlocal, it maps local operators to local operators. Thus the operator spectrum and the OPE coefficients of the two theories are also exactly the same. However, as discussed in Ref.~\cite{freedman2004class}, when the theories are placed on the torus, the degenerate ground states have different topological properties. The topological difference should be clarified further in the future, but it seems to suggest the two critical theories cannot be adiabatically connected.

\subsection{Dilute critical points}
\label{subsec:dilutecriticalpoint}

The classical loop ensemble may be generalized to an ensemble with a weight $x$ per unit length of loop:
\be
Z = \sum_C |d|^{2|C|} x^\text{length}.
\ee
For a given $0<|d|^2<2$, there is a critical value $x_c$ \cite{jacobsen2009conformal} (Appendix~\ref{appendix:dilutecriticalpoint}).
For $x>x_c$ the loop model is in the `dense' phase. 
This critical phase includes the value $x=1$ that we have so far restricted to, and the critical exponents take the values we have discussed throughout this phase. The precise value of $x$ in this phase does not matter --- changing $x$ corresponds to an irrelevant perturbation of the dense fixed point.
If $x$ is smaller than $x_c$ then the loop model is not scale invariant (there are only short loops). However if $x$ is tuned exactly to $x_c$ there is a new universality class with different exponents. This is known as the `dilute' critical point. By a simple modification of the flip operator in our quantum Hamiltonian, we can generalize the quantum loop model to the dilute case too.

As we discuss in Appendix~\ref{appendix:dilutecriticalpoint}, the dilute critical points are in fact continuously connected to the dense ones via the extremal point $|d|^2 = 2$.
This observation together with the superuniversality of $z$ established in Sec.~\ref{sec:superuniversality} allows us to strengthen the analytical bound on the dynamical exponent (Sec.~\ref{sec:zbound}) to $z\geq 2.6\dot 6$ as mentioned in Sec.~\ref{Sec: introduction}.

\subsection{Towards other topologically constrained models}
\label{sec:otherensembles}

We have introduced the idea of a dynamical topological constraint that is preserved under the renormalization group.
This idea is more general than the class of quantum loop models studied in this paper, and it would be interesting to study further examples.
Here we mention some miscellaneous models which would be partially tractable thanks to a classical correspondence. 
It would of course also be interesting to look for topologically constrained universality classes that do not have a classical correspondence. 
(There are also applications of the results of this paper to one-dimensional quantum systems, which we will discuss separately \cite{1dforthcoming}.)

First we could `decorate' the loops with additional degrees of freedom. The most trivial change of this type is to give the loops a `colour' degree of freedom, taking $q>1$ values, that is uniform along the length of the loop. 
We can define the dynamics such that, in the Markov language, the colour degrees of freedom are simply carried along for the ride by the loops, being assigned randomly at each birth event and having no feedback on the dynamics of the loops' geometry. 
With this choice, correlations that do not involve the colour variable are unchanged from the $q=1$ case, so this is in a sense a completely trivial modification. 
Therefore it is surprising that this modification introduces a new local operator with a  scaling dimension that was not present before (at least not among non-hidden operators). 
This operator measures the local colour of a link. 
It has scaling dimension $x_{2,0}$, since the probability of two links having the same colour is simply related to the probability of their being on the same loop.\footnote{A similar modification is to decorate the loops with arrows on the links, consistently oriented along the length of the loop. This gives other new operators \cite{cardy1994mean}. It also allows a height field to be defined, but this is not obviously useful in constructing a continuum description.}

The loops we have discussed are unable to pass through each other. This could be relaxed, while retaining the no-reconnection constraint, in models of intersecting loops \cite{nahum2013loop,jacobsen2003dense,martins1998intersecting,lawler2004brownian}.
We can also consider loops in three dimensions (where scale-invariant classical ensembles can be achieved by tuning a parameter \cite{nahum2013phase}), and models of membranes in three dimensions at appropriate critical points. 
In all these cases the basic logic of Sec.~\ref{sec:topologicaloperatorclassification} allows a topological classification of operators.

\section{Outlook}

Our aim has been to characterize the scaling structure of strongly-interacting multicritical points for fluctuating loops that arise in lattice spin systems. 
This led to concepts that may be useful in other settings.
These include the idea of the topological operator classification, and the method for constructing the operator spectrum for models with a topological constraint, as well as general properties of frustration-free models.
Our explicit results for correlation functions and exponents show that the loop models obey scaling forms dictated by scale-invariance,
but with an unusual spectrum of scaling operators, and an unusually large dynamical exponent $z$. 

Having understood a lot about these gapless models, we come back to the question that we started with: are there useful field theories for them?
A continuum Lagrangian assigns amplitudes to spacetime configurations of fluctuating fields. 
In the loop basis, a spacetime configuration of the present model is made up of closed worldsurfaces in  2+1D. 
Unconstrained surfaces can be mapped to field theory by regarding them as domain walls or level surfaces for an appropriate field, or as worldsurfaces of flux lines in a gauge theory.  
However, the loop models do not allow reconnection (in spacetime, a reconnection event is a saddle point  tangent to the spatial plane) and we have pointed out in Sec.~\ref{sec:topologicaloperatorclassification} that this constraint is also a property of the RG fixed point.
Further, the constraint cannot simply be imposed `softly', since 2-loop reconnection is RG-relevant, and will become important in the IR if given a nonzero amplitude in the UV. 
We are not aware of a means of incorporating this constraint in a field theory.
Could the constraint mean that the loop model realizes RG fixed points have no useful continuum Lagrangian? 

We cannot answer this question.\footnote{A nonrelativistic gauge theory was originally proposed for the loop models \cite{freedman2005line}, but at that time it was believed that the dynamical exponent was $z=2$, which is now ruled out.} However we note that any such Lagrangian would have to reproduce an operator spectrum with topological quantum numbers (Sec.~\ref{sec:topologicaloperatorclassification}) that seem very different to the symmetry quantum numbers that we usually have in Lagrangian field theory.\footnote{It is interesting to contrast with the `hedgehog-free' $\mathrm{O}(3)$ model \cite{kamal1993new, motrunich2004emergent,sreejith2019emergent}, which is  an $\mathrm{O}(3)$ nonlinear sigma model  in 2+1D with a constraint forbidding pointlike topological defects (related to Dirac monopoles in $\mathrm{U}(1)$ gauge theory \cite{murthy1990action}). That  constraint  reduces to the conservation of an integer skyrmion number associated with each spatial plane, and is equivalent to a global symmetry, namely conservation of a $\mathrm{U}(1)$ charge. The topological constraint in the loop model is of a different kind, in that it is not equivalent to conservation of an invariant associated with a spatial configuration.}

Fortunately there are also more concrete questions to pursue. 

First, questions specific to the loop models. Is $z$ exactly equal to 3, and if so, why is it an integer? What is the full spectrum of `hidden' operators? What is the nature of the operator product expansion? The structure of the low-lying excited states is also almost completely open. Can we write down a more complete set of variational states, giving for example an understanding of the low-lying density of states?

The Markov process used here to study the dynamics of the quantum loop models is also interesting 
independently of the quantum mapping,
as an alternative way of thinking about the 2D classical loop models,  whose spatial correlation functions have been characterized in great detail. 
A great deal is also known about  algebraic structures underlying classical 2D loop models (see e.g.
Ref.~\cite{gainutdinov2013logarithmic}): it would be interesting to understand how these relate to the dynamical models.

In the quantum context, the loop models with Jones-Wenzl projectors at $d=\sqrt{2}$ and $d = -\sqrt{2}$ suggest that pairs of gapless states can have the same excitation spectrum, operator spectrum and OPE, yet have subtle topological differences, a possibility which was recently discussed in the context of 1+1D CFT \cite{ji2019topological}.

Second, it would be interesting to study other models with a topological operator classification (some examples are given in  Sec.~\ref{sec:otherensembles}).
What is the simplest such model for loops in 2D? What is the most stable in the RG sense?
Are there examples that can be shown to be critical but which are not frustration-free?
Are there interesting gapless states adjacent to deconfined phases of gauge theories in 3+1D, where the flux lines have topologically constrained dynamics?
(In a separate paper we will discuss a one-dimensional analogue of a topologically constrained model \cite{1dforthcoming}.)

Finally, there are RG questions to understand better.  
We have described some general features of frustration-free models which will be interesting to examine in simple examples, for example the quantum Lifshitz theory that describes the Rokhsar-Kivelson dimer model: we hope to return to this elsewhere.

We have argued, nonrigorously, that the dynamical exponent $z$ is superuniversal for lines of RG fixed points under mild assumptions. 
However there are examples of models 
(some of them including quenched disorder \cite{thomson2017quantum}) 
with critical lines along which $z$ varies.
It would be useful to understand why these models do not obey the assumptions in Sec.~\ref{sec:superuniversality}, and to be able to state these assumptions more precisely. 
In particular, Ref.~\cite{isakov2011dynamics} found an example of a quantum Rohksar-Kivelson-like model with a critical line along which $z$ was not constant, taking the value ${z=2}$ at one point on the critical line and ${z\simeq 2.17}$ at another point.
One speculative possibility is that in this example the RG flows are more complicated, such that moving along the critical line in the UV model does not correspond to moving continuously along a single RG fixed line in the IR.

\section{Acknowledgements}

We are grateful to Paul Fendley, Jesper Jacobsen, Kirill Shtengel, 
and Xiao-Gang Wen for useful discussions and correspondence. ZD acknowledges support by the Tushar Shah and Sara Zion Graduate Fellowship  and by DOE office of Basic Sciences grant number DE-FG02-03ER46076.
AN was supported by the Gordon and Betty Moore Foundation under the EPiQS initiative
(grant No. GBMF4303), 
by the EPSRC under Grant No. EP/N028678/1 and by a Royal Society University Research Fellowship.

\appendix

\section{More on the quantum-classical operator correspondence}\label{Appendix: O3 patches}

In Sec.~\ref{sec:ffscalingoperators}, we claimed that we can implement the change of the ground state
\bea
|\Psi\>\rightarrow e^{\lambda\mathcal{O}_\mathbf{r}}|\Psi\>,
\eea
by a local change of the frustration-free Hamiltonian
\bea
\mathcal{H}\rightarrow \mathcal{H} -\lambda\mathcal{O}_{3\mathbf{r}},
\eea
for any operator $\mathcal{O}_\mathbf{r}$ with a finite support, say $B_\mathbf{r}$.
The idea is to merge the original patches $D_{\mathbf{r}'}$ that overlap with $B_\mathbf{r}$ into a larger patch $D^{(2)}_\mathbf{r}$
\bea
D^{(2)}_\mathbf{r}\supset D_{\mathbf{r}'},\ \forall \mathbf{r}' \text{ s.t. } B_{\mathbf{r}'}\cap D_\mathbf{r}\neq\emptyset
\eea 
Then we can rewrite each local projector in $D_{\mathbf{r}'}$ (Eq.~\ref{Eq: local projector}) as (here $\sigma$ runs over all states in $D^{(2)}_\mathbf{r}\setminus D_{\mathbf {r}'}$)
\bea 
\mathcal{P}_{i, {\mathbf{r}'}} &=&
\ket{i}_{D_{\mathbf {r}'}} \bra{i}_{D_{\mathbf {r}'}}
\otimes
\mathbb{1}_{D^{(2)}_\mathbf{r}- D_{\mathbf {r}'}}
\otimes
\mathbb{1}_{\overline{D^{(2)}_\mathbf{r}}}\\
&=& \sum_{\s}
\ket{i\sigma} \bra{i\s}
\otimes
\mathbb{1}_{\overline{D^{(2)}_\mathbf{r}}}\\
&\equiv& \sum_{\s}\mathcal{P}_{i\sigma, {\mathbf r}}
\eea
and perform the invertible transformation
\bea
\mathcal{P}_{i\sigma, {\mathbf r}}\rightarrow
e^{-\lambda O_\mathbf{r}}\mathcal{P}_{i\sigma, {\mathbf r}}e^{-\lambda O_\mathbf{r}}.
\eea
Each new $\mathcal{P}_{i\sigma, {\mathbf r}}$ is still proportional to a projector, and at the same time annihilates the new ground state.
\bea 
\mathcal{P}_{i\sigma,\mathbf{r}}e^{\lambda \mathcal{O}}|\Psi\> = 0,
\eea
where $|\Psi\>$ is a ground state of the original Hamiltonian. Thus the frustration-free condition is preserved.

\section{Analytical bound on dynamical exponent} \label{Appendix: analytical bound}

In this appendix we prove the analytical bound on the dynamical exponent stated in Sec.\ref{sec:zbound}, $z\ge 4-d_f$, for the loop model without reconnection. Our approach extends a previous calculation showing $z\ge 2$ \cite{freedman2005line,freedman2008lieb}. 

Recall that the low-energy excitations we found correspond to the motion of large loops, and the variational ansatz for them is
\bea
|n\> \equiv \frac{1}{\sqrt{Z}} \sum_C d^{|C|}e^{2\pi inp(A_C)}|C\>,
\label{Eq: appendix, variational states}
\eea
where $n$ is an integer labeling a series of tentative excited states ($n=0$ gives the ground state), $A_C$ is the area of the largest loop in configuration $C$ and $p(A_C)$ is the cumulative probability distribution of $A_C$. We choose the total space to be a sphere, where the ground state is unique.\footnote{We define the area of a loop to be the area of the smaller part of the sphere bounded by the loop.}

We assume that when $L$ is large enough, $p(A_C)$ becomes a smooth scaling function of $A_C/L^2$ determined by the IR fixed point. More precisely, we assume
\bea
p(A+1) - p(A) \le c/L^2,\nonumber\\ c\sim O(1), \forall\  0\le A\le L^2.
\eea

The variational states constructed in Eq. \ref{Eq: appendix, variational states} are orthonormal in the limit $L\rightarrow\infty$:
\bea
\< n' | n\> &=& \frac1Z \sum_C |d|^{2|C|}e^{2\pi i (n-n') p(A_C)}\nonumber\\
&=& \int_0^1 e^{2\pi i (n-n') p(A)} dp(A)\nonumber\\
&=& \delta_{n,n'}.
\label{Eq: Appendix, orthonormal variational states}
\eea
From the first line to the second line, we changed the summation to an integral and used the definition of $p(A)$: $dp(A)$ is the probability of configurations whose largest loop has area in the range $dA$. As a consequence of Eq.~\ref{Eq: Appendix, orthonormal variational states}, variational states with $n>0$ are orthogonal to the ground state; their energy expectation is greater than or equal to the first excited state energy. 

On the other hand, the energy expectations of the variational states are
\ba
E_n & = \< n | H | n\> - \< 0 | H | 0\>\\
& = \frac1Z \sum_{C,C',p}d^{|C|}\bar{d}^{|C'|} [e^{2\pi i n (p(A_C)-p(A_C'))} -1]\<C'|H_p|C\>\notag
\end{align}
where $p$ runs over all plaquettes on the lattice, $H_p$ is the local term acting on plaquette $p$. We only need to keep the real part of the summation. Since $H_p$ is local, $A_C$ and $A_{C'}$ differ by no more than 1; according to our assumption, $p(A_C)-p(A_C')\le c/L^2$, hence $\text{Re}[1-e^{2\pi i n (p(A_C)-p(A_C')}] \le c'n^2/L^4$. The matrix element that moves a loop across plaquette $p$ is either 0 or 1. For a given configuration $C$, only those $H_p$ next to the largest loop can change its area; therefore, the summation over $C'$ and $p$ gives at most a factor $l_C$, the length of the largest loop in $C$. Putting this together, we have
\bea
E_n \le \frac{c'n^2}{L^4} \<l_C\> = O(L^{d_f - 4})
\label{Eq: appendix, bound on En}
\eea
\bea
\Longrightarrow z\ge 4-d_f,
\label{Eq: appendix, bound on z}
\eea
where $d_f = 1 + \frac{\pi}{2\text{arccos}(-|d|^2/2)}$ is the \textit{fractal dimension} of loops.

There is a gap between Eq. \ref{Eq: appendix, bound on En} and Eq. \ref{Eq: appendix, bound on z}: in order to bound the dynamical exponent, we must confirm that the variational states belong to the continuous excitation spectrum, not a branch of states below this spectrum (like the sub-spectrum tower on the torus). To fill the gap, note that we can perform the construction independently in any subsystem --- instead of tuning the phase of the largest loop, we can divide the system into several subsystems and tune the phase separately for the largest loops in these subsystems; the number of excited states we can construct is exponential in the volume of the system, given a fixed energy density. This observation concludes our proof.

Intuitively \cite{freedman2008lieb}, we have approximately reduced the dynamics of loops to the dynamics of a fictitious particle hopping on the abstract axis $A_C$; the effective hopping strength is $L^{d_f}$, effective system size $L^2$, hence the excitation energy $L^{d_f-4}$. 

From this point of view, reconnection corresponds to nonlocal hopping of $A_C$; it completely changes the dynamics of loops and reduces the dynamical exponent $z$ (probably to zero, see Sec.~\ref{Appendix: numerical methods} \cite{troyer2008local}).

\section{Quantum-Markovian correspondence} \label{Appendix: Quantum Markovian correspondence}

In this appendix, we review a previously-found correspondence between a large class of frustration-free Hamiltonians and classical Markovian dynamics~\cite{henley1997relaxation,henley2004classical,castelnovo2005quantum,velenich2010string} in the most general setting we found, in the hope of stimulating separate applications.
We also clarify the distinction between a frustration-free Hamiltonian and a Hamiltonian with a classical correspondence,
and we point out a simplification for computing correlators of off-diagonal operators in models with a classical correspondence. This appendix is self-contained: readers interested only in this correspondence may skip the main text.

The quantum-Markovian correspondence applies to the gapless loop model without reconnection as well as the toric code/double semion model. It helps us simulate the gapless loop model on a large lattice of $500\times500$ plaquettes. 

A classical Markov process satisfying detailed balance is described by a master equation
\bea
\frac{dp_{\a}}{d\tau}=\sum_{\b\neq\a}W_{\a\b}p_{\b}(\tau) - W_{\b\a}p_{\a}(\tau),
\eea
where $\a, \b$ label classical states, $p_{\a}(\tau)$ is a probability distribution evolving with time $\tau$, and $W_{\a\b}$ is the transition amplitude from state $\b$ to state $\a$. This equation is often written as 
\bea
\frac{dp}{d\tau}=Wp
\eea
by defining ${W_{\a\a}\equiv -\sum_{\b\neq\a}W_{\b\a}}$. This matrix $W$ has three properties: (1) Positivity,  ${W_{\a\b}\ge0}$ for all $\a\neq\b$; (2) Classical probability conservation, 
${(1,1,\cdots,1)W=0}$; and
(3) Detailed balance,  
${W_{\a\b}p_{\b}^{0}=W_{\b\a}p_{\a}^{0}}$ (no sums on $\alpha$ or $\beta$),
where $p_{\a}^0$ is the probability of state $\a$ in equilibrium. We shall denote the equilibrium distribution $(p_1^{0},\cdots,p_N^0)$ by $\<\tilde{0}|$ and $(1,1,\cdots,1)$ by $\<\tilde{1}|$.

On the other hand, a  local frustration-free quantum Hamiltonian is given  as a sum over local terms by $H=\sum_x P_x$ satisfying two properties: (1) Hermiticity,
$H_{\a\b}=H_{\b\a}^*$ and
(2) Frustration-freeness, meaning that each $P_x$ is projector and these projectors have a common ground state. These projectors need not commute with each other.

In order to make the quantum-classical correspondence, we \textit{fix a basis} on the quantum side. Each basis vector $|\a\>$ corresponds to a classical configuration $\a$.
Refs.~\cite{henley1997relaxation,henley2004classical,castelnovo2005quantum} point out that for Hamiltonians decomposable into blocks of $2\times2$ projectors, a classical correspondence always exists, and detailed balance in the corresponding classical dynamics is guaranteed.
(A $2\times 2$ projector is a projector involving only 2 classical configurations on a local patch. For example the Hamiltonian in Eq.~\ref{Eq:2by2projector} is decomposable into $2\times 2$ projectors whereas the Jones-Wenzl projector is not.)
Ref.~\cite{velenich2010string} points out frustrations in constructing the classical mapping for non-abelian topological models. We find that the $2\times2$ block structure is not essential: the classical mapping exists as long as each local projector obeys a ground state uniqueness property.
Regarded as an operator on an appropriate local patch, the projector is a  matrix in the classical basis with a nonzero block for the configurations the projector acts on, and zero elements elsewhere: this  nonzero block must have a unique ground state (see the precise definition below). A generic $2\times2$ projector has one ground state and one excited state, and the condition is satisfied. For generic projectors in the form of Eq.~\ref{Eq: local projector}, we may need to group multiple projectors into one in order to satisfy the condition. This observation motivates the following definition.

\begin{definition}\label{defn:P}
A projector $P_x$ acting on a local patch $x$ is called a \textit{local classical projector} if and only if there exists a set $\mathcal{A}_{x}$ of classical configurations on $x$, and a normalized state ${|\psi_{x}\>\equiv\sum_{\a\in\mathcal{A}_{x}}\psi_{x,\a}|\a\>}$ on the patch, such that 
\bea
\label{eq: P_x}
P_{x} &=& \left(\sum_{\a\in\mathcal{A}_{x}}|\a\>\<\a|\right) - |\psi_{x}\>\<\psi_{x}|\\ 
&\equiv&  P_{\mathcal{A}_x} - |\psi_{x}\>\<\psi_{x}|.
\eea
\end{definition}

In other words, $P_{x}$ has a unique ground state within the Hilbert space spanned by the configurations it acts nontrivially on. There is a large class of Hamiltonians with the  local classical projector structure, for example, those string-net Hamiltonians realizing abelian topological order \cite{levin2005string} and the dimer model at the Rokhsar-Kivelson point \cite{rokhsar1988superconductivity}. From Eqs. \ref{eq:flippable} and \ref{Eq: H gapless loop}, the ideal Hamiltonian for the loop model without reconnection is also a sum of local classical projectors.\footnote{The frustration-free Hamiltonian may also include projectors that just forbid a single local classical configuration, for example one that forbids dangling string ends in the loop model. In this case, we simply remove these classical configurations from the Hilbert space.}
We are now ready to state the main theorem of this appendix.

\begin{theorem}
\label{Thm: quantum-classical mapping}
Let $H$ be a frustration-free Hamiltonian that is a sum of local classical projectors. Let $\ket{\mathrm{GS}}$ be a ground state of $H$, and  let $\mathcal{A}$ be the set of many-body configurations in the classical basis where the ground state wavefunction amplitude ${\langle \alpha | \mathrm{GS}\rangle}$ is nonzero. Then
\begin{enumerate}
\item $\text{span}(\mathcal{A})$ is an invariant subspace of $H$: $H\text{span}(\mathcal{A})\subseteq \text{span}(\mathcal{A})$.
\item $H|_{\text{span}(\mathcal{A})}$ maps onto the transition matrix of a local classical Markov process, satisfying detailed balance, under a similarity transformation. 
\item The quantum ground state maps onto the classical equilibrium distribution.
\end{enumerate}
\end{theorem}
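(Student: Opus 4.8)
The plan is to build everything on a single \emph{local factorization} lemma for the ground state, and then read off the three claims almost mechanically. First I would fix a patch $x$ and write $\ket{\mathrm{GS}} = \sum_c \ket{\phi_c}_x \otimes \ket{c}_{\bar x}$, grouping basis states by their configuration $c$ on the complement $\bar x$. Since $P_x$ acts as the identity outside $x$, frustration-freeness ($P_x\ket{\mathrm{GS}}=0$) forces $P_x\ket{\phi_c}_x = 0$ for every $c$. The local-classical-projector structure of Definition~\ref{defn:P} means the kernel of $P_x = P_{\mathcal{A}_x} - \ket{\psi_x}\bra{\psi_x}$ inside the support $\mathcal{A}_x$ is exactly the one-dimensional span of $\ket{\psi_x}$ (the ground-state-uniqueness condition). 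Hence $\ket{\phi_c}_x = \lambda_c\ket{\psi_x} + \ket{\chi_c}$, where $\ket{\chi_c}$ is supported on patch configurations \emph{outside} $\mathcal{A}_x$. Consequently, for any configuration $\alpha$ whose restriction $\alpha_x$ lies in $\mathcal{A}_x$, the ground-state amplitude factorizes as $\Psi(\alpha) = \lambda_c\,\psi_{x,\alpha_x}$ with $c = \alpha|_{\bar x}$. This single identity is the engine of the proof.

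Then Part 1 (invariance) follows directly. An off-diagonal matrix element $\braket{\beta}{H|\alpha}$ is nonzero only when $\alpha$ and $\beta$ agree outside some patch $x$ with $\alpha_x,\beta_x\in\mathcal{A}_x$ (the $P_{\mathcal{A}_x}$ piece is diagonal), in which case $\braket{\beta}{P_x|\alpha} = -\psi_{x,\beta_x}\psi_{x,\alpha_x}^*$. If $\Psi(\alpha)\neq 0$, the factorization gives $\lambda_c\neq 0$ and $\psi_{x,\alpha_x}\neq 0$; the matrix element being nonzero forces $\psi_{x,\beta_x}\neq 0$, so $\Psi(\beta) = \lambda_c\psi_{x,\beta_x}\neq 0$ and $\beta\in\mathcal{A}$. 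Thus $H$ preserves $\mathrm{span}(\mathcal{A})$.

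For Parts 2 and 3, I would define the diagonal similarity transformation $S = \mathrm{diag}(\Psi^*(\alpha))$ on $\mathrm{span}(\mathcal{A})$ (invertible, since $\Psi(\alpha)\neq 0$ there) and set $W \equiv -SHS^{-1}$, so that $W_{\beta\alpha} = -\Psi^*(\beta)H_{\beta\alpha}/\Psi^*(\alpha)$. Substituting the factorized matrix element gives, for $\beta\neq\alpha$,
\be
W_{\beta\alpha} = \frac{|\Psi(\beta)|^2}{|\lambda_c|^2} \ge 0,
\ee
which is manifestly real and nonnegative --- this is the positivity of transition rates. Probability conservation is immediate because $\bra{\tilde 1}S = \sum_\alpha \Psi^*(\alpha)\bra{\alpha} = \bra{\mathrm{GS}}$, whence $\bra{\tilde 1}W = -\bra{\mathrm{GS}}H S^{-1} = 0$, using $\bra{\mathrm{GS}}H = 0$. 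The steady state is $p^0_\alpha = |\Psi(\alpha)|^2$: since $(S^{-1}p^0)_\alpha = \Psi(\alpha)$, we have $Wp^0 = -SH\ket{\mathrm{GS}} = 0$, and $p^0 = S\ket{\mathrm{GS}}$ exhibits the ground state mapping onto the classical equilibrium distribution (Part 3). Finally, detailed balance $W_{\beta\alpha}p^0_\alpha = W_{\alpha\beta}p^0_\beta$ follows because $W_{\beta\alpha}p^0_\alpha = |\Psi(\alpha)|^2|\Psi(\beta)|^2/|\lambda_c|^2$ is symmetric under $\alpha\leftrightarrow\beta$ (equivalently, it can be derived from Hermiticity $H_{\beta\alpha}=H_{\alpha\beta}^*$ together with the reality of this quantity). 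Locality of $W$ is inherited from that of $H$, since $S$ is diagonal and does not alter which configurations are connected.

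The main obstacle is the local factorization lemma itself: establishing that frustration-freeness collapses the ground state onto the one-dimensional kernel $\ket{\psi_x}$ within $\mathcal{A}_x$ on every patch. Everything downstream --- invariance, the crucial positivity of the off-diagonal rates, conservation, and detailed balance --- is short once this is in hand, but it is exactly here that the ground-state-uniqueness hypothesis of Definition~\ref{defn:P} (and not merely the projector property) is indispensable: without it the kernel could be higher-dimensional and the amplitudes would fail to factorize into local pieces, destroying positivity. A secondary point requiring care is the bookkeeping when several patches $x$ can mediate the same move $\alpha\to\beta$; each contributes a separate nonnegative term of the above form, so positivity survives the sum, but one must check that the $\lambda_c$ are consistently defined patch-by-patch.
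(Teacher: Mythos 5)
Your proof is correct and follows essentially the same route as the paper's: the factorization $\Psi(\alpha)=\lambda_c\,\psi_{x,\alpha_x}$ you establish from the kernel structure of $P_x$ is exactly the amplitude-ratio identity $\psi_{\beta_x\gamma}/\psi_{\alpha_x\gamma}=\psi_{x,\beta}/\psi_{x,\alpha}$ that the paper derives from local ground-state uniqueness, and your similarity transformation $W=-SHS^{-1}$ with the resulting rates $|\psi_{x,\beta_x}|^2$ reproduces the paper's computation $\tilde{P}_{\alpha,\beta}=|\psi_{x,\alpha}|^2-\delta_{\alpha\beta}$, including positivity, detailed balance, conservation, and locality. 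The only cosmetic differences are that you verify probability conservation globally via $\bra{\mathrm{GS}}H=0$ rather than via the column sums $\sum_\alpha|\psi_{x,\alpha}|^2=1$, and that the patch-consistency worry you raise about the $\lambda_c$ never actually arises, since each step of the argument invokes the factorization for one patch at a time.
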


\begin{remark}
When $H$ has multiple ground states, Thm.~\ref{Thm: quantum-classical mapping} applies to each of them.
\end{remark}

\begin{proof}
We prove the first property by contradiction. Assume $\text{span}(\mathcal{A})$ is not an invariant subspace of $H$; then, there exists a local classical projector $P_x$ having a nonzero matrix element between a configuration $\a_x\g_{\text{rest}}\in\mathcal{A}$ and a configuration $\b_x\g_{\text{rest}}\notin\mathcal{A}$, where $\a$ and $\b$ label configurations on the local patch $x$, and $\g_\text{rest}$ labels configurations on the rest of the system excluding $x$. Since $P_{x}$ has a nonzero matrix element between these two states, $\a$ and $\b$ must be in $\mathcal{A}_x$ (defined in Defn.~\ref{defn:P}). By definition, $P_{x}$ has a unique ground state $|\psi_{x}\>$ built from configurations in $\mathcal{A}_x$. The many-body ground state is also a ground state of $P_x$, which means that the ground state amplitudes involving states in $\mathcal{A}_x$ must be proportional to $\psi_x$:
\bea
\frac{\psi_{\b_x\g_{\text{rest}}}}{\psi_{\a_x\g_{\text{rest}}}}=\frac{\psi_{x,\b}}{\psi_{x,\a}}.
\eea
However, ${\psi_{x,\b}\neq 0,\psi_{x,\a}\neq 0}$ by the definition in Eq.~\ref{eq: P_x}, contradicting the assumption that ${\b_x\g_{\text{rest}}\notin\mathcal{A}}$, i.e. that $\psi_{\b_x\g_{\text{rest}}}=0$.

The second and the third properties are closely related. In order to map the quantum ground state $|0\>=(\psi_1,\psi_2,\cdots)^{T}$ to the classical equilibrium distribution $|\tilde{0}\>=(|\psi_1|^2,|\psi_2|^2,\cdots)^{T}$, we use a diagonal matrix
\ba
S&=\left (\begin{array}{cccc}
\psi_1^* & & \\
 & \psi_2^* & \\
 & & \ddots
\end{array}\right),
&
|\tilde{0}\>&=S|0\>.
\end{align}
From now on, we work in the subspace $\text{span}(\mathcal{A})$ --- we assume each $\psi_{\a}$ is nonzero. 

Remarkably, under the corresponding similarity transformation, the quantum Hamiltonian maps onto (minus) a local transition matrix $\tilde{H}$: 
\bea
\tilde{H}=-SH|_{\text{span}(\mathcal{A})}S^{-1}.
\eea
In order to check the three defining properties of the transition matrix --- positivity, probability conservation and detailed balance --- for $\tilde{H}$, we write Eq.~\ref{eq: P_x} explicitly in the classical basis. For $\a,\b\in\mathcal{A}_x$ and $\a_x\g_{\text{rest}},\b_x\g_{\text{rest}}\in\mathcal{A}$, define $P_{\a,\b}\equiv\<\a|P_{x}|\b\>$, then Eq. \ref{eq: P_x} reads $P_{\a,\b}=\delta_{\a\b}-\psi_{x,\a}\psi_{x,\b}^*$; then, 

\ba
(\tilde{P}_x)_{\a_x\g_{\text{rest}},\b_x\g_{\text{rest}}}& \equiv (-SP_{x}\otimes I_{\text{rest}}S^{-1})_{\a_x\g_{\text{rest}},\b_x\g_{\text{rest}}}  \notag \\
&= -\psi_{\a_x\g_{\text{rest}}}^*(\delta_{\a\b}-\psi_{x,\a}\psi_{x,\b}^*)/\psi_{\b_x\g_{\text{rest}}}^*  \notag \\
& = -\psi_{x,\a}^*(\delta_{\a\b}-\psi_{x,\a}\psi_{x,\b}^*)/\psi_{x,\b}^* \notag  \\
\notag
&  = |\psi_{x,\a}|^2 - \d_{\a\b} \\
& \equiv \tilde{P}_{\a,\b}.
\end{align}
Indeed, the off-diagonal matrix elements of $\tilde{P}$ are nonnegative and 
\ba
\tilde{P}_{\a,\b}|\psi_{\b_x\g_{\text{rest}}}|^2 & = |\psi_{x,\a}|^2|\psi_{\b_x\g_{\text{rest}}}|^2\nonumber\\ 
 & = |\psi_{x,\b}|^2|\psi_{\a_x\g_{\text{rest}}}|^2 \\
 & = \tilde{P}_{\b,\a}|\psi_{\a_x\g_{\text{rest}}}|^2,\nonumber
\end{align}
for all $\a_x\g_{\text{rest}}$, $\b_x\g_{\text{rest}}\in\mathcal{A}$, $\a\neq\b$, which is detailed balance, and
\bea
\sum_{\a}\tilde{P}_{\a,\b}&=& \sum_{\a} |\psi_{x,\a}|^2 - \d_{\a\b}\nonumber\\
&=& |\<\psi_x|\psi_x\>|^2-1 = 0
\eea
for all $\beta$, which is probability conservation. Moreover, the matrix elements of $\tilde{P}$ depend only on the local configuration $\a$, not on the full global configuration $\a\g_{\text{rest}}$;
therefore, the transition matrix is local.  As a consequence, $\tilde{H}=\sum_{x}\tilde{P}_x$ is a legitimate local transition matrix.
\end{proof}

\begin{remark}
What lies at the center of this correspondence is a sense of \textit{locality} in the many-body ground state: for any local patch $x$ and any pair of configurations $\a$ and $\b$ on $x$ connected by a local projector, the ratio of ground-state amplitudes $\psi_{\a_x\g_{\text{rest}}}/\psi_{\b_x\g_{\text{rest}}}$, if $\psi_{\b_x\g_{\text{rest}}}\neq 0$, depends only on the local data $\a$ and $\b$, not on the configuration outside $x$. However, this locality of the ratio does not imply the locality of the wavefunction itself --- the ground-state wavefunction is not necessarily a product of local functions. This is best seen in the loop model with nontrivial weight $\psi(C)=d^{|C|}$, as the total number of loops is not a sum of locally measurable quantities.
Even for the gapped system with $d=-1$ (with reconnection in the Hamiltonian), the wavefunction is nonlocal in this sense.
\end{remark}

This surprising quantum-classical correspondence allows us to simulate the quantum dynamics at a low cost. Intuitively, the classical Markov process can be viewed as a Monte Carlo sampling of the square of quantum wavefunction. 
When the ground state wavefunction is known exactly, this Monte Carlo process is not surprising. However, what is nontrivial is that the time dimension in the Monte Carlo process exactly corresponds to imaginary time in the quantum system, allowing the direct measurement of imaginary time correlators in the classical simulation. This is best seen in the following corollary (previously discussed in Ref.~\cite{henley1997relaxation,henley2004classical,castelnovo2005quantum}).

\begin{corollary}
\label{Cor: diagonal temporal correlation function}
For diagonal operators $O_{1}, O_{2}, \cdots, O_{n}$ on $\text{span}(\mathcal{A})$\footnote{By diagonal operators, we mean diagonal operators in the classical basis. This basis plays a special role in the quantum-classical correspondence, and we stick to it in this paper.}, 
the imaginary-time quantum correlator equals the real-time classical correlator:
\bea
\label{eq: diagonal temporal correlation}
&\<0| & O_{n}e^{-H\Delta\tau_{n-1}}\cdots e^{-H\Delta\tau_{2}}O_{2}e^{-H\Delta\tau_{1}}O_{1}|0\>\nonumber\\
&=& \<\tilde{1}|O_{n}e^{\tilde{H}\Delta\tau_{n-1}}\cdots e^{\tilde{H}\Delta\tau_{2}}O_{2}e^{\tilde{H}\Delta\tau_{1}}O_{1}|\tilde{0}\>\\
&=& \sum_{\{\a_{i}\}}\prod_{1}^{n}O_{i}(\a_{i})\prod_{1}^{n-1}p(\a_{i+1},\tau_{i+1}|\a_{i},\tau_{i})p^0(\a_{1}),\ \ 
\eea
where $\tau_i=\sum_{j<i}\Delta\tau_{i}$, $O_i(\alpha)$ is the value of the classical observable in state $\alpha$, and ${p(\a_{i+1},\tau_{i+1}|\a_{i},\tau_{i})\equiv\<\a_{i+1}|e^{\tilde{H}\Delta\tau_{i}}|\a_i\>}$ is the conditional probability that the system is in state $\a_{i+1}$ at time $\tau_{i+1}$ given that it is in state $\a_i$ at time $\tau_i$.
\end{corollary}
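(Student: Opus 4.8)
\section*{Proof proposal}

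The plan is to establish the two equalities separately. The first, which relates the imaginary-time quantum correlator to a real-time classical one, is a purely algebraic consequence of the diagonal similarity transformation $S$ of Theorem~\ref{Thm: quantum-classical mapping} together with the fact that the $O_i$ are diagonal in the classical basis. The second, which expands the classical correlator into an explicit sum over configurations, follows by inserting resolutions of the identity in the classical basis and reading off matrix elements.

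For the first equality I would use three facts: (i) $S$ is diagonal in the classical basis, so it commutes with every diagonal operator, $[S,O_i]=0$; (ii) the definition $\tilde H = -S\,H|_{\mathrm{span}(\mathcal A)}\,S^{-1}$ gives $e^{-H\Delta\tau_i}=S^{-1}e^{\tilde H\Delta\tau_i}S$ on $\mathrm{span}(\mathcal A)$; and (iii) the boundary relations $\bra{0}S^{-1}=\bra{\tilde 1}$ (the diagonal entries of $\bra{0}$ are $\psi_\alpha^*$ and those of $S^{-1}$ are $1/\psi_\alpha^*$, so the product is the all-ones row vector) and $S\ket{0}=\ket{\tilde 0}$ (entries $\psi_\alpha^*\psi_\alpha=|\psi_\alpha|^2$). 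Substituting (ii) for each exponential and commuting the resulting $S^{\pm1}$ factors through the intervening diagonal operators via (i), every interior pair $S\,O_j\,S^{-1}$ collapses to $O_j$, while the leftmost factor gives $\bra{0}O_nS^{-1}=\bra{\tilde 1}O_n$ and the rightmost gives $S\,O_1\ket{0}=O_1\ket{\tilde 0}$. This telescoping produces exactly $\bra{\tilde 1}O_n e^{\tilde H\Delta\tau_{n-1}}\cdots O_2 e^{\tilde H\Delta\tau_1}O_1\ket{\tilde 0}$.

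For the second equality I would insert a resolution of the identity $\sum_{\alpha_i}\ket{\alpha_i}\bra{\alpha_i}$ (over configurations in $\mathcal A$) immediately to the right of each operator $O_i$, and read the expression from right to left. Each diagonal $O_i$ contributes its eigenvalue $O_i(\alpha_i)$; each propagator contributes the matrix element $\bra{\alpha_{i+1}}e^{\tilde H\Delta\tau_i}\ket{\alpha_i}=p(\alpha_{i+1},\tau_{i+1}\mid\alpha_i,\tau_i)$; the contraction $\braket{\alpha_1}{\tilde 0}=p^0(\alpha_1)$ supplies the initial weight; and the final contraction with $\bra{\tilde 1}=(1,\dots,1)$ gives $\braket{\tilde 1}{\alpha_n}=1$. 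Collecting factors yields the stated sum $\sum_{\{\alpha_i\}}\prod_i O_i(\alpha_i)\prod_i p(\alpha_{i+1},\tau_{i+1}\mid\alpha_i,\tau_i)\,p^0(\alpha_1)$.

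The step needing the most care is not the algebra but the justification that the entire computation can be carried out inside $\mathrm{span}(\mathcal A)$, where $S$ is invertible because all $\psi_\alpha\neq 0$. This is precisely where Theorem~\ref{Thm: quantum-classical mapping} does the work: its first property makes $\mathrm{span}(\mathcal A)$ an $H$-invariant subspace, and since diagonal operators preserve every coordinate subspace, the full alternating string of operators and propagators maps $\ket{0}\in\mathrm{span}(\mathcal A)$ back into $\mathrm{span}(\mathcal A)$, so the restriction is harmless. I would also emphasise that the identification of $\bra{\alpha_{i+1}}e^{\tilde H\Delta\tau_i}\ket{\alpha_i}$ with a genuine conditional probability---nonnegative and correctly normalised---rests on the positivity and probability-conservation properties of $\tilde H$ already established in the theorem; these are what make the right-hand side a bona fide expectation in the Markov process rather than a merely formal rewriting.
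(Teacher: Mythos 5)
Your proposal is correct and follows essentially the same route as the paper's own proof: inserting $S^{-1}S$ between factors (equivalently, writing $e^{-H\Delta\tau_i}=S^{-1}e^{\tilde H\Delta\tau_i}S$ on $\mathrm{span}(\mathcal{A})$), using $SO_iS^{-1}=O_i$ for diagonal operators together with $\bra{0}S^{-1}=\bra{\tilde 1}$ and $S\ket{0}=\ket{\tilde 0}$, and then expanding in the classical basis to obtain the sum over configurations. Your added remarks on the $H$-invariance of $\mathrm{span}(\mathcal{A})$ and on why the matrix elements are genuine conditional probabilities are sensible elaborations of points the paper leaves implicit, not a different argument.
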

\begin{proof}
This corollary follows from inserting $S^{-1}S$ between neighboring operators and between the first (last) operator and the initial (final) state on the LHS of Eq. \ref{eq: diagonal temporal correlation}, performing the similarity transformation, and noting that $\<0|S^{-1}=\<\tilde{1}|$, $SO_{i}S^{-1}=O_{i}$ since $O_{i}$ is diagonal, $\tilde{H}=-SH|_{\text{span}(\mathcal{A})}S^{-1}$, and $S|0\>=|\tilde{0}\>$.
\end{proof}

As a special case of the dynamical correspondence, we have the following static correspondence.
\begin{corollary}
\label{Cor: scaling dimension}
If the frustration-free Hamiltonian is gapless, the scaling dimension of (non-hidden) local diagonal operators equals the scaling dimension of the same operator in the statistical mechanics system with probability distribution $|\tilde{0}\>$.
\end{corollary}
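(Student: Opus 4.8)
The plan is to recognize this statement as the equal-time ($\tau\to 0$) specialization of the dynamical correspondence already established in Corollary~\ref{Cor: diagonal temporal correlation function}, and then to match the equal-time scaling forms of Sec.~\ref{sec:scalingforms} on the quantum and classical sides. First I would specialize Eq.~\ref{eq: diagonal temporal correlation} to two diagonal operators $\mathcal{O}_1(\mathbf{r})$, $\mathcal{O}_2(0)$ placed at distinct spatial points but equal time. Since the conditional probability $p(\a',\tau\,|\,\a,0)\to\delta_{\a',\a}$ as $\tau\to 0$, the sum collapses and one obtains
\begin{align}
\langle\mathrm{GS}|\mathcal{O}_1(\mathbf{r})\mathcal{O}_2(0)|\mathrm{GS}\rangle
&= \sum_{\a} p^0(\a)\,\mathcal{O}_1(\a)\,\mathcal{O}_2(\a) \notag \\
&= \langle \mathcal{O}_1(\mathbf{r})\mathcal{O}_2(0)\rangle_\mathrm{cl},
\end{align}
the static two-point function in the equilibrium ensemble $|\tilde{0}\rangle$, whose weights are exactly $p^0(\a)=|\Psi(\a)|^2$. (Equivalently, because $S$ is diagonal it commutes with the diagonal operators, and $S|\mathrm{GS}\rangle=|\tilde{0}\rangle$, $\langle\mathrm{GS}|S^{-1}=\langle\tilde{1}|$, so the similarity transformation acts trivially and the equal-time correlator is manifestly classical.)

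Second, I would compare this identity with the defining scaling forms. In the gapless quantum theory the spatial scaling dimension $x_\mathcal{O}$ is defined by the equal-time decay $\langle\mathcal{O}(\mathbf{r},0)\mathcal{O}(0,0)\rangle\sim A_\mathcal{O}\,r^{-2x_\mathcal{O}}$ (Eq.~\ref{eq:equaltimescalingform}), while the classical scaling dimension is defined by the same power law for the static correlator. Because the two correlators are \emph{identically} the same function of $\mathbf{r}$, their power-law exponents must agree. Gaplessness enters precisely here: it guarantees scale invariance of the ground-state ensemble, so that both correlators are genuine power laws and scaling dimensions are well defined.

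The remaining work is interpretive rather than computational, and this is where the subtleties lie. A lattice diagonal operator is generically a sum of several continuum scaling operators; the equality of correlators matches the full expansions, but the equal-time correlator is controlled by the operator of \emph{smallest} dimension whose amplitude $A_\mathcal{O}$ does not vanish, so the argument pins down the dimension of the leading \emph{non-hidden} scaling operator appearing in $\mathcal{O}$ --- hence the qualifier in the statement. I expect the main obstacle to a fully clean version to be the bookkeeping of hidden operators: a quantum operator may contain hidden scaling pieces (Sec.~\ref{sec:ffscalingoperators}) whose equal-time two-point functions vanish and which therefore leave no imprint on the classical static correlator, so the correspondence can only be asserted for the non-hidden content. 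Conversely one must check that no classical scaling operator is ``lost'' on the quantum side; this follows because the map between diagonal operators and classical observables is a bijection and the static correlators coincide term by term, but stating it cleanly requires the RG action on operators of a given dimension to be diagonalizable (no Jordan blocks, hence no logarithms), which we assume as elsewhere in the paper.
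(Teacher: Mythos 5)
Your proposal is correct and follows essentially the same route as the paper: the paper presents this corollary simply as the static (equal-time) special case of Corollary~\ref{Cor: diagonal temporal correlation function}, exactly as you do, with gaplessness guaranteeing power-law correlators so that the identical quantum and classical equal-time correlators force equal scaling dimensions. Your discussion of the hidden-operator caveat matches the paper's reason for the ``(non-hidden)'' qualifier, since hidden operators have vanishing equal-time correlators and thus leave no imprint on the classical side.
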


At first glance one might think that  Corollary \ref{Cor: diagonal temporal correlation function} and \ref{Cor: scaling dimension} are special for diagonal operators --- after all, off-diagonal operators mean nothing in the classical equilibrium ensemble. However, off-diagonal quantum operators also have classical counterparts in the quantum-Markovian correspondence. 
Remarkably, if we restrict to two point functions, the off-diagonal operators may be replaced with diagonal operators on the classical side, though local operators may turn into nonlocal ones.
(This correspondence is different from the correspondence between reconnection operators and defect operators in the Coulomb gas formalism we discussed in Sec.~\ref{Sec: operators and correlation functions}, which works only for equal-time correlation functions and is special to the loop models.)

\begin{corollary}
\label{Cor: off-diagonal operator, correspondence}
Every local Hermitian operator $O_{q}$ on $\text{span}(\mathcal{A})$ corresponds to a diagonal classical operator $O_{c}$, such that the 2-point function of operators $O_{1,q}$ and $O_{2,q}$ equals the 2-point function of the diagonal classical operators $O^*_{1,c}$ and $O_{2,c}$:
\ba
\<0|O_{1,q}e^{-H\tau}O_{2,q}|0\> = \<\tilde{1}|O_{1,c}^*e^{\tilde{H}\tau}O_{2,c}|\tilde{0}\>\nonumber & \\ 
= \sum_{\a,\a'}O_{1,c}^*[\a']O_{2,c}[\a]p(\a',\tau|\a,0)p^0(\a) &
\end{align}
\end{corollary}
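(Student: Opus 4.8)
The plan is to reduce the off-diagonal two-point function to a classical correlator by the same similarity transformation $S$ used in the proof of Theorem~\ref{Thm: quantum-classical mapping}, and then to absorb the non-diagonal structure of each operator into a \emph{diagonal} classical observable by exploiting the two special boundary vectors $\langle\tilde 1|$ and $|\tilde 0\rangle$. Concretely, I would first insert $S^{-1}S$ between every factor in $\langle 0|O_{1,q}e^{-H\tau}O_{2,q}|0\rangle$ and use the three identities established for the mapping, namely $S|0\rangle=|\tilde 0\rangle$, $\langle 0|S^{-1}=\langle\tilde 1|$, and $Se^{-H\tau}S^{-1}=e^{\tilde H\tau}$ (the last because $\tilde H=-SH|_{\text{span}(\mathcal{A})}S^{-1}$), to obtain $\langle\tilde 1|\tilde O_{1}\,e^{\tilde H\tau}\,\tilde O_{2}|\tilde 0\rangle$ with $\tilde O_{i}\equiv SO_{i,q}S^{-1}$. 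At this stage $\tilde O_i$ is generally non-diagonal, so the remaining work is to show that, at the two ends of the correlator, it acts like a diagonal matrix.

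Next I would introduce the candidate diagonal classical operator. For a local quantum operator $O_q$ define, on $\text{span}(\mathcal{A})$ where all amplitudes are nonzero, $O_c[\alpha]\equiv \langle\alpha|O_q|0\rangle/\langle\alpha|0\rangle$. The heart of the proof is the pair of one-sided identities $\tilde O_q|\tilde 0\rangle=O_c|\tilde 0\rangle$ and $\langle\tilde 1|\tilde O_q=\langle\tilde 1|O_c^{*}$. The first follows immediately from $\tilde O_q|\tilde 0\rangle=SO_q|0\rangle$, whose $\alpha$-component is $\psi_\alpha^{*}\langle\alpha|O_q|0\rangle=O_c[\alpha]\,|\psi_\alpha|^{2}=O_c[\alpha]\,(|\tilde 0\rangle)_\alpha$. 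The second is where Hermiticity enters: writing $\langle\tilde 1|\tilde O_q=\langle 0|O_qS^{-1}$ and using $(O_q)_{\beta\alpha}=\overline{(O_q)_{\alpha\beta}}$ shows the $\alpha$-component equals $\overline{O_c[\alpha]}$, i.e. it is $\langle\tilde 1|O_c^{*}$. Applying these with $O_2$ acting on the right boundary $|\tilde 0\rangle$ and $O_1$ against the left boundary $\langle\tilde 1|$ turns the expression into $\langle\tilde 1|O_{1,c}^{*}e^{\tilde H\tau}O_{2,c}|\tilde 0\rangle$, which is the first claimed equality.

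Finally I would expand in the classical basis by inserting complete sets of states, identify $\langle\alpha'|e^{\tilde H\tau}|\alpha\rangle$ with the conditional probability $p(\alpha',\tau|\alpha,0)$ exactly as in Corollary~\ref{Cor: diagonal temporal correlation function}, and use $(|\tilde 0\rangle)_\alpha=p^0(\alpha)$ to obtain the probabilistic form $\sum_{\alpha,\alpha'}O_{1,c}^{*}[\alpha']O_{2,c}[\alpha]\,p(\alpha',\tau|\alpha,0)\,p^0(\alpha)$. I would also remark that $O_c$ need not be local: for an operator that reconnects strands, $\langle\alpha|O_q|0\rangle/\langle\alpha|0\rangle$ involves ratios $d^{|\beta|-|\alpha|}$ whose exponent counts a change in loop number that depends on the global connectivity of the strands, so locality is generally lost, whereas for loop creation or deformation it is retained.

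I expect the main obstacle to be conceptual rather than computational: making explicit why the construction is tied to \emph{two}-point functions and to \emph{Hermiticity}. The reduction works precisely because each operator touches exactly one of the distinguished boundary vectors ($|\tilde 0\rangle$ on the right, $\langle\tilde 1|$ on the left), and Hermiticity is what forces the left-boundary reduction to produce the complex conjugate $O_c^{*}$ of the \emph{same} diagonal operator that the right-boundary reduction produces. For a three- or higher-point function a middle operator is flanked by evolution factors on both sides and touches neither boundary, so no diagonal replacement exists; delimiting this range of validity is the subtle part of the argument.
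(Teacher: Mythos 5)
Your proposal is correct, and its engine is the same as the paper's: conjugate by the similarity transformation $S$, then show that the transformed operator acts diagonally against the two distinguished boundary vectors, i.e. $\tilde O_q|\tilde 0\rangle=O_c|\tilde 0\rangle$ and $\langle\tilde 1|\tilde O_q=\langle\tilde 1|O_c^{*}$, with Hermiticity entering exactly where you say it does (the left-boundary identity). Where you genuinely differ is the construction of $O_c$. The paper expands an arbitrary local Hermitian operator into elementary blocks $c_{\alpha\beta}|\alpha\rangle\langle\beta|+c_{\alpha\beta}^{*}|\beta\rangle\langle\alpha|$ (plus diagonal terms), exhibits $O_c$ for each block as an explicit diagonal matrix with entries $c_{\alpha\beta}\psi_{\beta\gamma}/\psi_{\alpha\gamma}$ and $c_{\alpha\beta}^{*}\psi_{\alpha\gamma}/\psi_{\beta\gamma}$ in each exterior sector $\gamma$, checks the two one-sided identities block by block, and finishes by (real-)linearity. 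You instead define $O_c[\alpha]=\langle\alpha|O_q|0\rangle/\langle\alpha|0\rangle$ once and for all and verify both identities for a general Hermitian $O_q$ in a short direct computation. The two constructions agree: your formula, evaluated on the paper's elementary block, reproduces its matrix entries, and it gives $O_c=O_q$ for diagonal operators, recovering Corollary~\ref{Cor: diagonal temporal correlation function}. So your route is a genuine streamlining — it removes the case analysis and the final appeal to linearity, and it makes transparent both why Hermiticity is indispensable (it forces the left boundary to produce the conjugate of the \emph{same} diagonal observable) and why $O_c$ can be nonlocal (the ratio of amplitudes involves a loop-number difference that depends on global connectivity), which the paper conveys only through the explicit block ratios $\psi_{\beta\gamma}/\psi_{\alpha\gamma}$. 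Your closing remark delimiting the trick to two-point functions — a middle operator in a higher-point function is flanked by evolution factors and touches neither boundary vector — is also the correct explanation of a restriction that the paper states but does not justify.
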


\begin{proof}
Every local Hermitian operator on a local patch $x$ can be expanded as
\begin{align}
\left(\sum_{\a}r_{\a}|\a\>\<\a| + \sum_{\a\b}c_{\a\b}|\a\>\<\b| + c_{\a\b}^*|\b\>\<\a|\right)_{x}\otimes I_{\text{rest}},
\end{align}
($r_{\a}\in\mathbb{R}$,  $c_{\a\b}\in\mathbb{C}$) where $\a, \b$ run over all configurations on $x$. Corollary~\ref{Cor: off-diagonal operator, correspondence} reduces to Corollary~\ref{Cor: diagonal temporal correlation function} for local diagonal operators, so we may focus on the off-diagonal elements. For any given $\a$, $\b$, we may write operators on $\text{span}(|\a\>, |\b\>)$ as $2\times 2$ matrices. Consider the local operator ${O_{q}=(c_{\a\b}|\a\>\<\b| + c_{\a\b}^*|\b\>\<\a|)_{x}\otimes I_{rest}}$; it maps onto:
\bea
&\tilde{O}&_{q} = SO_{q}S^{-1}\nonumber\\
&=& \sum_{\g}
\left(\begin{array}{cc}
0 & c_{\a\b}\psi_{\a\g}^*/\psi_{\b\g}^*\\ c^*_{\a\b}\psi_{\b\g}^*/\psi_{\a\g}^* & 0
\end{array}\right)
\otimes|\g\>\<\g|,\ \ 
\eea
where $\g$ runs over all configurations outside $x$. Note that $\a$ and $\b$ may not be connected by any local projector; the ratio $\psi_{\b\g}^*/\psi_{\a\g}^*$, hence $\tilde{O}_{q}$, can be nonlocal.

Now define 
\bea
O_{c}\equiv \sum_{\g}
\left(\begin{array}{cc}
c_{\a\b}\psi_{\b\g}/\psi_{\a\g} & 0\\ 0 & c^*_{\a\b}\psi_{\a\g}/\psi_{\b\g}
\end{array}\right)
\otimes|\g\>\<\g|.\nonumber
\eea
We have
\onecolumngrid
\bea
(\tilde{O}_{q} - O_c)|\tilde{0}\> = \sum_{\g}
\left(\begin{array}{cc}
-c_{\a\b}\psi_{\b\g}/\psi_{\a\g} & c_{\a\b}\psi_{\a\g}^*/\psi_{\b\g}^*\\ c^*_{\a\b}\psi_{\b\g}^*/\psi_{\a\g}^* & -c^*_{\a\b}\psi_{\a\g}/\psi_{\b\g}
\end{array}\right)
\left(\begin{array}{c}
|\psi_{\a\g}|^{2}\\ |\psi_{\b\g}|^2
\end{array}\right)
\otimes|\g\> = 0
\eea

\bea
\<\tilde{1}|(\tilde{O}_{q} - O_c^*) = \sum_{\g}
(1,1)
\left(\begin{array}{cc}
-c_{\a\b}^*\psi_{\b\g}^*/\psi_{\a\g}^* & c_{\a\b}\psi_{\a\g}^*/\psi_{\b\g}^*\\ c^*_{\a\b}\psi_{\b\g}^*/\psi_{\a\g}^* & -c_{\a\b}\psi_{\a\g}^*/\psi_{\b\g}^*
\end{array}\right)
\otimes\<\g| = 0
\eea
Thus,
\bea
\<0|O_{1,q}e^{-H\tau}O_{2,q}|0\> = \<\tilde{1}|\tilde{O}_{1,q}e^{\tilde{H}\tau}\tilde{O}_{2,q}|\tilde{0}\> 
= \<\tilde{1}|O_{1,c}^*e^{\tilde{H}\tau}O_{2,c}|\tilde{0}\>.
\eea
\twocolumngrid
By linearity of the mapping from $O_q$ to $O_c$, the equation above holds for arbitrary local Hermitian operators. 
The key feature of the present class of Hamiltonians which makes the above mapping useful is that the the ratios $\psi_{\beta\gamma}/\psi_{\alpha\gamma}$ are known analytically, so the operator $O_c$ can be explicitly constructed.
\end{proof}

Corollaries \ref{Cor: diagonal temporal correlation function}, \ref{Cor: scaling dimension}, and \ref{Cor: off-diagonal operator, correspondence} play a key role in our numerical simulation. We use them to measure the dynamical exponent and to confirm our analytical results on the scaling dimensions of quantum operators. The mapping in Corollary \ref{Cor: off-diagonal operator, correspondence} may not always be useful analytically, as the corresponding classical operator can be highly non-local, but it is very convenient for Monte Carlo simulations, since
we can replace off-diagonal operators with diagonal ones when we compute temporal 2-point functions.

We have only discussed operators acting on $\text{span}(\mathcal{A})$. One may wonder what happens to operators that map $\text{span}(\mathcal{A})$ into $\text{span}(\mathcal{A})^{\perp}$. In the loop model we study, $\text{span}(\mathcal{A})$ is the subspace spanned by all closed-loop configurations, and operators that disrespect $\text{span}(\mathcal{A})$ create endpoints of strings. These operators all have exponentially decaying correlation functions, because states with end points have energy strictly above 2.

\section{Numerical methods} \label{Appendix: numerical methods}

In this appendix, we introduce the numerical method we use in simulating the loop model without reconnection. We use the quantum-Markovian correspondence discussed in the previous appendix to calculate (1) dynamical 2-point functions for diagonal and off-diagonal operators, (2) lifetime-area-length distribution of loops during the time evolution. The first type of measurement helps us understand the scaling structure of the quantum loop model; the second provides an intuitive understanding of the low-energy dynamics, and an independent confirmation of the dynamical exponent. 

The unusual feature of this simulation is that off-diagonal operators map to nonlocal classical observables whose values depend on whether two points are on the same loop. To calculate their correlators, we need to label every loop in a classical configuration, and keep updating not only the spins but also the loop labels during the Monte Carlo process. 
These labels also help us to measure the lifetime-area-length distribution of loops. We shall talk about how we implement the Monte Carlo update, label loops, and generate initial configurations. The numerical results are shown in figures in previous sections and discussions therein.

The quantum-Markovian correspondence reduces the numerical complexity of simulating the quantum dynamics from exponential in the system size to power-law in the system size. The reason is that we do not need to keep the exponentially large number of wavefunction amplitudes. Instead, we only keep one classical state during the update and we engineer the Markov process such that the probability each state appears is proportional to the square of the time-dependent wavefunction amplitude.

A particular realization of the classical Markov process is a series of spin-changing events. O($N$) such local events happen in the system in an O(1) time interval, where $N$ is the number of plaquettes. In our discrete-time simulation, we fix the number of update attempts in each unit time interval to be exactly $N$. This change is unimportant for large systems, since relative fluctuations in the number of updates per unit time are anyway small. Below, we describe the update process for $|d|^2>1$; the process for $|d|^2<1$ is very similar.  

For each update attempt, we choose a random plaquette on the honeycomb lattice, and update the spin configuration on the links surrounding this plaquette according to the following rule: (1) If there is one loop passing through the plaquette, we flip all spins around the plaquette (move the loop across the plaquette). (2) If there is no loop passing through the plaquette, we flip all spins around the plaquette (create a small loop). (3) If there is a small loop on the plaquette, we flip all spins around the plaquette (annihilate the small loop) with probability $1/|d|^2$. (4) If there are multiple loops passing through the plaquette, we keep the current spin configuration (no reconnection). With this normalization of the rates, the Markov process is related by the Markovian correspondence to the quantum Hamiltonian in  
Eqs.~\ref{eq:flippable},~\ref{Eq: H gapless loop}, with the modified ratio of couplings $K_1/K_2= |d|^2/(1+|d|^2)$ mentioned in that section (we do not expect this choice to affect the universal properties).

In addition, we label the loops in the initial configuration, and update the labels during the dynamics, in the following way. For the initial configuration, we label the loops from the top left to the bottom right by iteration. In the $n$th step, to label the $n$th loop, we find the first unlabeled down spin in the configuration, give it the label $n$, assign an arbitrary direction to the loop, and follow the loop in this direction in order to attach the label $n$ to all the down spins along the loop. At the end of the process, each edge occupied by a down spin has an additional integer label, $n$, specifying which loop it belows to, and a direction, which will be used in calculating the area. During the update, whenever we create a loop we give its edges a new integer label different from all existing loops and an arbitrary direction; when we annihilate a loop we delete the labels around the plaquette. When we move a loop, we label the new edges by the same integer on the existing loop, assign the direction consistent with the direction of the loop, and delete labels on the edges no longer on the loop. This update process is local and takes only $O(1)$ computational time. (If we have reconnection, we have to relabel all spins along the loop. Even through the reconnection is local, the relabeling takes $O(l)$ time, where $l$ is the length of the loop.)

When calculating the lifetime-area-length distribution, we first compute the lengths and areas of loops in the initial condition, and then update the lengths and areas during the Monte Carlo process. We calculate the area by the discrete version of the integral $\int ydx$ along the loop: the result can be either positive or negative, with the sign determined by the orientation assigned to the loop. With this definition, computing the change of area in an update to the loop is a local process which takes $O(1)$ computational time. When a loop is annihilated, we take the current time as its lifetime; we only  consider loops present in the initial state. At the end, we make the scatter plot with the lifetime of each loop and the maximum area (or average area, initial area, maximum length, etc.) of the loop during the Monte Carlo process (see Fig.~\ref{Fig: area length life}(b)).

When calculating the correlation functions of reconnection operators, for example $W_{2,0}$ in Eq.~\ref{Eq:W20definition}, we first map the operator to a nonlocal classical operator $\tilde{O}$ via the general rule in Corollary~\ref{Cor: off-diagonal operator, correspondence} above, record the value of this classical operator at each spacetime point we are interested in, and then calculate the correlator in the usual way. 
Up to a normalization constant, the reconnection operator $W_{2,0}$ (defined for simplicity on a single hexagonal plaquette)
is mapped to the diagonal operator on a plaquette taking the values: 
$|d|^2$ if there are two strands visiting the plaquette, which belong to the same loop; 
$-1$ if the two strands belong to different loops;
and 0 if there are less than or more than two strands visiting the plaquette. 
We check whether they belong to the same loop by comparing their labels, which takes only O(1) time.

At $d=1$, the 2-point function of $W_{2,0}$ vanishes for any spacetime separation, and its classical counterpart $\tilde{O}$ vanishes identically. However, $\tilde{O}/(d-1)$ is well-defined in the limit $d\rightarrow 1$. We calculate the 2-point function of this operator to see whether the critical exponent of $W_{2,0}$ behaves as we predicted for $d$ close to 1.

\begin{figure}[t]
\begin{center}
\includegraphics[width=3in]{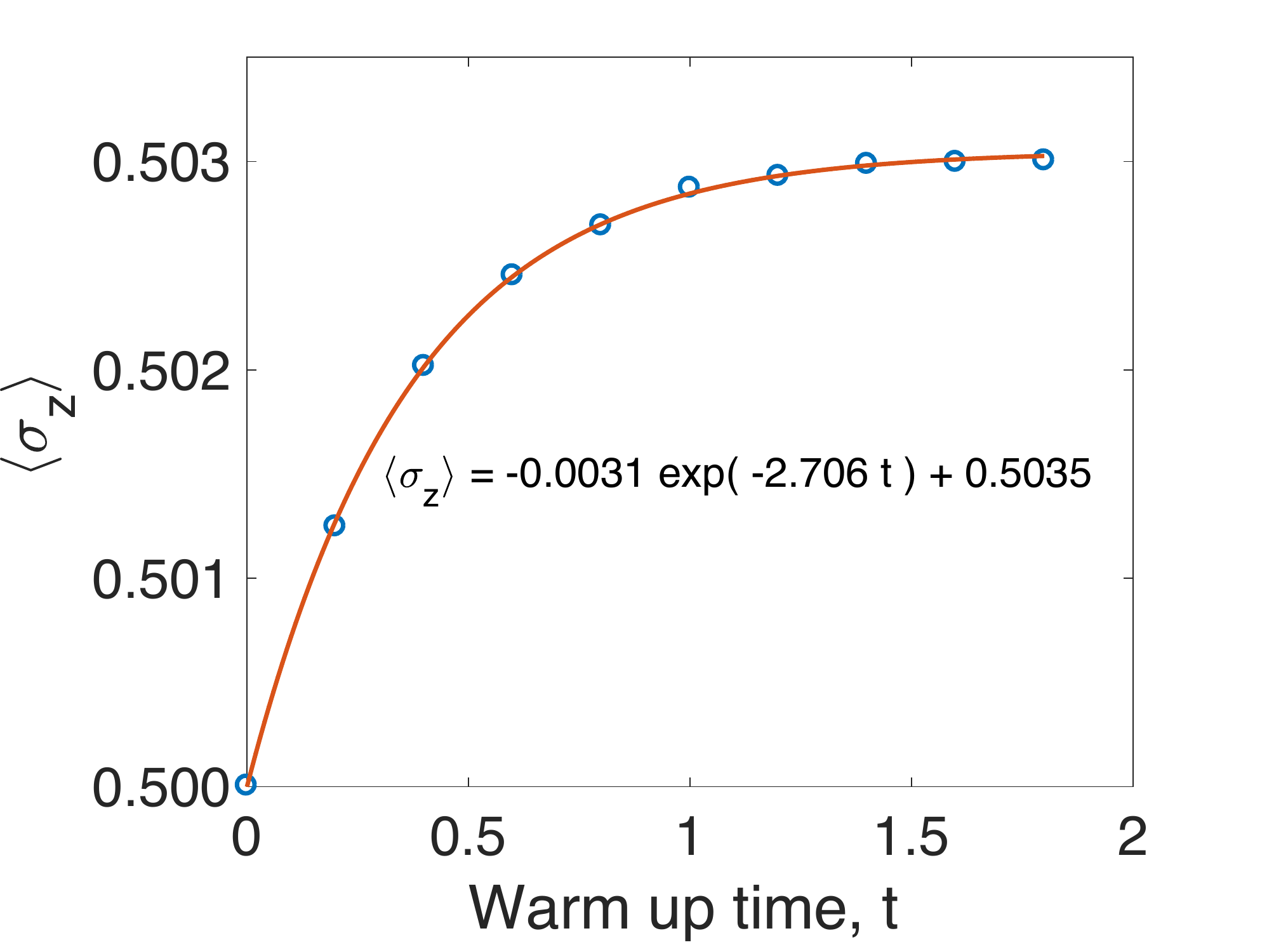}
\caption{Equilibration of the spatially averaged $\s^z$ during the warm-up (${d=\sqrt{2}}$). Warm-up time 1 means $L^2$ random plaquette update attempts (these can include loop reconnections, as described in the text). Results here are for system size $L = 500$.}
\label{Fig: Warm up, magnetization}
\end{center}
\end{figure}

Last, we describe the way we generate the initial loop configurations for arbitrary $|d|^2$. The simplest case is $|d|^2=1$. We first generate a random $\tau^z$ on each plaquette  (spin up or spin down with equal probability) and then calculate $\sigma^z$ as the product of the two neighboring $\tau^z$. This process produces the uniform probability distribution of all closed-loop configurations. We may use either open boundary conditions, periodic boundary conditions or anti-periodic boundary conditions for $\tau^z$. For $|d|^2\neq 1$, we start with the uniform distribution and run a Monte Carlo process to get the desired probability distribution: $p(C)\propto |d|^{2|C|}$. One choice is to use the same Monte Carlo process corresponding to the quantum Hamiltonian. However, its dynamical exponent is $z=3.00(6)$, so the warm up time\footnote{{I.e. the time $\tau$ of the physical Markov process; the computational time has an extra factor of $L^2$ since one unit of physical time requires $N$ updates.}}
is about $O(L^3)$. Instead, we use a nonlocal warm up process, which includes reconnection of loops, to greatly reduce the warm up time. The subtlety is that the nontrivial loop weight $|d|^2$ makes the acceptance probability for a reconnection event nonlocal, since it 
depends on whether the update increases or reduces the number of loops. Again, we determine this by comparing the labels of the strands involved in the reconnection. After each reconnection event, we must relabel loops. But relabeling takes at most $O(L^{d_f})$ computational time (and much less on average), which is much smaller than $L^3$. 
Numerical results in Fig.~\ref{Fig: Warm up, magnetization}
for $d=\sqrt{2}$ show that after including this nonlocal reconnection process, the `physical' warm up time on the torus with $500\times 500$ plaquettes is smaller than 1 (which corresponds to of order $500\times 500$ updates).

As an aside, it is interesting to ask about the scaling of this warm-up time with $L$, i.e. the dynamical exponent of the \textit{nonlocal} dynamics that includes reconnection.
At $|d|^2=1$ this is clearly $z_\text{reconnection}=0$, because for that value of the loop weight the update becomes equivalent to flipping uncorrelated $\tau^z$ spins, which relax after an order 1 time (again, 1 unit of time corresponds to $O(L^2)$ updates). 
We have not performed a scaling study, but the small warm-up times for other $d$ suggest that $z_\text{reconnection}$ may be equal to zero for all $d$ which means the dynamics is invariant under rescaling only the spatial coordinates.
Surprisingly, the nonlocal quantum Hamiltonian that maps to this kind of nonlocal dynamics has been studied numerically in Ref.~\cite{troyer2008local}:  a gap was reported, which implies a relaxation time of order 1 in the Markov process, consistent with the above.
This indicates that the dynamical exponent of the nonlocal Markov process shares the property of being independent of $d$ that we have argued for in the local case.

\section{Mathematical structure of equal-time correlators and topological types}\label{Appendix:topologicaloperatorclassification}

In this Appendix, we explore the rich mathematical structure behind 2-point functions of local operators, and prove the general result on the number of watermelon operators.

Recall that Eqs.~\ref{Eq:2ptfunctionfinalexpression}-\ref{Eq:OAOBgammadefinition} reduce every equal-time 2-point function to a sum of classical probabilities. The key to a general understanding of 2-point functions, and hence the spectrum of (non-hidden) operators, is the topological part of the correlator $\<O_AO_B\>_{\g}$ defined in Eq.~\ref{Eq:OAOBgammadefinition}. 
It is a function of the topological data of the two operators and of the connectivity $\gamma$ of end points by strands in the region exterior to the discs $A$ and $B$ that the operators act in. $\<O_AO_B\>_{\g}$ determines which combination of classical probabilities shows up in a given 2-point function.
We can easily compute this quantity for any specific $O_A$, $O_B$ and $\g$. 
Naively, in order to gain a full understanding, we have a formidable task to calculate this coefficient  for every combination of $O_A$, $O_B$ and $\g$, case-by-case. However, these coefficients can all be reduced to a kind of overlap matrix $M^k$.

The matrix $M^k$, with elements $M^k_{\a, \b}$, is defined for pairings (connections) $\a$, $\b$ with $2k$ end points. The index $\a$ labels the connection inside the disc, and $\b$ labels the connection outside the disc:
\bea 
M^k_{\a,\b}\equiv d^{[\a,\b]},
\eea
where $[\a,\b]$ is the total number of loops formed by connecting the strands in $\a$ and $\b$. In the context of the Temperley-Lieb algebra, this is the inner product between states, so is well studied \cite{freedman2004class}.

\begin{figure}[b]
\begin{center}
\includegraphics[width=0.45\textwidth]{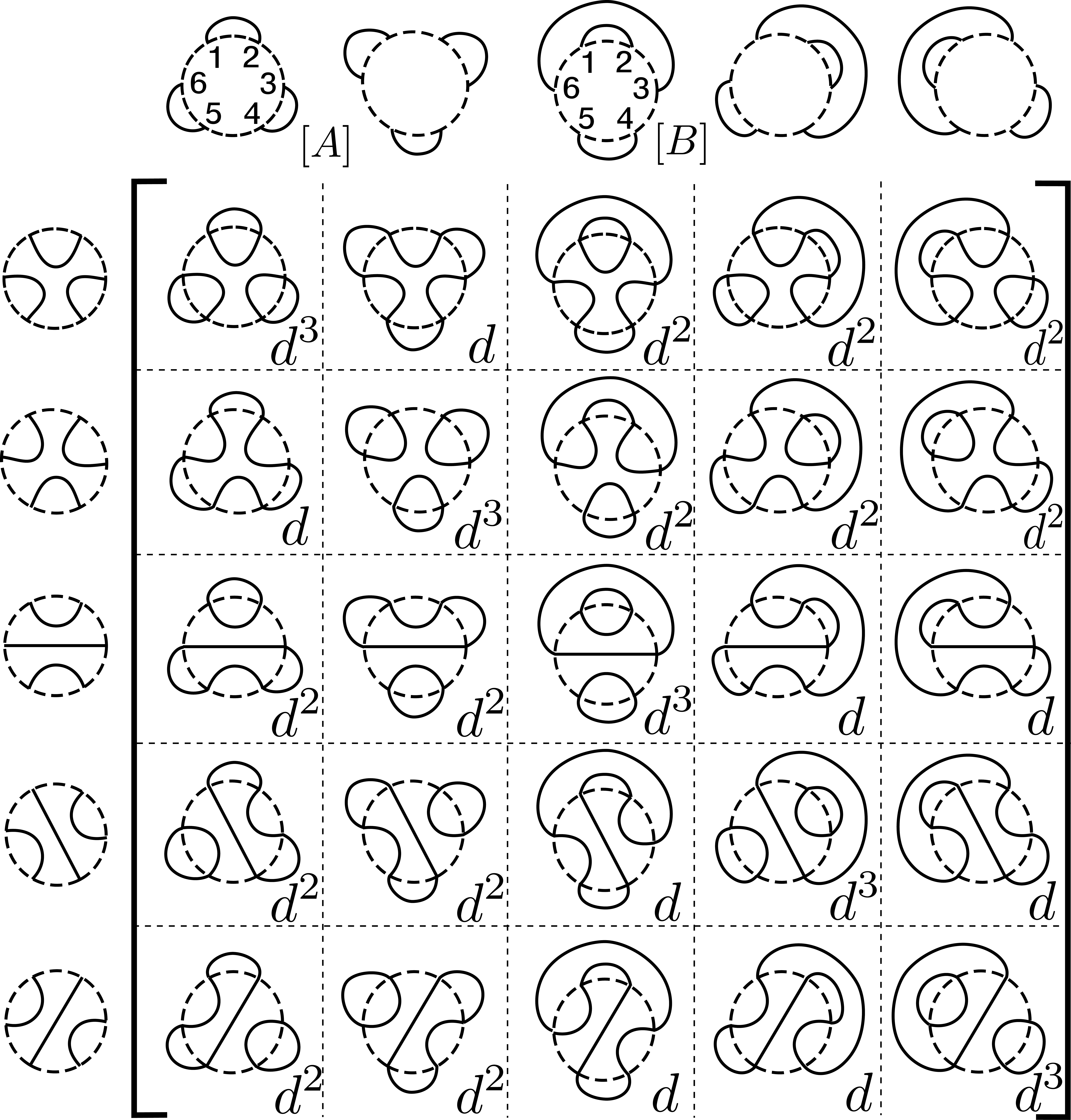}
\caption{$M^{3}_{\a,\a'}$}
\label{Fig: M matrix}
\end{center}
\end{figure}
For example, Fig.~\ref{Fig: M matrix} shows the matrix $M^3$. There are five different choices of connection inside and outside the disc. We can calculate the determinant of $M^3$ directly and see that it is generically invertible, except for $d=\pm1,\pm\sqrt{2}$.
More generally, the determinant of $M^k$ vanishes if and only if $d=\cos(p\pi/(q+1)),q\le k$; see e.g. Ref~\cite{freedman2004class} which discusses the Temperley Lieb algebra in the context of the quantum loop models. We shall discuss these special values $d=\pm\sqrt{2}$ at the end of this Appendix. Now we focus on generic $d$, when all $M^k$ are invertible.

To derive the relation between $\<O_AO_B\>_\g$ and $M^k$, we manipulate $\<O_AO_B\>_\g$ in the following way. We treat $\g$ as a map that maps connections $\a_B$ inside disc $B$ into connections outside disc $A$. This mapping can be understood as simply removing the boundary of disc $B$ and keeping only the strands that connect to end points on the boundary of disc $A$ (for an example, see Fig.~\ref{Fig:alphaGammaConnection}). We denote the action of this map as $\g\circ\a_B$. This map naturally extends to a map from operators on disc $B$ to operators acting on the linear space of connections outside disc $A$, which we label by $\b_A$:
\bea
\g(O_B)_{\b'_A,\b_A}\equiv \sum O_{\a'_B,\a_B}\bar{d}^{[\g\a'_B]}d^{[\g\a_B]},
\eea 
where $[\g\a_B]$ is the number of complete loops formed by strands in $\g$ and strands in $\a_B$ (these loops are completely outside disc $A$), and the summation runs over the preimage of $\b_A$ (or $\b'_A$):
namely the set of all $\a_B$ (respectively $\a'_B$) s.t. $\g\circ\a_B=\b_A$  (respectively $\g\circ\a'_B=\b'_A$).
On the sphere, connections inside and outside the discs are in 1-to-1 correspondence, so we can also treat the resulting operator as an operator inside disc $A$.

By definition $[\a_A\g\a_B] = [\a_A,\g\circ\a_B] + [\g\a_B]$. Thus we can rewrite $\<O_AO_B\>\g$ as
\bea
\label{Eq: fundamental equation OAOBgamma, start}\<O_A&O&_B\>\g = \sum \bar{d}^{[\a'_A\g\a'_B]}d^{[\a_A\g\a_B]} O_{\a'_A,\a_A} O_{\a'_B,\a_B}\ \ \\
&=& \sum O_{\a'_A,\a_A}\sum\bar{d}^{[\a'_A,\b'_A]} d^{[\a_A,\b_A]} \g(O_B)_{\b'_A,\b_A}\ \ \\
\label{Eq: fundamental equation OAOBgamma, end}&=& \sum M^{k_A}_{\a_A,\b_A}\bar{M}^{k_A}_{\a'_A,\b'_A}O_{\a'_A,\a_A}\g(O_B)_{\b'_A,\b_A}
\eea
where $k_A$ is half of the number of end points on the boundary of disc $A$, $\bar{M}^{k_A}$ is the complex conjugate of $M^{k_A}$, and we have used the definition: $M^{k_A}_{\a_A,\b_A} = d^{[\a_A,\b_A]}$. To simplify our notation, we define a bilinear form on the space of operators with $2k_A$ end points as
\bea
\label{Eq: definition of inner product, appendix}\<O_A,O'_A\>\equiv M^{k_A}_{\a_A,\b_A}\bar{M}^{k_A}_{\a'_A,\b'_A}O_{\a'_A,\a_A}O'_{\b'_A,\b_A}.
\eea
With this definition, Eq.~\ref{Eq: fundamental equation OAOBgamma, start}-\ref{Eq: fundamental equation OAOBgamma, end} are expressed as
\bea
\<O_AO_B\>_\g = \<O_A,\g(O_B)\>.
\eea

Note that the linear space of operators with $2k_A$ end points, $LO_{k_A}$, is related to the linear space of all connections with $2k_A$ end points, $LC_{k_A}$, as $LO_{k_A} = LC^*_{k_A}\otimes LC_{k_A}$, where $LC^*_{k_A}$ denotes the dual space of $LC_{k_A}$. The bilinear form in Eq.~\ref{Eq: definition of inner product, appendix} defines a map $W^{k_A}$ from $LO_{k_A}$ to $LO^*_{k_A}$, whose matrix elements are $W^{k_A}_{\a_A\a'_A, \b_A\b'_A} \equiv M^{k_A}_{\a_A,\b_A}\bar{M}^{k_A}_{\a'_A,\b'_A}$. In compact notation $W^{k_A}=M^{k_A}\otimes\bar{M}^{k_A}$. The eigenvalues of $W^{k_A}$ are products of the eigenvalues of $M^{k_A}$ and $\bar{M}^{k_A}$; the eigenvectors of $W^{k_A}$ (in this case, eigen-operators) are the tensor products of the eigenvectors of $M^{k_A}$ and $\bar{M}^{k_A}$ (in this case, eigen-bras and eigen-kets). In particular, we have shown that $M^{k_A}$ is invertible for generic $d$, and this statement implies that $W^{k_A}$ is invertible for generic $d$. This result is important to the study of the operator spectrum.

Now we explore the physical meaning of the formal results above. First we want to count the number of independent watermelon operators. Recall that watermelon operators are defined in Eq.~\ref{Eq:watermelonOperator1}: (ignoring the spin index for a moment) an operator is defined as a watermelon operator if and only if its correlation function with any other operator vanishes when there is a self contact among its end points. Consider an operator $O_A$ that acts on disc $A$ with $2k_A$ end points, and an arbitrary operator acts on disc $B$ with $2k_B$ end points. Using the notation developed above, $O_A$ is a watermelon operator if and only if
\bea
\<O_A\g(O_B)\> = 0,\forall O_B,\nonumber\\ \forall\text{ $\g$ with self contact}.
\eea

Since the bilinear form is invertible for generic $d$, the number of linearly independent watermelon operators on $A$, $N_k$, is the dimension of $LO_{k_A}$ minus the dimension of the union of the images of all connections $\g$ with self-contact on disc $A$ (we denote this set by $SC_A$):
\bea
N_{k} = C^2_k - \text{dim}[\cup_{\g\in SC_A}\operatorname{Im}[\g]].
\eea
where 
\bea 
C_k = \frac{1}{k+1}\binom{2k}{k}
\eea
is the number of possible connections in a disc with $2k$ end points, which is known as the $k$th Catalan number. The seemingly complicated second term also has an intuitive definition. 
Since the strands of $\g$ are unchanged by any operator in $B$, if $\g$ has a self-contact on disc $A$, the end points enclosed by this self-contact must also connect to disc $A$.
These strands all become removable strands in $\g(O_B)$ (which are defined in Sec.~\ref{sec:topologicaloperatorclassification} as strands that are not reconnected and are not blocked by any strands being reconnected), no matter what $O_B$ we choose.
It is not hard to see that $\cup_{\g\in SC_A}\operatorname{Im}[\g]$ is the space of all operators on disc $A$ with $2k$ end points that have removable strands.
Thus, $N_k$, the number of watermelon operators with $2k$ end points, equals the number of $k$-loop reconnection operators.

For each $k$-loop reconnection operator, we can make it orthogonal to the space $\cup_{\g\in SC_A}\operatorname{Im}[\g]$ by subtracting an operator in that space. This is always possible when the bilinear form $W^k$ is invertible on the subspace $\cup_{\g\in SC_A}\operatorname{Im}[\g]$.  This is indeed true for generic $d$.\footnote{When $d\rightarrow\infty$, diagonal terms in $W^k$ dominate since they have the largest power of $d$; the inner product is proportional to a Kronecker delta function, hence invertible. In general, the determinant of $W^k$ restricted to this subspace is a polynomial of $d$. Since we have already shown that this polynomial is not identically zero, its zeros must be discrete.}
Thus each $k$-loop reconnection operator becomes a unique watermelon operator after adding operators that do not reconnect $k$ loops.

We can then organize these watermelon operators by their spins, as introduced in Sec.~\ref{sec:constructreconnectionoperators}. Since the topological type is just the $k$-loop reconnection operator up to rotation (Sec.~\ref{sec:topologicaloperatorclassification}), we can form exactly one spin-0 watermelon operator for each topological type. Their 2-point functions can be diagonalized for generic $d$. Thus we arrive at the conclusion that the number of degenerate scaling operators at dimension\footnote{$x_{2k}=x_{2k,0}$, see Eq.~\ref{Eq: x2k}  and Eq.~\ref{Eq:x2kl}.} $x_{2k}$
is just the number of distinct topological types for $k$-loop reconnection operators! 

At special $d$, when $M^k$ is not invertible, some watermelon operators become hidden operators. We would like to illustrate the general idea in the case of 3-loop reconnection operators at $d=\sqrt{2}$. In this case, $M^3$ annihilates the state $|\text{JW}_2\>$, hence $|\a_A\>\<\text{JW}_2|$ and $|\text{JW}_2\>\<\a_A|$ with any $\a_A$ are all annihilated by $W^3$ (the inner product). They have zero equal-time correlator with any other operator outside $A$. We say $O_A\sim O'_A$ if they differ by a hidden operator, as descibed in the main text. Using the definition of $|\text{JW}_2\>$ in Fig.~\ref{Fig: JW 3 loop}, we have the following relation
\bea
(\sqrt{2}(|\a_1\>+|\a_2\>) + |\b_1\>+|\b_2\>+|\b_3\>)\ \<\a_i| \sim 0,\nonumber\\ 
(\sqrt{2}(|\a_1\>+|\a_2\>) + |\b_1\>+|\b_2\>+|\b_3\>)\ \<\b_j| \sim 0,\nonumber\\
|\a_i\>\ (\sqrt{2}(\<\a_1|+\<\a_2|) + \<\b_1|+\<\b_2|+\<\b_3|) \sim 0,\nonumber\\ 
|\b_j\>\ (\sqrt{2}(\<\a_1|+\<\a_2|) + \<\b_1|+\<\b_2|+\<\b_3|) \sim 0\ \ 
\eea
These relations imply
\bea
|\a_1\>\<\a_2|\sim O(R_2),\\ 
|\b_j\>\<\b_{j+1}|\sim|\b_1\>\<\b_2| + O(R_2), \\ 
|\b_{j+1}\>\<\b_{j}|\sim -|\b_{1}\>\<\b_2| + O(R_2),\ \forall j,
\eea
where $O(R_2)$ stands for any operator that can at most reconnect 2 loops. These relations imply that all 3-loop reconnection operators except ${\sum_{j=1,2,3}|\b_j\>\<\b_{j+1}| - |\b_{j+1}\>\<\b_{j}|}$ are reduced to operators that do not reconnect 3 loops (up to hidden operators), which means the watermelon operators constructed from these operators are hidden operators. But they can still have nontrivial correlators in the time direction, as discussed in Sec.~\ref{sec:ffscalingoperators}.

\section{Ground state degeneracy on the torus} \label{Appendix: GSD}

In this appendix, we explain the existence of JW projectors at special $d$, introduced in Sec.~\ref{sec:gsd} and Sec.~\ref{Sec: Jones Wenzl}. We focus on the cases $d=\pm\sqrt{2}$, and discuss the ground state degeneracy after adding the JW projectors. Results discussed here were first reported in Ref~\cite{freedman2004class} using the representation of the $SU(2)_k$ Kac-Moody algebra on loop states. We provide an alternative, self-contained explanation.

For generic $d$, there does not exist any local reconnection operator that annihilates the state ${|\text{GS}\>= \sum_C d^{|C|}|C\>}$ (for local operators here, we ignore small loops in the sense of Eq.~\ref{Eq:operatorEquivalence}). This has to do with the nonlocality of the wavefunction $\psi(C) = d^{|C|}$. For example, consider the wavefunction amplitude for 3-strand configurations inside a disc shown in Fig~\ref{Fig: M matrix}. On the sphere, there are 5 possible connections inside and outside the disc. The wavefunction amplitude for states inside the disc depends on the total number of loops, hence on the connections outside the disc.  In the matrix $M^3$ shown in Fig.~\ref{Fig: M matrix}, each column, labeled by the connection $\b$ outside the disc, is proportional to the wavefunction for the degrees of freedom inside the disc that we get from the ground state if we fix the configuration outside to be $C_{\b}$, with connection $\b$:
\bea 
\<\text{GS}|\lf |\a\>\otimes|C_{\b}\> \ri = d^{|\a C_\b|} \propto d^{[\a,\b]}\equiv M^{k}_{\a,\b},
\eea
where $\a$ labels the connection inside the disc.
As described in Appendix~\ref{Appendix:topologicaloperatorclassification}, this matrix is invertible for generic $d$, which means the 5 column vectors form a complete basis of all 3-strand states: In order to annihilate the ground state, an operator must simultaneously annihilate all states inside the disc, hence must be identically zero.

\begin{figure}[t]
\begin{center}
\includegraphics[width=0.3\textwidth]{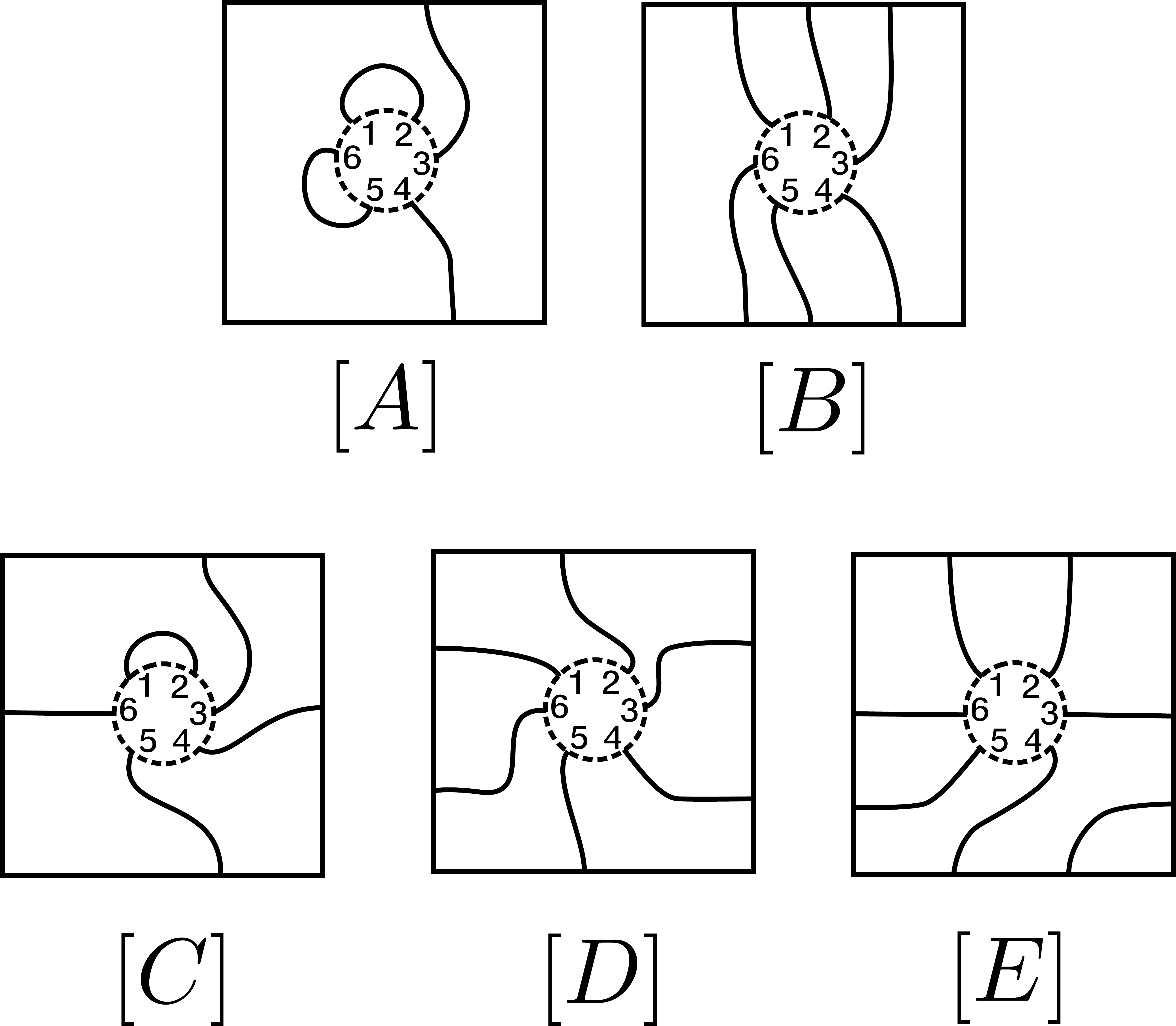}
\caption{JW on the torus}
\label{Fig:JW on torus}
\end{center}
\end{figure}

\begin{figure}[t]
\begin{center}
\includegraphics[width=0.45\textwidth]{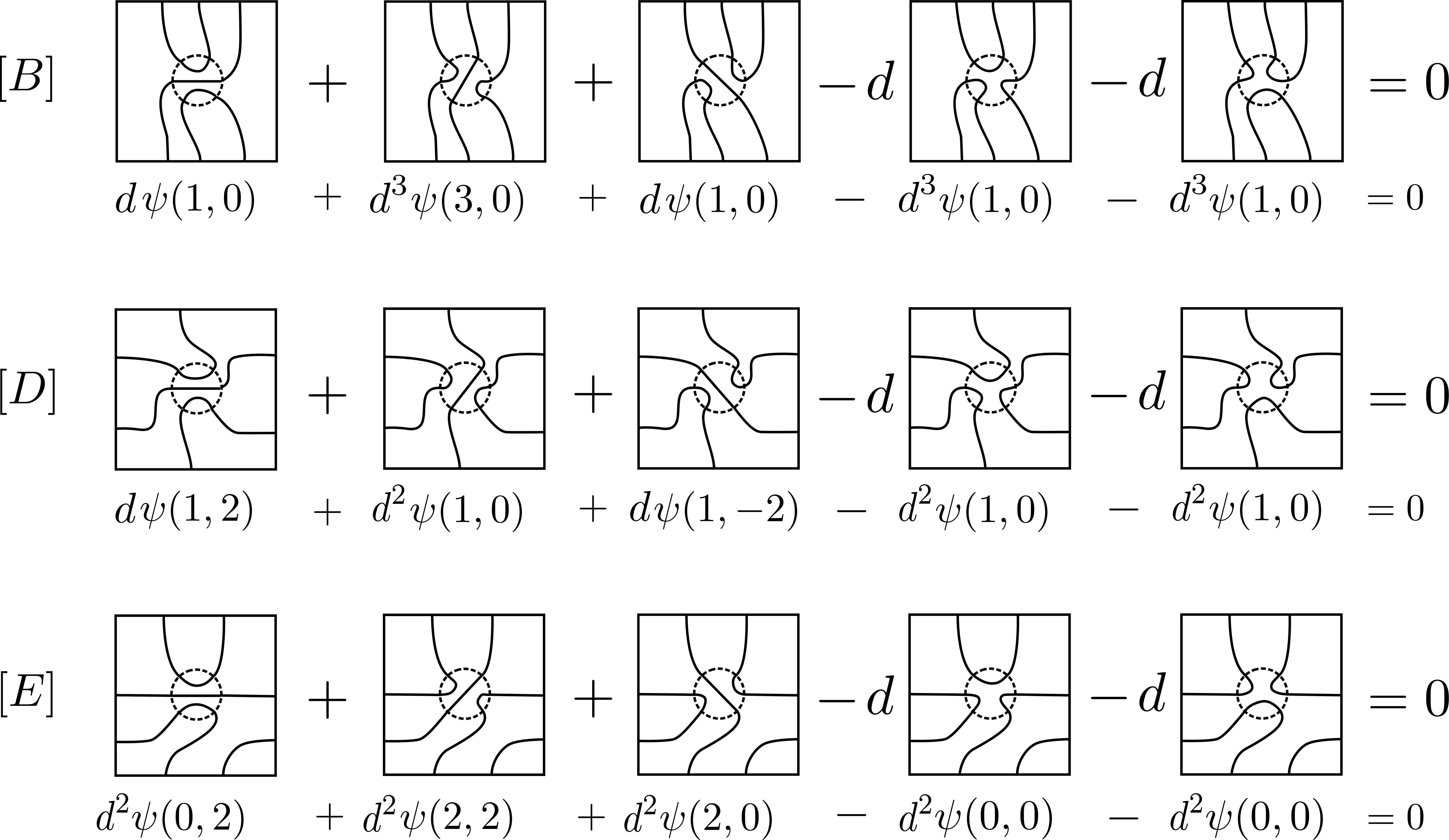}
\caption{JW constraints on the torus}
\label{Fig: JW constraints on torus}
\end{center}
\end{figure}

However, the determinant is zero for $d=0,\pm1,\pm\sqrt{2}$. For these special $d$, the 5 column vectors no longer span the total 5-dim Hilbert space, and there are \textit{hidden states} inside the disc that do not show up in the reduced density matrix of the ground state. The case $d=0$ is trivial. The cases $d=\pm1$ are explained in Sec.~\ref{sec:scalingforms}. At $d=\pm\sqrt{2}$, the hidden state is the state $|\text{JW}_2\>_{\pm}$ shown in Fig.~\ref{Fig: JW 3 loop}. It is easy to check that $|\text{JW}_2\>_{\pm}$ is orthogonal to all columns of $M^3_{\a,\a'}$. Because of this hidden state, there is a projector $|\text{JW}_2\>_{\pm}\<\text{JW}_2|_{\pm}$ that annihilates the ground state. We can add this projector to the Hamiltonian while preserving the ground state. More generally, at each $k\ge 2$, the matrix $M^{k}_{\a,\a'}$. is non-invertible if and only if $d=\cos(p\pi/(q+1)),q\le k$ (Ref~\cite{freedman2004class}). At these special weights, we have hidden states of $k$-strand configurations.

Now we focus on the case $k=3, d=\pm\sqrt{2}$, and study the ground states on the torus after adding the JW projector. On the torus, loop states can have various winding numbers. Without the JW projector, different winding sectors are degenerate. However, just like the 2-loop reconnection in the toric code, the 3-loop JW projector connects configurations in different sectors, and put additional constraints on the ground state wave function. On the torus, we can write the ground state wavefunction as $\psi(C)=\psi(n_x,n_y)d^{|C|}$, where $n_x$ and $n_y$ are the winding numbers. The second part $d^{|C|}$ is required by the loop creation/annihilation term, and is consistent with the corresponding JW projector when $d=\pm\sqrt{2}$. We want to solve the additional constraints on $\psi(n_x,n_y)$ to get the ground states, but a priori, it is not clear at all whether there are zero, finitely many, or infinitely many zero-energy states\footnote{If there is no zero-energy state on the torus, we must look for the ground states among the finite-energy states. Luckily, there are zero-energy states, and we do not need to solve the much harder problem.}. (We shall see later that the naive guess $\psi(n_x,n_y) = 1$ does not give a zero-energy state.) Unlike the cases $d=\pm1$, it is not a trivial task to write down all constraints on $\psi(n_x,n_y)$.

JW projectors connect different sectors, when the loops being reconnected wind non-trivially around the torus (See Fig~\ref{Fig:JW on torus} and Fig.~\ref{Fig: JW constraints on torus}). To find all constraints, we want to find all possible ways for the 3 loops to wind around the torus. The first thing to notice is that, there are more possible ways to connect the end points outside the disc on the torus than on the sphere. For the 3-loop JW projector, there are 6 end points on the boundary of the disc, and $15=5\times3$ ways to connect them (5 is the number of choices for the 1st point, 3 is the number of choices for the next unpaired point). Under cyclic permutation (roughly speaking, 60 degree rotation of the disc), these 15 ways fall into 5 equivalent classes, $[A],[B],[C],[D]$, and $[E]$, as shown in Fig.~\ref{Fig:JW on torus}. For example, there are 2 elements in class $[A]$: (12)(34)(56), and (16)(23)(45). 
However, only the 5 elements in class $A$ and class $B$ can be realized on the sphere without crossing. On the torus, we must consider 3 new classes, each of which has different realizations (in terms of winding numbers) on the torus.

To study all possible realizations of these classes on the torus, we need to clarify a few points about winding numbers on the torus. We first state 4 basic facts: (1) The winding number of each loop around the $x$ direction and the $y$ direction must be coprime, otherwise there must be a self-crossing. (2) If there are multiple loops that wind non-trivially, their winding numbers must be the same, otherwise there would be a crossing between them. (1) and (2) together state that the two winding numbers $n_x$ and $n_y$ uniquely determine the winding numbers of all nontrivial loops in a given configuration. The number of nontrivial loops must be the greatest common divisor (g.c.d.) of $n_x$ and $n_y$, and the winding number of each nontrivial loop must be the total winding number divided by this g.c.d.. (3) We can transform a nontrivial loop with arbitrary winding number to a nontrivial loop winding around the $y$ direction by modular transformations. Modular transformations on the torus are generated by $S$ ($(x,y)\rightarrow(y,-x)$) and $T$ ($(x,y)\rightarrow(x-y,y)$). (4) We need to use \textit{signed} winding numbers to distinguish the case $(n_x,n_y)=(1,1)$, with the case $(n_x,n_y)=(1,-1)$. We define $n_x$ and $n_y$ as follows: First, assign an arbitrary arrow to the nontrivial loops; the arrow should be the same for all nontrivial loops in the given configuration. Second, define $n_x$ as the number of strands passing through the horizontal line from below, minus the number of strands passing through the horizontal line from above. Define $n_y$ in a similar way in the other direction. Since this arrow is assigned arbitrarily, $(n_x,n_y)\simeq(-n_x,-n_y)$, but $(n_x,n_y)$ and $(n_x,-n_y)$ label distinct topological sectors.

With these preliminaries, we can write down all constraints on $\psi(n_x,n_y)$ from the JW projector. For each of the 5 classes, we enumerate cases where there are 1, 2 or 3 strands winding around the torus. In each case, use modular transformations to fix the winding number of each nontrivial strand. For each such realization, we can write down the constraint that the JW projector annihilates the the ground state
\bea 
|\text{JW}_2>_{\pm}\<\text{JW}_2|_\pm|\Psi\> = 0, \text{ for } d=\pm\sqrt{2}.
\eea 
In Fig~\ref{Fig: JW constraints on torus}, each square represents a wavefunction amplitude. It turns out that there are 3 nontrivial constraints up to modular transformations and reflection about the $x$ axis:
\bea
\label{Eq: JW constraints on torus, even, even}
\psi(2,2)+\psi(2,0)+\psi(0,2)-2\psi(0,0) = 0\\
\label{Eq: JW constraints on torus, odd, even}\psi(1,2) + \psi(1,-2) - d\psi(1,0) = 0\\
\label{Eq: JW constraints on torus, mulitplicity}\psi(0,3+n)=\psi(0,1+n),\ n\ge 0,
\eea
where $d=\pm\sqrt{2}$. The action of $S$ and $T$ and the reflection $R_x$ on wavefunction $\psi$ are defined such that
\bea
S\circ\psi(n_x,n_y) = \psi(n_y,-n_x),\\
T\circ\psi(n_x,n_y) = \psi(n_x,n_x+n_y),\\
R_x\circ\psi(n_x,n_y) = \psi(-n_x,n_y)
\eea
Modular transformations acting on the 3 nontrivial constraints generate infinitely many constraints on the infinitely many complex numbers $\psi(n_x,n_y)$. Now we want to understand and solve these constraints to find the ground states on the torus. Note that modular transformations do not change the g.c.d. of $n_x$ and $n_y$. $n_x$ and $n_y$ in Eq.~\ref{Eq: JW constraints on torus, even, even} has $\operatorname{g.c.d.}$ 2; $n_x$ and $n_y$ in Eq.~\ref{Eq: JW constraints on torus, odd, even} has g.c.d. 1. These two equations generate all constraints on $\psi(n_x,n_y)$ with $\operatorname{g.c.d.}(n_x,n_y) = 1,2$. On the other hand, after modular transformations, Eq.~\ref{Eq: JW constraints on torus, mulitplicity} simply says 
\bea
\psi(mn_x,mn_y) = \psi(n_x, n_y),\\ \forall n_x, n_y\in 2\mathbb{Z}, \forall m,\\
\text{ and }\psi(mn_x,mn_y) = \psi(n_x, n_y),\\ \forall\text{ odd }m,\forall (n_x,n_y) = 1.
\eea 
The parities of the winding numbers $n_x$ and $n_y$ are still good quantum numbers, we have four sectors: (even, even), (odd, odd), (even, odd) and (odd, even). The last 3 sectors are related by modular transformations, so we only need to consider 2 cases: (even, even) and (odd, odd). In the (even, even) sector, we need to consider only those $(n_x, n_y)$ with g.c.d. 2, and solve constraints generated by Eq.~\ref{Eq: JW constraints on torus, even, even}; in the (odd, odd) sector, we need to consider only those $(n_x, n_y)$ with g.c.d. 1, and solve constraints generated by Eq.~\ref{Eq: JW constraints on torus, odd, even}. All other amplitudes are fixed by Eq.~\ref{Eq: JW constraints on torus, mulitplicity}.

The degeneracy in the (even, even) sector has its own meaning: it is the total ground state degeneracy in the domain-wall interpretation of the loop model. This is because the spin on a plaquette always goes back to itself after traveling along the nontrivial cycles of the torus, and the number of domain walls encountered along the cycle must be even. 

Constraints generated by Eq.~\ref{Eq: JW constraints on torus, even, even} are
\bea
\psi(2m+2r, 2n+2s) + \psi(2m, 2n) + \psi(2r, 2s)\nonumber\\ = 2\psi(0,0),\ \forall
\left(\begin{array}{cc}
m & r\\
n & s
\end{array}\right)
\in SL_2(\mathbb{Z})\ 
\eea
A simple solution is ${\psi(0,0) = 1}$, ${\psi(2m,2n) = \frac{2}{3}}$, ${\forall (m,n)\neq(0,0)}$. For every other solution, we can always subtract this solution to make $\psi(0,0) = 0$. After playing with these equations, it is not hard to see that every amplitude reduces to the amplitudes $\psi(2,0), \psi(0,2)$, and $\psi(0,0)$. Thus, apart from the previous solution, we have at most 2 other solutions. By explicit construction, we found 2 other solutions
\bea
\psi(2m,2n) =
\left\{ \begin{array}{ll}
1 & (m,n) = \text{(odd,odd)}\\
\omega & (m,n)=\text{(even,odd)}\\
\omega^{-1} & (m,n) = \text{(odd,even)}
\end{array} \right.
\eea
where $\omega = e^{\pm2\pi i/3}$. These 2 solutions and the previous solution $\psi(0,0) = 1.\ \psi(2m,2n) = \frac{2}{3},\forall (m,n)\neq(0,0)$ are the only 3 linearly independent solutions. Thus the ground state degeneracy in the (even, even) sector is 3; the total ground state degeneracy in the domain-wall interpretation of the loop model is 3.

Next, we solve the ground states in the (odd, odd) sector. All constraints in this sector are generated by Eq.~\ref{Eq: JW constraints on torus, odd, even}, or equivalently by a modular transformation of Eq.~\ref{Eq: JW constraints on torus, odd, even}:
\bea
\label{Eq: JW constriant, odd, odd}\psi(1,-1) + \psi(1,3) = d\psi(1,1)
\eea
It is not hard to see that all amplitudes are fixed by $\psi(1,1)$ and $\psi(1,-1)$. Thus, there are at most 2 solutions in this sector. Indeed, we find 2 independent solutions $\psi_A$ and $\psi_B$: 
\bea
\psi_A(1,1) = 0, \psi_A(1,-1) = 1,\nonumber\\ 
\label{Eq: psi B, definition, odd, odd}\psi_B(1,1) = 1, \psi_B(1,-1) = 0.
\eea 
It is not easy to explicitly write down the expression for $\psi_A(n_x,n_y)$ and $\psi_B(n_x,n_y)$, but it is easy to write down the representation of the modular group on the 2-dim solution space. The representation of $S$ and $T^2$ is given by
\bea
\label{Eq: rep of modular group, odd, odd}\left(\begin{array}{c}
S\circ\psi_A\\
S\circ\psi_B
\end{array}\right) = 
\left(\begin{array}{cc}
0 & 1\\
1 & 0
\end{array}\right)
\left(\begin{array}{c}
\psi_A\\
\psi_B
\end{array}\right), \\
\left(\begin{array}{c}
T^2\circ\psi_A\\
T^2\circ\psi_B{}
\end{array}\right) = 
\left(\begin{array}{cc}
0 & -1\\
1 & d
\end{array}\right){}
\left(\begin{array}{c}
\psi_A\\{}
\psi_B
\end{array}\right)
\eea
We ignore the modular transformation $T$, since it does not preserve the (odd, odd) sector. Eq.~\ref{Eq: psi B, definition, odd, odd} and Eq.~\ref{Eq: rep of modular group, odd, odd} completely fixes all amplitudes $\psi(n_x,n_y)$ in the (odd, odd) sector. In fact, Eq.~\ref{Eq: psi B, definition, odd, odd} is just a choice of basis. Eq.~\ref{Eq: rep of modular group, odd, odd} is required by the choice of basis, and by the constraint in Eq.~\ref{Eq: JW constriant, odd, odd}. It is straightforward to check the consistency of Eq.~\ref{Eq: psi B, definition, odd, odd}, Eq.~\ref{Eq: rep of modular group, odd, odd}, and Eq.~\ref{Eq: JW constriant, odd, odd}. Thus, there are 2 ground states in the (odd, odd) sector, 2 ground states in the (odd, even) sector, and 2 ground states in the (even, odd) sector. The solutions in the last two sectors are the modular transformed $\psi_A$ and $\psi_B$. 

As a conclusion, we confirm the results that there are 9 ground states on the torus for the lattice Hamiltonian with JW projector. These 9 states originate from different windings of loop configurations, their degeneracy is not lifted by the JW projector, and can only be lifted by irrelevant operators within the universality class. In general, the splitting of these sub-spectrum states are smaller than the lowest excitation energy in the continuous spectrum by a power of the system size. In the case of $d=\pm\sqrt{2}$, there is a marginally irrelevant operator; therefore the splitting between these state may be only logarithmically smaller than that of excited states in the continuum spectrum.

\section{Dilute critical points}
\label{appendix:dilutecriticalpoint}

The critical value of the weight per unit length in the classical lattice model is \cite{nienhuis1982exact}
\be 
x_c = \lf 2+ (2-|d|^2)^{1/2}  \ri^{-1/2}.
\ee 
A single line of renormalization group fixed points governs both the dense and dilute universality classes, as illustrated schematically in Fig.~\ref{fig:densedilutecartoon}: the line of dense fixed points continues into the line of dilute fixed points at $|d|^2=2$. The critical exponents vary continuously along this line.

\begin{figure}[t]
\begin{center}
\includegraphics[width=0.3\textwidth]{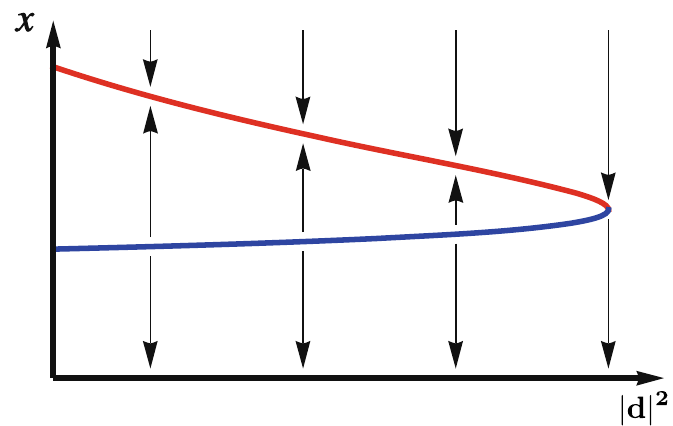}
\caption{RG fixed line governing the dense and dilute classical loop model. The dense fixed line (red) is stable as a classical ensemble, and the dilute fixed line unstable, with respect to varying $x$.}
\label{fig:densedilutecartoon}
\end{center}
\end{figure}

The quantum model may  be  generalized to nonzero $x$ by modifying the flip operator in Eq.~\ref{eq:flippable} and Eq.~\ref{Eq: H gapless loop}. 
The surface of `dense' quantum critical points then continues, at $|d|^2=2$, into another surface of `dilute' quantum critical points at $|d|^2<2$. 
The scaling dimensions in this phase follow from the results in the previous sections, if we replace the dense critical exponents with the known values in the dilute phase \cite{jacobsen2009conformal}. For example, the reconnection operators have larger scaling dimensions.

We can apply our bound on $z$, in terms of the fractal dimension (Sec.~\ref{sec:zbound}), anywhere on this enlarged critical surface.
The strongest bound is in the limit $d\rightarrow 0$ on the dilute branch, where $d_f \rightarrow 4/3$. In this limit the bound becomes $z\geq 2.6\dot 6$. 

By the superuniversality argued for in Sec.~\ref{sec:superuniversality} (assuming the surface of critical points in the lattice model corresponds in the usual way to a surface of RG fixed points) we can then apply this bound everywhere on the critical surface, including for the dense loops, as stated in the introduction.

\bibliographystyle{apsrev4-1}
\bibliography{loopref.bib}

\end{document}